\documentclass[11pt]{article}
\usepackage[margin=1in]{geometry}
\usepackage[utf8]{inputenc}
\usepackage{amsmath, amssymb, amsthm, braket, pxfonts, tikz, bbm, algpseudocode, mathrsfs, mathtools}

\DeclarePairedDelimiter\floor{\lfloor}{\rfloor}

\newcommand{\noise}{\gamma}

\newcommand{\ER}{Erd{\"o}s-R{\'e}nyi }

\newcommand{\Base}{B} 

\newcommand{\Inter}{G} 

\newcommand{\cH}{\mathcal{H}}
\newcommand{\cS}{\mathcal{S}}
\newcommand{\cG}{\mathcal{G}}
\newcommand{\cD}{\mathcal{D}}
\newcommand{\cE}{\mathcal{E}}

\newcommand{\cF}{\mathcal{F}}
\newcommand{\num}[2]{X_{#1}(#2)}

\newcommand{\dnull}{\cD_{\text{\tiny null}}}
\newcommand{\dstruct}{\cD_{\text{\tiny struct}}}

\newcommand{\eps}{\epsilon}
\newcommand{\aut}{\operatorname{aut}}
\renewcommand{\emptyset}{\varnothing}

\newcommand{\isH}[2]{\mathbf{1}_{#1 \supseteq #2}}

\newcommand{\density}{{(1-\delta)^{-1}}}

\usepackage{paralist}
\setlength{\pltopsep}{3pt}
\setlength{\plpartopsep}{3pt}
\setlength{\plitemsep}{1pt}
\setlength{\plparsep}{3pt}

\usepackage[hidelinks,colorlinks,linkcolor=blue]{hyperref}
\usepackage{amsthm}
\usepackage{thmtools,thm-restate}

\newtheorem{theorem}{Theorem}[section]
\newtheorem*{theorem*}{Theorem}

\newtheorem{proposition}[theorem]{Proposition}
\newtheorem*{proposition*}{Proposition}
\newtheorem{lemma}[theorem]{Lemma}
\newtheorem*{lemma*}{Lemma}

\newtheorem*{conjecture*}{Conjecture}
\newtheorem{fact}[theorem]{Fact}
\newtheorem*{fact*}{Fact}

\newtheorem*{hypothesis*}{Hypothesis}
\newtheorem{conjecture}[theorem]{Conjecture}

\theoremstyle{definition}
\newtheorem{definition}[theorem]{Definition}

\newtheorem{algorithm}[theorem]{Algorithm}
\newtheorem{problem}[theorem]{Problem}

\theoremstyle{remark}
\newtheorem{claim}[theorem]{Claim}
\newtheorem*{claim*}{Claim}

\newtheorem*{remark*}{Remark}
\newtheorem{observation}[theorem]{Observation}
\newtheorem*{observation*}{Observation}

\usepackage[capitalise,nameinlink]{cleveref}
\crefname{lemma}{Lemma}{Lemmas}
\crefname{fact}{Fact}{Facts}
\crefname{theorem}{Theorem}{Theorems}
\crefname{corollary}{Corollary}{Corollaries}
\crefname{conjecture}{Conjecture}{Conjectures}
\crefname{claim}{Claim}{Claims}
\crefname{example}{Example}{Examples}
\crefname{algorithm}{Algorithm}{Algorithms}
\crefname{problem}{Problem}{Problems}
\crefname{definition}{Definition}{Definitions}
\usepackage{paralist}

\newcommand{\ff}[2]{#1^{\underline{#2}}}

\newcommand{\cc}[1]{\operatorname{\mathcal{#1}}}

\newcommand{\bb}[1]{\operatorname{\mathbb{#1}}}

\DeclareMathOperator*{\E}{\mathbb{E}}
\newcommand{\Var}{\mathbb{V}}

\newcommand{\Ind}{\bb{I}}


\title{(Nearly) Efficient Algorithms for the Graph Matching Problem on Correlated Random Graphs}

\author{Boaz Barak\thanks{Harvard University. Supported by NSF awards CCF 1565264 and CNS 1618026, and the Simons Foundation. Emails: \texttt{b@boazbarak.org}, ~\texttt{chiningchou@g.harvard.edu}, ~\texttt{leizhixian.research@gmail.com}, ~\texttt{tselil@seas.harvard.edu}, ~\texttt{ysheng@g.harvard.edu}.} \and Chi-Ning Chou\footnotemark[1] \and Zhixian Lei\footnotemark[1]  \and Tselil Schramm\footnotemark[1] \and Yueqi Sheng\footnotemark[1]}

\begin{document}

\maketitle

\begin{abstract}
    We give a quasipolynomial time algorithm for the \textit{graph matching problem} (also known as \textit{noisy} or \textit{robust} graph isomorphism)  on correlated random graphs. 
Specifically, for every $\gamma>0$, we give a $n^{O(\log n)}$ time algorithm that given a pair of $\gamma$-correlated \ER graphs $G_0,G_1$ with average degree between $n^{o(1)}$ and $n^{1/153}$, recovers the "ground truth" permutation $\pi\in S_n$ that matches the vertices of $G_0$ to the vertices of $G_n$ in the way that minimizes the number of mismatched edges.
We also give a recovery algorithm for a denser regime, and a polynomial-time algorithm for distinguishing between correlated and uncorrelated graphs.

Prior work showed that recovery is information-theoretically possible in this model as long the average degree was at least $\log n$, but truly subexponential-time algorithms were only known for graphs with average degree $\Omega(n^2)$.
    In contrast our algorithms succeed for average degree as low as $n^{o(1)}$.
\end{abstract}
\thispagestyle{empty}
\clearpage
\setcounter{page}{1}

\section{Introduction}

The \textit{graph matching problem} is a well-studied computational problem in a great many areas of computer science.
Some examples include machine learning~\cite{cour2007balanced}, computer vision~\cite{cho2012progressive},  pattern recognition~\cite{berg2005shape}, computational biology~\cite{singh2008global,vogelstein2011large}, social network analysis~\cite{korula2014efficient}, and de-anonimzation~\cite{narayanan2009anonymizing}.\footnote{See the surveys~\cite{livi2013graph,conte2004thirty}, the latter of which is titled ``Thirty Years of Graph Matching in Pattern Recognition''.}
The graph matching problem is the task of computing, given a pair $(G_0,G_1)$ of $n$ vertex graphs, the permutation
\begin{equation}
\pi^* = \arg\min_{\pi \in \cS_n} \|G_0 - \pi(G_1)\|  \label{eq:matchingperm}
\end{equation}
where we identify the graphs with their adjacency matrices, and write $\pi(G_1)$ for the matrix obtained by permuting the rows and columns according to $\pi$ (i.e., the matrix $P^\top G_1 P$ where $P$ is the permutation matrix corresponding to $\pi$).

\subsection{The Correlated \ER model}

The graph matching problem  can be thought of as a noisy (and hence harder) variant of the  \textit{graph isomorphism} problem.
In fact, the graph matching problem is NP hard in the worst case.\footnote{If we allow weights and self-loops it is equivalent to the \emph{quadratic assignment problem}~\cite{lawler1963quadratic,Burkard1999}.} 
O'Donnell et al. also show that graph matching is hard to approximate assuming Feige's Random 3SAT hypothesis~\cite{o2014hardness}. 
Hence, much of the existing work is focused on practical heuristics or specific generative models.
In a 2011 paper, Pedarsani et al. \cite{pedarsani2011privacy} introduced the \emph{correlated \ER model} as a case study for a de-anonymization task. 
This is the $\bb{G}(n,p;\noise)$ model in which the pair $(G_0,G_1)$ is generated as follows:\footnote{Some works also studied a more general variant where $G_0$ and $G_1$ use different subsampling parameters $\noise_0,\noise_1$.}

\begin{compactitem}
\item We sample a ``base graph'' $\Base$ from the \ER distribution $\bb{G}(n,p)$.
\item We sample $\pi$ at random in $\cS_n$ (the set of permutations on the elements $[n]$).
\item We let $G_0$ be a randomly subsampled subgraph of $\Base$ obtained by including every edge of $\Base$ in $G_0$ with probability $\noise$ independently.
\item We let $G_1$ be an independently subsampled subgraph of $\pi(\Base)$ obtained by including every edge of $\pi(\Base)$ in $G_1$ with probability $\noise$ independently.
\end{compactitem}
Given $(G_0,G_1)$, our goal is to recover $\pi$.
Though initially introduced as a toy model for a specific application, the problem of recovering $\pi$ in $\bb{G}(n,p;\gamma)$ is a natural and well-motivated statistical inference problem, and it has since received a lot of attention in the information theory and statistics communities (c.f.,~\cite{yartseva2013performance, korula2014efficient,lyzinski2014seeded,kazemi2015growing,cullina2016improved,cullina2017exact,mossel2018seeded}).

Below we will use $\dstruct(n,p;\noise)$ (or $\dstruct$ for short, when the parameters are clear from the context) to denote the ``structured'' joint distribution above on triples $(G_0,G_1,\pi)$ of pairs of graphs and a permutation $\pi$ such that $G_1$ is a noisy version of $\pi(G_0)$.
One can see that the graphs $G_0$ and $G_1$ are individually distributed according to the \ER distribution $\bb{G}(n,p\noise)$, but there is significant correlation between $G_0$ and $G_1$.
It can be shown that as long as $p\noise^2 \gg \log n/n$, the permutation $\pi$ will be the one that minimizes the right-hand side of (\ref{eq:matchingperm}), and hence it is possible to recover $\pi$ information theoretically.
Indeed, Cullina and Kivayash~\cite{cullina2016improved,cullina2017exact}  precisely characterized the parameters $p,\noise$ for which information theoretic recovery is possible.
Specifically, they showed recovery is possible if $p\noise^2 > \tfrac{\log n + \omega(1)}{n}$ and impossible when $p\noise^2 < \tfrac{\log n - \omega(1)}{n}$.

However, none of these works have given \textit{efficient} algorithms.
Yartseva and Grossglauser~\cite{yartseva2013performance} analyzed a simple algorithm known as \emph{Percolation Graph Matching (PGM)}, which was used successfully by Narayanan  and Shmatikov~\cite{narayanan2009anonymizing} to de-anonymize many real-world networks.
(Similar algorithms were  also analyzed by \cite{korula2014efficient,kazemi2015growing,lyzinski2014seeded}.)
This algorithm starts with a "seed set" $S$ of vertices in $G_0$ that are mapped by $\pi$ to $G_1$, and for which the mapping $\pi|_S$ is given.
It propagates this information according to a simple percolation, until it recovers the original permutation.
Yartseva and Grossglauser gave precise characterization of the size of the seed set required as a function of $p$ and $\noise$~\cite{yartseva2013performance}.
Specifically, in the case that $\noise = \Omega(1)$ and $p=n^{-1+\delta}$ (where the expected degree of $G_0$ and $G_1$ is $\Theta(n^\delta)$), the size of the seed set required is $|S|=n^{1-\delta - \Theta(\delta^2)}$.
In the general setting when one is  \textit{not} given such a seed set, we would require about $n^{|S|}$ steps to obtain it by brute force, which yields an  $\exp(n^{\Omega(1)})$ time algorithm in this regime.
Lyzinski et al.~\cite{Lyzinskiconvexrisk} also gave negative results for popular convex relaxations for graph matching on random correlated graphs.

\paragraph{Subsequent work.} Subsequent to our work, Mossel and Xu \cite{mossel2018seeded} obtained new algorithms for the seeded setting based on a delicate analysis of local neighborhoods.
Notably, they achieve recovery at the information-theoretic threshold.
Though the exact dependence of the seed set size on the average degree is complicated to state, roughly speaking whenever $pn = n^{\delta}$ for $\frac{1}{\delta} \in \mathbb{N}$, their seed set has size $O(\log n)$, giving quasi-polynomial time algorithms.
However, when $\frac{1}{\delta} \not\in\mathbb{N}$, the seed set size is $n^{\Omega(1)}$ with the constant in the exponent depending on the difference of $\frac{1}{\delta}$ and $\lfloor \frac{1}{\delta}\rfloor$.

\begin{figure}
{\small
\center
\begin{tabular}{|c|c|c|c|}
\hline
{\bf Paper } & {\bf Algorithm} & {\bf Requirements} & {\bf Runtime, } $pn = n^{\delta}$\\
\hline
Cullina \& Kivayash&
info-theoretic & 
$p\gamma^2 n \ge \log n + \omega(1)$ &
$\exp(O(n))$\\
\cite{cullina2016improved,cullina2017exact} & &&\\
\hline
Yartseva \& Grossglauser&
percolation & 
$p\gamma^2 n = \Omega(\log n)$ &
$\exp(n^{1-\delta - \Theta(\delta^2)})$\\
\cite{yartseva2013performance} & 
&&\\
\hline
{\bf This paper} &
subgraph & 
$pn \in [n^{o(1)}, n^{1/153}] \cup [n^{2/3}, n^{1-\epsilon}]$ &
 \\
&matching &$\gamma \ge (1/\log n)^{o(1)}$&$n^{O(\log n)}$\\
\hline
Mossel \& Xu&
seeded local & 
$pn \ge \log n + \omega(1)$&
$n^{O(\log n)}$ if $\frac{1}{\delta} \in \mathbb{N}$, \\
\cite{mossel2018seeded} &statistics &$\gamma = \Theta(1)$ &else $\exp(n^{\Omega(1)})$ \\
\hline
\end{tabular}
\caption{A comparison of algorithms for recovery of the permutation in the correlated \ER model, when $(G_0,G_1,\pi) \sim \dstruct(n,p;\gamma)$.
}
}
\end{figure}

\subsection{Our results}

In this work we give quasipolynomial time algorithms for recovering the hidden permutation $\pi$ in the $\bb{G}(n,p;\noise)$ model for every constant (and even slightly sub-constant) $\noise$ and  a wide range of $p$.

\begin{theorem}[Recovery] \label{thm:recovery:intro}
    For every $\eps>0$ and $\gamma > 0$, if $pn \in [n^{o(1)},n^{1/153}]$ or $pn \in [n^{2/3},n^{1-\epsilon}]$, then there is a quasipolynomial-time randomized algorithm $A$ such that with high probability over $(G_0,G_1,\pi) \sim \dstruct(n,p;\noise)$ and over the choices of $A$, $A(G_0,G_1)=\pi$.
\end{theorem}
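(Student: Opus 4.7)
The plan is to design a pairwise similarity score $\phi : V(G_0) \times V(G_1) \to \mathbb{R}$ built from small-subgraph counts, show that $\pi(u) = \arg\max_v \phi(u,v)$ with very high probability simultaneously over all $u$, and then recover $\pi$ by a greedy or bipartite-matching post-processing step. The signature I would use fixes a carefully chosen family $\mathcal{F}$ of rooted graphs on $k = \Theta(\log n)$ vertices and defines
\[
\phi(u,v) \;=\; \sum_{H \in \mathcal{F}} \bigl( X_H(G_0,u) - \mathbb{E} X_H(G_0,u) \bigr)\,\bigl( X_H(G_1,v) - \mathbb{E} X_H(G_1,v) \bigr),
\]
where $X_H(G,w)$ counts rooted copies of $H$ in $G$ at $w$. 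Centering cancels the uncorrelated mass, leaving a statistic whose expectation on the truly matched pair is driven by those copies of $H$ in the base graph $B$ that survive subsampling into \emph{both} $G_0$ and $G_1$---an event of probability $\gamma^{2|E(H)|}$ per copy.

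Why this separates the truth from impostors: conditioned on $B$, the pair $(G_0,G_1)$ is obtained by two independent subsamplings, so for $v = \pi(u)$ the centered counts are positively correlated through their common copy in $B$, while for $v \neq \pi(u)$ the neighborhoods of $u$ in $G_0$ and of $v$ in $G_1$ are (approximately) independent. A covariance computation then yields $\mathbb{E}[\phi(u,\pi(u))] = \Theta\bigl(\sum_H \gamma^{2|E(H)|}\,\mathbb{E} X_H(B,u)\bigr)$, whereas $|\mathbb{E}[\phi(u,v)]|$ for $v \neq \pi(u)$ is controlled by weakly-dependent cross-terms that are polynomially smaller---provided $\mathcal{F}$ is chosen appropriately.

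Concentration is the main technical hurdle. Subgraph counts in sparse \ER graphs are notorious for being driven by rare dense sub-configurations, so $\mathcal{F}$ must be restricted to \emph{balanced} rooted subgraphs $H$ whose every subgraph $H'$ satisfies $p^{|E(H')|} n^{|V(H')|} \gg 1$, ruling out such dense windfalls. One then applies Kim--Vu or iterative second-moment bounds to the individual $X_H$'s and controls $\Var[\phi(u,v)]$ by a delicate accounting of overlaps between pairs of subgraph copies. The regime split $pn \in [n^{o(1)}, n^{1/153}] \cup [n^{2/3}, n^{1-\eps}]$ in the statement reflects which families $\mathcal{F}$ (tree-like or lightly-cycled structures in the sparse regime; denser graphs in the second regime) admit usable concentration inequalities, and the constant $1/153$ is almost certainly set by slack in this variance computation.

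Runtime and matching are then straightforward: since $|V(H)| = O(\log n)$ and $|\mathcal{F}| \le n^{O(\log n)}$, all counts $X_H(G,u)$ can be enumerated in $n^{O(\log n)}$ time. A union bound over the $n^2$ pairs $(u,v)$ upgrades per-pair concentration to the simultaneous guarantee $\pi(u) = \arg\max_v \phi(u,v)$ for every $u$, after which bipartite matching recovers $\pi$. I expect the hardest step to be tightening the concentration of $\phi$ enough to beat the $n^{-2}$ union-bound tolerance uniformly across a two-parameter range $(p,\gamma)$, especially near the boundary where the dominant ``signature'' subgraphs change shape and a naive choice of $\mathcal{F}$ has its signal swallowed by variance.
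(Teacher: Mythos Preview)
Your proposal shares the paper's starting point---use logarithmic-size subgraph statistics to recognize matched vertices---but diverges from it in the recovery mechanism, and the divergence matters. The paper does \emph{not} build a pairwise correlation score and argmax over it. Instead it uses each test graph $H\in\cH$ as a ``black swan'': $H$ is chosen so that $n^v p^e$ is subconstant, hence $H$ almost surely appears \emph{at most once} in the base graph $B$. If a copy of $H$ survives into both $G_0$ and $G_1$, then (since $H$ has no nontrivial automorphisms) that single occurrence directly pins down $\pi$ on the $v$ vertices of $H$---no correlation comparison, no concentration of counts is needed for the identification step. The role of the large family $\cH$ is then to guarantee, via a second-moment argument on $N_u=\#\{H\in\cH: u\in H,\ H\subset G_0\cap G_1\}$, that most vertices are covered by at least one such black swan. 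This yields only a \emph{partial} solution with an $o(1)$ error rate, which the paper then upgrades to exact recovery by a separate percolation-style boosting step (\cref{lem:boost}).

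The gap in your plan is the step ``union bound over $n^2$ pairs upgrades per-pair concentration to a simultaneous guarantee.'' To make $\phi(u,\pi(u))>\max_{v\neq\pi(u)}\phi(u,v)$ hold for all $u$ you need tail bounds of order $n^{-2}$ on a sum of centered subgraph counts of size $\Theta(\log n)$, and Kim--Vu or second-moment bounds will not deliver that in the sparse regime: the dominant variance contribution comes from overlaps on nearly the full graph $H$, and those terms are not polynomially small. The paper sidesteps this entirely by not asking for per-vertex concentration; it only needs $\Var(N_u)=o(\E[N_u]^2)$, which gives a Chebyshev-level guarantee per vertex, and then lets the boosting algorithm repair the resulting $o(1)$ fraction of errors and uncovered vertices. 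Your proposal also underestimates the combinatorial work: the heart of the paper is the explicit construction of the test family $\cH$ (\cref{sec:test-graphs}) satisfying quantitative strict balance, an \emph{intersection balance} condition on pairs $H\neq H'$, trivial automorphism group, and size $v^{\Omega(v)}$---and the regime restriction $pn\in[n^{o(1)},n^{1/153}]\cup[n^{2/3},n^{1-\epsilon}]$ is exactly the set of densities for which such a family is constructed, not slack in a variance bound.
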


One can see that we obtain (nearly) efficient recovery even for sub-polynomial degrees.
As discussed in \cref{sec:recovery}, our results are more general and handle (slightly) sub-constant noise $\noise$.
See \cref{thm:recovery} for a precise statement of the parameters (including the $o(1)$ in the minimum sparsity $np = n^{o(1)}$).
To the best of our knowledge, the best previously known algorithms for any $pn<n^{1-\epsilon}$ required subexponential (i.e., $\exp(n^{\Omega(1)})$) time.

\medskip

At first, the requirement that the average degree $pn$ be in a union of two disjoint intervals may seem strange. 
Indeed, modulo a combinatorial conjecture, our algorithm works for all values of $np \in [\Omega(\log n),1)$.
In order to give this conjecture, we need the following definition; for the sake of exposition, we have pared it down. 
For the full requirements, see \cref{thm:graph family}.
\begin{definition}[simplified version]
Let $v,e$ be positive integers.
We say that $\cH$ is a {\em $(v,e)$-test family} if $\cH$ is a set of $v$-vertex $e$-edge graphs, such that each $H \in \cH$ has no non-trivial automorphisms, every strict subgraph of $H$ has edge density $< \frac{e}{v}$, and further for pairs of distinct $H,H' \in \cH$, no shared subgraph of $H,H'$ has density larger than $.99\frac{e}{v}$. Finally, we also require $|\cH|\ge v^{\Omega(e)}$.\footnote{
Notice that there are only $\binom{v^2/2}{e}$ graphs on $v$ vertices and $e$ edges, so the size requirement on $\cH$ is quite stringent.}
\end{definition}

\begin{conjecture}\label{conjecture:combinatorial}
For all sufficiently large integers $v > v_0$, for every integer $e$ such that $v + \log v < e \ll v^2$, there exists a $(v,e)$-test family.
\end{conjecture}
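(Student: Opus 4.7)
The plan is to prove the conjecture by a probabilistic construction: sample a collection of $N = v^{ce}$ random $v$-vertex $e$-edge graphs independently, verify the three test-family conditions via first-moment arguments, and apply an alteration/deletion argument to extract a subfamily of the required size. Since the total number of labeled $e$-edge graphs on $[v]$ is $\binom{v(v-1)/2}{e} = v^{2e - O(e)}$, obtaining a final family of size $v^{\Omega(e)}$ only requires that each of the three conditions holds with probability at least $v^{-c' e}$ for some small constant $c' > 0$; in most regimes we should actually expect these conditions to hold with probability $1 - o(1)$.

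First, the automorphism-freeness condition. A first-moment bound over non-identity permutations $\sigma \in \cS_v$, using that for any such $\sigma$ the probability that $\sigma$ fixes a uniform random $e$-edge graph is at most $(e/\binom{v}{2})^{k(\sigma)}$, where $k(\sigma)$ is the number of edge-orbits of $\sigma$ not consisting entirely of fixed edges, shows that the trivial-automorphism condition holds with probability $1 - o(1)$ whenever $e \geq v + \log v$; this is a standard adaptation of asymmetry arguments in random graph theory. Second, strict balancedness, which reduces to requiring that for every strict vertex subset $V' \subsetneq [v]$ of size $v'$, the induced subgraph $H[V']$ has strictly fewer than $e v'/v$ edges. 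For small $v'$, one enumerates directly: the expected number of $v'$-subsets with $|E(H[V'])| \geq \rho_0 v'$ (where $\rho_0 = e/v$) is bounded by a term of the form $(ev/v')^{v'} \cdot (ev'/v)^{\rho_0 v'}$, whose sum over small $v'$ forms a geometric series in $v^{1-\rho_0}$ that is $o(1)$ whenever $\rho_0 > 1$. For $v'$ close to $v$, a hypergeometric Chernoff bound combined with a union bound over vertex subsets gives the needed tail estimate, with the critical case $v' = v-1$ corresponding to a minimum-degree condition that can be enforced by conditioning on $H$ having minimum degree at least $\lceil \rho_0 \rceil + 1$.

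Third, the pairwise shared-subgraph condition. For each candidate pattern $J$ with density at least $0.99 \rho_0$, the probability of an injective homomorphism $J \hookrightarrow H_i$ is at most $v^{v(J)} \cdot (2e/v^2)^{e(J)}$, and by independence the corresponding probability for a pair $(H_i,H_j)$ is this quantity squared. Summing over isomorphism classes of $J$ with the density constraint, and over the $N^2$ pairs, yields an expected number of bad pairs much smaller than $N$; after deleting one graph from each bad pair as well as all graphs violating the first two conditions, we retain at least $N/2 = v^{\Omega(e)}$ graphs, which we take to be our family $\cH$.

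The main obstacle is the strict-balancedness argument in the very sparse regime $e = v + \Theta(\log v)$, where the random $e$-edge graph is barely above the connectivity threshold and small dense substructures (short cycles with chords, vertices of degree below $e/v$, and so on) appear with probability bounded away from zero. Controlling these likely requires careful conditioning on events such as minimum degree and absence of short chorded cycles, together with a verification that such conditioning preserves the family size $v^{\Omega(e)}$ (i.e., that the conditioning event has probability at least $v^{-c'e}$). An alternative that might sidestep this would be an explicit construction from random lifts of a fixed strictly balanced base graph, or from random Cayley graphs of a small group, both of which build in the needed regularity from the start at the cost of a more delicate analysis of the pairwise shared-subgraph condition.
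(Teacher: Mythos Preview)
The statement you are trying to prove is stated in the paper as an open \emph{conjecture}, not a theorem; there is no proof in the paper to compare against. The authors establish it only for densities $e/v \in (1,\,1+\tfrac{1}{152}] \cup \{\tfrac{3}{2},2,\tfrac{5}{2}\} \cup [3,\infty)$ (see \cref{thm:graph family}) and remark that ``completely filling in the gap requires some delicate and technical combinatorial arguments.'' So you are proposing an attack on an open problem, not reconstructing a known argument.

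On the merits of your plan: the uniform-random-graph plus alteration strategy is close in spirit to what the paper does for $e/v \ge 3$ (random regular graphs, possibly augmented by a random matching, followed by an independent-set argument in a meta-graph; \cref{prop:integer-d,prop:large-d}). But your treatment of strict balancedness contains a concrete error. You propose to handle the $v' = v-1$ case by conditioning on minimum degree at least $\lceil \rho_0 \rceil + 1$. In the sparse regime $e = v + \Theta(\log v)$ this asks for minimum degree $\ge 3$, while the degree sum is only $2e = 2v + \Theta(\log v)$, so the conditioning event is empty once $v$ is large. The condition actually forced by $v'=v-1$ is minimum degree $> \rho_0$, i.e.\ $\ge 2$, which is feasible but far from sufficient: strict balancedness of a graph with average degree $2+o(1)$ and minimum degree $2$ requires that the $O(\log v)$ vertices of degree $3$, together with the degree-$2$ paths joining them, be arranged so that no proper sub-configuration is denser than the whole. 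A uniform random graph conditioned on minimum degree $2$ gives no control over those path lengths, and unequal path lengths readily produce overly dense subgraphs. This is exactly why the paper, in the sparse case, does not sample uniformly but instead subdivides the edges of a random $3$-regular expander into paths of nearly equal prescribed length (\cref{prop:small-d}), obtaining quantitative balance by construction rather than by conditioning. Your closing paragraph correctly flags this regime as the obstacle and gestures toward structured constructions; that is the right instinct, but neither your proposal nor the paper carries it through for the missing range of densities.
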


A proof of this conjecture would immediately extend \cref{thm:recovery:intro} to every $pn \in [\Omega(\log n), 1)$.
In fact, our proof of \cref{thm:recovery:intro} proceeds by establishing this conjecture for $v = \Theta(\log n)$ and $e \in [v + o(v), (1+\frac{1}{152})v] \cup [3v, O(v)]$.
We find it difficult to believe that the existence of a $(v,e)$-test family would be discontinuous in $e$ as a function of $v$; however our techniques for the two regimes are different, and while we did not make a special effort to optimize the constants $\frac{1}{152}$ or $3$, it seems that completely filling in the gap requires some delicate and technical combinatorial arguments.

\paragraph{Hypothesis testing.} We also consider the potentially easier ``hypothesis testing'' task of \textit{distinguishing} a pair of graphs $(G_0,G_1)$ sampled from $\dstruct(n,p;\noise)$  from a pair $(G_0,G_1)$ that is drawn from the ``null distribution'' $\dnull(n,p\noise)$ of two independent samples from $\bb{G}(n,p\noise)$.
For this problem we give a \textit{polynomial time} algorithm for a range of values of $p$.

\begin{theorem}[Distinguishing]\label{thm:distinguish:intro}
    For arbitrarily small $\epsilon>0$ and for every $\noise>0$, if $pn \in [n^\eps,n^{1/153}]$ or
	$pn \in [n^{2/3},n^{1-\epsilon}]$, then there is a \textit{pseudo-polynomial} time\footnote{ 
The algorithm is pseudo-polynomial because it depends on the bit complexity of $\frac{\log p}{\log n}$.}
deterministic algorithm $A$ that distinguishes with probability at least\footnote{We can amplify this to probability $1-\delta$, but this incurs a dependence on $\delta$ in the exponent of the runtime.} $0.9$ between the case that $(G_0,G_1)$ are sampled from $\dstruct(n,p;\noise)$ and the case that they are sampled from $\dnull(n,p\noise)$.
\end{theorem}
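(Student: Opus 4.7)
The plan is to compute a polynomial-time subgraph-count statistic $T(G_0,G_1)$ whose expectation differs substantially between $\dnull$ and $\dstruct$, then apply Chebyshev's inequality. For each $p$ in the stated range I would fix a test family $\cH$ as in \cref{conjecture:combinatorial}, whose members have $v$ vertices and $e$ edges with $e/v \approx \log n/\log(1/(p\gamma))$ tuned so that the expected count $\mu = \mathbb{E}_{\mathbb{G}(n,p\gamma)}[N_H]$ is $\Theta(1)$; here $N_H(G)$ denotes the number of labeled copies of $H$ in $G$. The statistic is
\[
T(G_0,G_1)\;=\;\sum_{H\in\cH} N_H(G_0)\cdot N_H(G_1),
\]
and the algorithm outputs ``structured'' iff $T$ exceeds the threshold $\tau = \tfrac12(\mathbb{E}_{\dnull}[T]+\mathbb{E}_{\dstruct}[T])$.

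The signal comes from comparing the two expectations. Under $\dnull$, $G_0\perp G_1$, so $\mathbb{E}_{\dnull}[T] = \sum_H (\mathbb{E} N_H)^2$. Under $\dstruct$, conditioning on the base graph $B$ and on $\pi$ makes the two sub-samplings of $B$ and $\pi(B)$ independent, and since $\pi$ is an isomorphism $N_H(\pi(B))=N_H(B)$, yielding
\[
\mathbb{E}_{\dstruct}[T]-\mathbb{E}_{\dnull}[T]\;=\;\gamma^{2e}\sum_H\operatorname{Var}_B\!\bigl(N_H(B)\bigr).
\]
Strict balance of each $H\in\cH$, combined with $p$ being at the appearance threshold $p\asymp n^{-v/e}$, makes $\operatorname{Var}_B(N_H(B)) = \Theta(\mathbb{E}_B[N_H(B)])$ by the standard second-moment calculation for balanced subgraphs, so the excess signal is of order $|\cH|\cdot\gamma^e\cdot\mu$.

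The fluctuations are controlled using the remaining test-family properties. By independence of $G_0,G_1$ under $\dnull$, $\operatorname{Var}_{\dnull}(T)$ decomposes into a bilinear combination of cross-covariances $\operatorname{Cov}(N_H(G_0),N_{H'}(G_0))$ times analogous quantities on $G_1$. The ``no shared dense sub-graph'' condition on distinct pairs $H,H' \in \cH$ forces these off-diagonal covariances to be of strictly lower order than the diagonal contributions, yielding $\operatorname{Var}_{\dnull}(T)\lesssim |\cH|\cdot\mu^3$; the same bound holds under $\dstruct$ after conditioning on $B$ (using the same sub-graph density bookkeeping on $N_H(B)$). Combined with the lower bound $|\cH|\ge v^{\Omega(e)}$ from \cref{conjecture:combinatorial}, the signal-to-standard-deviation ratio is $\gtrsim \sqrt{|\cH|}\,\gamma^e/\sqrt\mu \ge v^{\Omega(e)/2}\gamma^e$, which is $\gg 1$ in all parameter regimes covered; Chebyshev's inequality then gives distinguishing probability at least $0.9$.

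The runtime is dominated by computing each $N_H(G)$, which takes $n^{O(v)}$ time by direct enumeration; since $v$ is determined by the bit-complexity of $\log p/\log n$ (as needed to realize a strictly balanced $H$ at density $p$), this is pseudo-polynomial in $n$. The hard part is the combinatorial construction of $\cH$ itself: producing a family with the strict-balance and no-shared-dense-subgraph properties is exactly what \cref{conjecture:combinatorial} asks for, and the paper establishes it only in the two density windows $e/v \in [1+o(1),\,1+\tfrac1{152}]$ and $e/v \in [3,\,O(1)]$, which translate into the two intervals of $pn$ in the theorem statement.
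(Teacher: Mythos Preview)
Your proposal follows essentially the paper's approach: build a subgraph-count statistic over a test family $\cH$, use correlation through the base graph $B$ to separate the means, bound variances via strict balance, and threshold by Chebyshev. The paper uses the centered product $(N_H(G_0)-\mu)(N_H(G_1)-\mu)$ rather than your uncentered $N_H(G_0)N_H(G_1)$, but this is cosmetic since your threshold $\tau$ absorbs the shift.

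Two places are loose. First, you write ``$p$ being at the appearance threshold $p\asymp n^{-v/e}$'', but your own tuning $\mu=\E_{\mathbb{G}(n,p\gamma)}[N_H]=\Theta(1)$ puts $p\gamma$, not $p$, at threshold; so $\E_B[N_H(B)]=\gamma^{-e}\gg 1$, and the ``Poisson at threshold'' heuristic is not the right justification for $\Var_B(N_H(B))=\Theta(\E_B[N_H(B)])$ (the conclusion still holds, since the $J=H$ term dominates the variance expansion for strictly balanced $H$ in this regime). Second, and more substantively, ``the same bound holds under $\dstruct$ after conditioning on $B$'' hides real work: the law of total variance leaves you with $\gamma^{4e}\Var_B\bigl(\sum_H N_H(B)^2\bigr)$, which involves fourth moments and cross-terms $\E[N_H(B)^2 N_{H'}(B)^2]$ not covered by second-moment bookkeeping. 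The paper instead handles $\Var_{\dstruct}$ by a direct four-way expansion of the centered product over ordered vertex sets, interpreting it as an inclusion--exclusion over triples of intersection subgraphs $(J,K,L)$ and invoking a density-monotonicity claim (iterated unions of strictly balanced graphs on nontrivial overlaps are strictly denser) to control the resulting sums. This is the one place your sketch would need substantive filling in.

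A minor terminological point: for the $\dnull$ variance the paper uses only strict balance (any common subgraph of distinct $H,H'\in\cH$ is a strict subgraph of each, hence sparser), not the stronger pairwise ``no shared dense subgraph'' property you cite; that property is needed only for the recovery analysis.
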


See \cref{thm:distinguish} for the full settings of parameters that we achieve for distinguishing.

\subsection{Approach and techniques}

In this section we illustrate our approach and techniques.
For the sake of simplicity and concreteness we first focus on the following task.
Given a pair of graphs $(G_0,G_1)$, distinguish between the following two cases for $p=n^{-1+\delta}$ (i.e., graphs of average degree $\sim n^{\delta}$):

\begin{compactdesc}

\item[Null case:] $(G_0,G_1)$ are drawn from the distribution $\dnull$ of two independent graphs from the \ER distribution $\bb{G}(n,p/2)$.

\item[Planted/structured case:] $(G_0,G_1)$ are drawn from the distribution $\dstruct(n,p;1/2)$. That is, we sample $\Base$ from $\bb{G}(n,p)$ and a random permutation $\pi \sim \cS_n$, and both $G_0$ and $G_1$ are independently subsampled subgraphs of $\Base$ where each edge is kept with probability $1/2$. The labels of the vertices of $G_1$ are additionally permuted according to $\pi$.
\end{compactdesc}

Before we present our approach to solve this problem, we explain some of the challenges.
In the $\dnull$ case the graphs $G_0,G_1$ are completely unrelated, and there is no permutation of the vertices so that $G_0$ and $G_1$ overlap on more than a $p$ fraction  of the edges, while in the $\dstruct$ case they are ``roughly isomorphic'', in the sense that  there is a permutation that will make them  agree on about a quarter of their edges.
Since random graphs are in fact an \emph{easy} instance of the graph isomorphism problem, we could perhaps hope that known graph isomorphism algorithms will actually succeed in this ``random noisy'' case as well.
Alas, it turns out not to be the case.

We now present some rough intuition why common graph isomorphism heuristics fail in our setting.
(If you are not interested in seeing why some algorithms fail but rather only why our approach succeeds, feel free to skip ahead to Section~\ref{sec:ourapproachsubsec}.)
Let's start with one of  the simplest possible heuristics for graph isomorphism: sort the vertices of $G_0$ and $G_1$ according to their degrees and then match them to each other.
If $G_0$ and $G_1$ are isomorphic via some permutation $\pi$ then it will of course be the case that the degree of every vertex $v$ of $G_0$ is equal to the degree of $\pi(v)$ in $G_1$.
Generally, even in a random graph, this heuristic will not completely recover the isomorphism since we will have many ties: vertices with identical degrees.
Nevertheless, this approach would map many vertices correctly, and in particular the highest degree vertex in a random graph is likely to be unique and so be mapped correctly.

However, in the \emph{noisy} setting, even the highest degree vertex is unlikely to be the same in both graphs.
The reason is that in a random graph of average degree $\Delta$, the degrees of all the vertices are roughly distributed as independent Poisson  random variable with expectation $\Delta$.
The vertex $v^*$ with highest degree in $G_0$ is likely to have degree which is  $k^* \sim \sqrt{\ln n}$ standard deviations higher than the mean.
But since the graphs are only $\tfrac{1}{4}$-correlated, the corresponding matched vertex $w^* = \pi(v^*)$ is likely to have degree which is only $\tfrac{1}{4}k^*$ higher than the mean.
It can be calculated that this means that $w^*$ is extremely unlikely to be the highest degree vertex of $G_1$.
In fact, we expect that about $n^{15/16}$ vertices will have degree larger than $w^*$s.

In the context of graph isomorphism algorithms, we often go beyond the degree to look at the \emph{degree profile} of a vertex $v$, which is the set of degrees of all the neighbors of $v$.
In the case that $G_0$ and $G_1$ are isomorphic via $\pi$, the degree profiles of $v$ and $\pi(v)$ are identical.
However, in the noisy case when $G_0$ and $G_1$ are only $\tfrac{1}{4}$-correlated, the degree profiles of $v$ and $\pi(v)$ are quite far apart.
About a quarter of the neighbors of $v$ and $\pi(v)$ will be matched, but for them the degrees are only roughly correlated, rather than equal.
Moreover the other three quarters of neighbors will not be matched, and for them the degrees in both graphs will just be independent Poisson variables.

Another common heuristic for  graph isomorphism is to match $G_0$ and $G_1$ by taking their top eigenvectors and sorting them (breaking ties using lower order eigenvectors).
Once again this will fail in our case, because even if the permutation was the identity, the top eigenvector of $G_0$ is likely to be very different from the top eigenvector of $G_1$.
This is for similar reasons as before: the top eigenvector of $G_0$ is the vector $v_0$ such that the quantity $v_0^\top A_0 v_0$ is  $k^*$ standard deviations higher than the mean for some particular value $k^*$.
However, it is likely that the $v_0^\top A_1 v_0$ will only be $k^*/4$ or so standard deviations higher than the mean, and hence $v_0$ will \emph{not} be the top eigenvector of $A_1$.

One could also imagine using a different heuristic, such as cycle counts, to distinguish---in \cref{sec:distinguish}, we discuss the shortcomings of such ``simple'' heuristics in detail.

\subsubsection{The ``black swan'' approach} \label{sec:ourapproachsubsec}

Now that we have appreciated the failure of the canonical graph isomorphism algorithms, we describe our approach.
Our approach can be thought of as \emph{``using a flock of black swans''}.
Specifically, suppose that $H$ is an $O(1)$-sized subgraph that is a ``black swan,'' in the sense that it has extremely low probability $\mu \ll 1$ of appearing as a subgraph of a random graph $G_0$ drawn from $\bb{G}(n,p/2)$.\footnote{For technical reasons, for the distinguishing section we actually take $\mu = O(1)$ and only use $\mu \ll 1$ for recovery, but we discuss here the case $\mu \ll 1$ for intuition.}
Another way to say it is that $\E \num{H}{G} = \mu$ where $\num{H}{G}$ is the {\em subgraph count} of $H$ in $G$, or the number of subgraphs of $G$ isomorphic to $H$.\footnote{More formally, $\num{H}{G}$ is the number of \textit{injective homomorphisms} of $H$ to $G$, divided by the number of {\em automorphisms} of $H$. That is,  if $H$ has vertex set $[v]$ and $G$ has vertex set $[n]$, then $\num{H}{G} = \frac{1}{|\aut(H)|}\sum_{\sigma:[v] \rightarrow [n] \text{1-to-1}}\prod_{(i,j) \in E(H)} G_{\sigma(i),\sigma(j)}$. \label{fn:homomorph}}

\begin{figure}[h]
    \centering
    \includegraphics[width=.25\textwidth]{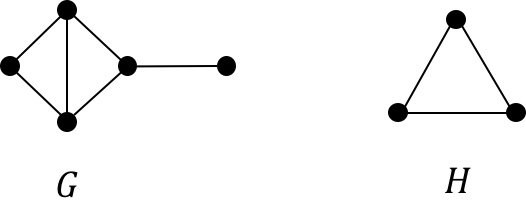}
    \caption{In this example, $\num{H}{G}=2$.}
    \label{fig:count}
\end{figure}

If we are lucky and $H$ appears in both $G_0$ and $G_1$, then we can conclude that it is most likely that the vertices of $H$ in $G_0$ are mapped to the vertices of $H$ in $G_1$, since the probability that both copies appear by chance is $\mu^2$.
This is much smaller than $\mu(\tfrac{1}{2})^{O(1)}$, which is the probability that $H$ appears in $G_0$ and those edges were not dropped in $G_1$.
If we are further lucky (or chose our swan carefully) so that $H$ has no non-trivial automorphism, then it turns out that we can in such a case deduce the precise permutation of the vertices of $H$ in $G_0$ to the vertices of $H$ in $G_1$.

The above does not seem helpful in designing an algorithm to recover the permutation, or even to distinguish between $\dnull$ and $\dstruct$ since by its nature as a ``black swan'', most of the times $H$ will \emph{not} appear as a subgraph of $G_0$, and hence we would not be able to use it.
Our approach is to use a \emph{flock} of such swans, which are a set $\cH$ of graphs $H_1,\ldots,H_t$ such that the probability of every individual graph $H_i$ occurring as a subgraph is very small, but the probability of \emph{some} graph $H_i$ occurring is very high.
We carefully designate properties of the family (which we call the \textit{test graph family}) so that when $i \neq j$ the events that $H_i$ and $H_j$ occur as subgraphs are roughly independent.

Already, this allows us to use the common occurrences of graphs in this family to deduce  whether $(G_0,G_1)$ came from the null distribution (in which case such occurrences will be rare) or whether they came from the structured distribution (in which case they will be more frequent).
In particular, if we define $\cH = \{ H_1,\ldots, H_t\}$, and the polynomial
$$
p_{\cH}(G_0,G_1) = \sum_{H\in \cH} (\num{H}{G_0}-\mu)(\num{H}{G_1}-\mu),
$$
then the value of $p_{\cH}$ will be noticeably higher when $(G_0,G_1)$ are drawn from $\dstruct$ than when they are drawn from $\dnull$.
This will result in an efficient algorithm to distinguish the two distributions.
We also use the ``swans'' for recovery, as we will discuss below.

\subsubsection{Constructing the flock of black swans}

It turns out that demonstrating the existence\footnote{We note that since the graphs in the family will be of size $v=O(\log n)$, and counting the number of occurrences of a graph on $v$ vertices takes time $n^{O(v)}$, once we fix $v$ we can perform brute-force enumeration over all graphs on $v$ vertices with negligible effect on the asymptotic runtime.
For this reason, demonstrating the existence of a family is enough. (The construction need not be algorithmic, though ours is).}
 of a family of ``swans,'' or  ``test graphs,'' is a delicate task, as we need to satisfy several properties that are in opposition to one another.
On one hand, we want the family to be \textit{large}, so that we can compensate for the fact that each member is a ``black swan'' and appears with very small probability.
On the other hand, we need each member of the family to have a \textit{small} number of edges.
Suppose that one of our swans, $H$, has $e$ edges.
If it appears in the base graph $\Base$, then it only survives in both $G_0$ and $G_1$ with probability $\noise^{2e}$.
That is, the correlation between $\num H {G_0}$ and $\num H {G_1}$ decays {\em exponentially} in the number of edges of $H$, and if $H$ is too large we cannot expect it to help us recover.
As a side effect, keeping $H$ small helps with the computational efficiency of the task of finding these occurrences.
A third constraint is that we need the events that each member of the family occurs to be roughly \textit{independent}.
It is very easy to come up with a large family of graphs for which the events of them co-occurring together are highly correlated, but such a family would not be useful for our algorithm.
Ensuring this independence amounts to obtaining control over the edge density of the common subgraphs that appear in pairs of distinct test graphs.
Luckily, we are able to demonstrate the existence of families of graphs achieving the desired properties, though this require some care.

The above discussion applies to the distinguishing problem of telling $\dnull$ and $\dstruct$ apart.
However, if we can ensure that the joint occurrences of our  family of test graphs \emph{cover} all the vertices of $G_0$ and $G_1$, then we can actually recover the permutation.
This underlies our \textit{recovery} algorithm.
To simultaneously ensure these conditions we need to make the number of vertices of each $H_i$ logarithmic rather than constant, which results in a \textit{quasipolynomial} time algorithm.

\paragraph{Properties of the test graphs.} We now describe more  precisely (though still not in full formality, see Section~\ref{sec:test-graphs}) the properties that our family $\cH$  of ``black swans'' or \emph{test graphs} needs to satisfy so the above algorithm will succeed:

\begin{compactdesc}

    \item[Low likelihood of appearing.] Each graph in our test family will have $v$ vertices and $e$ edges. To ensure that it is indeed a ``black swan'', we require that  $n^vp^e = \mu$ for $\mu$ slightly subconstant. In particular, in the regime $p=n^{-1+\delta}$ this will require $e\ge (1-\delta)^{-1}v$. In fact, we will set $e$ to be almost exactly $(1-\delta)^{-1}v$. Note that when, say, $\delta < 1/10$ these are graphs with less than, say $\tfrac{10}{9}\cdot v$ edges, so that the average degree is close to $2$.

\item[Strict balance.] This condition is well-known in the random graphs literature, and it ensures that the random variable $\num{H}{G}$ is well behaved. It states that for every $H$ in the family $\cH$, every induced subgraph $H'$ of $H$ with $v'$ vertices and $e'$ edges has strictly smaller edge density, $e'/v' < e/v$.
    We will actually require a strengthened, quantitative notion of strict balance, in which the density of $H'$ is related to its size.

\item[Intersection balance.] To ensure that for every pair of distinct graphs $H_1,H_2$ in our family the random variables $\num{H_1}{G}$ and $\num{H_2}{G}$ will be asymptotically independent when $\mu \ll 1$, we will need to have even tighter control over the density of their common subgraphs. We will require that for every two such graphs, any subgraph $H'$ of their intersection satisfies the stronger condition that $e'/v' < e/v - \alpha$ for some sufficiently large $\alpha>0$.

\item[No non-trivial automorphism.] To ensure we can recover the permutation correctly from an occurrence of $H$ in $G_0$ and an occurrence in $G_1$, we require that every $H$ in $\cH$ has no non-trivial automorphism.

\item[Largeness.] Finally to ensure that there actually will be many subgraphs from this family in our graph, we will require that $|\cH|\mu \noise^{e} > n$. (For distinguishing, it will suffice that $|\cH| \noise^{2e} =\Omega(1)$.)
\end{compactdesc}

We conjecture that a family achieving these properties can be obtained with any density $e/v>1$ (see \cref{conjecture:combinatorial}). 
However, at the moment we only demonstrate the existence of such families of graphs with certain densities, which is why our algorithms do not work for all ranges of $p$.\footnote{More accurately, we do have conjectured constructions that demonstrate the existence of such graphs for all densities, but have not yet been able to analyze them in all regimes.}

We now illustrate one such construction.
First and foremost, it can be shown that for integer $d \ge 3$, random $d$-regular graphs $H$ satisfy the requirements of strict balance and trivial automorphism group. Further, a sufficiently large fraction of the set of $d$-regular random graphs will satisfy the intersection balance property.
So for graphs with $e = (1-\delta)^{-1}v$ where $(1-\delta)^{-1} = d/2$, we easily have such a family.

However, the above construction does not give us graphs of all densities, and in particular does not allow us to handle the most interesting  regime of sparse \ER correlated graphs (e.g., $p < n^{-0.999}$ or so) which requires test graph of density roughly $1+\delta$ for some small $\delta$ and in particular a \textit{non integer} average degree of roughly $2+2\delta$.
Here is one example for such a construction when $\delta$ is $1/(3k+3)$ for some large integer $k$.
We start with a random $3$-regular graph $H'$ on $v'$ vertices (and hence $1.5v'$ edges).
We then subdivide every edge by inserting $k$ intermediate vertices into it, and so turning it into a path of length $k+1$.
The resulting graph $H$ will have $e=1.5v'(k+1)$ edges and  $v=v'+ke'=v'+1.5kv'$ vertices, and one can verify that $e=(1-\delta)^{-1}v$.
Moreover, it can be shown that the densest subgraphs of $H$ will ``respect'' the underlying structure, in the sense that for every original edge of $H'$, a subgraph of $H$ maximizing the density will either include all the corresponding path or none of it.
Using this observation, and the expansion properties of random graphs, it is possible to show that strict balance condition and even the intersection balance condition hold.
Moreover, we can also use known properties of random graphs to rule out non-trivial automorphism.
Finally, since the number of 3-regular graphs on $v'$ vertices is $v'^{\Omega(v')}$, for $v'=\Omega(v)$ we get a \emph{super exponential} (i.e., $2^{\omega(v)}$) number of graphs, which will allow us to get a sufficiently large family.
We will furthermore need to make the notion of strict balance quantitative. For this and the remaining details, see Section~\ref{sec:test-graphs}, where we also give our constructions for other values of $(1-\delta)^{-1}$.

\subsection{Related work}

As mentioned above, there is an extremely large body of literature on the graph matching problem. We discussed above the works on  correlated \ER graphs, but people have also studied other generative models such as power law graphs and others  (e.g., see \cite{ji2015your}).

On a technical level, our work is inspired by recent works on sum-of-squares, and using low degree polynomials for inference problems~\cite{hopkins2017power}.
In particular, our starting point is a low degree distinguisher from the planted and structured distributions.
However, there are some differences with the prior works.
These  works typically searched for objects such as cuts, vectors, or assignments that are less structured than searching for permutations.
Moreover, unlike prior works where the polynomial distinguishers used fairly simple polynomials (such as counting edges, triangles, cycles, etc..), we need to use subgraphs with more complex structure.
This is related to the fact that despite this inspiration, our algorithm at the moment is \textit{not} a sum-of-squares algorithm.
It remains an open problem whether the natural sum-of-squares relaxation for (\ref{eq:matchingperm}) captures our algorithm.

In our analysis we draw on the vast literature on analyzing the distribution of subgraph counts (e.g., see~\cite{Janson}).
Our setting  is however somewhat different as we need to construct a family of graphs with related but not identical  properties to those studied in prior works; in particular, some differences arise because of the fact that the graphs $G_0,G_1$ are correlated, and the fact that we work in the regime where the graphs have size growing with $n$ and appear $o(1)$ times in expectation.

\subsection{Organization}

In \cref{sec:distinguish} we give our \textit{distinguishing} algorithm between $\dnull$ and $\dstruct$ (\cref{thm:distinguish:intro}).
Then in \cref{sec:recovery} we build on this algorithm to obtain a \textit{recovery} algorithm that recovers the ``ground truth'' permutation from $(G_0,G_1)$ drawn from $\dstruct$.
This algorithm builds upon and extends the techniques of our distinguishing algorithm.
Both the recovery algorithm and distinguishing algorithms use as a ``black box'' the existence of families $\cH$ of ``test graphs'' (the black swans) that satisfy certain properties.
In Section~\ref{sec:test-graphs} we show how to construct such test families.

\paragraph{Notation.} For a graph $G = (V,E)$ and a subset of the vertices $S \subseteq V$, we use $G[S]$ to denote the vertex-induced subgraph on $S$ and we use $E[S]$ to denote the set of edges with both endpoints in $S$. We use $\Var(X)$ to denote the variance of the random variable $X$. For an event $\cE$, $\Ind[\cE]$ is the 0-1 indicator that $\cE$ occurs. For a two graphs $G,H$, we use $G \cong H$ to indicate that $G$ and $H$ are isomorphic, and $H \subseteq G$ to indicate that $G$ contains $H$ as an edge-induced subgraph. We use $\ff{n}{k}$ to denote the falling factorial, $\ff{n}{k} = n(n-1)\cdots(n-k+1)$. 
We will also use standard big-$O$ notation, and we will use $f(n) \ll g(n)$ to denote that $\lim_{n \to \infty}\frac{f(n)}{g(n)} \to 0$.

\section{Distinguishing the null and structured distributions} \label{sec:distinguish}

In this section, we give an algorithm for the following distinguishing problem:

\begin{problem}[Distinguishing]\label{prob:distinguishing}
We are given two $n$-vertex graphs $G_0,G_1$, sampled equally likely from one of the following distributions:
\begin{compactitem}
    \item The {\em null distribution}, $\dnull$: $G_0$ and $G_1$ are sampled independently from $\bb{G}(n,p\noise)$.
    \item The {\em   structured distribution}, $\dstruct$: First, a graph $\Base \sim \bb{G}(n,p)$ is sampled.
    Then, we independently sample $G_0,\tilde{G}_1$ from $G$ by subsampling every edge with probability $\noise$.
    Finally, we set $G_1$ to be a copy of $\tilde{G}_1$ in which the vertex labels have been permuted according to a  uniformly random permutation $\pi$.
\end{compactitem}
Our goal is to decide with probability $\ge 0.9$ whether $G_0,G_1$ were sampled from $\dnull$ or $\dstruct$.
\end{problem}

This section will be devoted to a proof of the following theorem, which is a generalization and directly implies Theorem~\ref{thm:distinguish:intro}:

\begin{theorem}[Distinguishing algorithm, restatement]\label{thm:distinguish}
    For arbitrarily small $\epsilon,\delta>0$, if $p\noise \in [\tfrac{n^\delta}{n},\tfrac{n^{1/153}}{n}]$ or $p\noise \in [n^{-1/3},n^{-\epsilon}]$ and if $\gamma = \Omega(\log^{-b} n)$ for constant $b$, there is a $n^{\gamma^{-O(1)}}$ time algorithm $A$ that distinguishes with probability at least\footnote{We can amplify this to probability $1-\delta$ by incurring extra runtime, gaining a dependence on $\delta$ in the exponent of $n$.} $0.9$ between the case that $(G_0,G_1)$ are sampled from $\dstruct(n,p;\noise)$ and the case that they are sampled from $\dnull(n,p\noise)$.

    In particular, if $\gamma = \Omega(1)$ then the algorithm runs in polynomial time.
\end{theorem}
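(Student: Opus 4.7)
The plan is to define a polynomial-in-$(G_0,G_1)$ statistic whose expectation separates $\dnull$ and $\dstruct$, and then control its variance so that Chebyshev suffices. Fix the test family $\cH$ promised by \cref{sec:test-graphs} for the given density regime, where each $H \in \cH$ has $v$ vertices and $e$ edges with $n^v p^e = \Theta(1)$, satisfies strict and intersection balance with some density gap $\alpha > 0$, has trivial automorphism group, and $|\cH|\,\gamma^{2e}$ is a sufficiently large constant. The statistic I would use is
$$T(G_0,G_1) \;=\; \sum_{H \in \cH}\bigl(\num{H}{G_0} - \mu_H\bigr)\bigl(\num{H}{G_1} - \mu_H\bigr), \qquad \mu_H := \E_{G \sim \bb{G}(n,p\gamma)}[\num{H}{G}],$$
and the algorithm thresholds $T$ at $\tau = \tfrac{1}{2}\,\E_{\dstruct}[T]$. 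Each $\num{H}{G}$ is computable in $n^{O(v)}$ time by enumerating injective embeddings of $H$, and $|\cH| \le n^{O(v)}$, so with $v = O(\gamma^{-O(1)})$ the total runtime is $n^{\gamma^{-O(1)}}$.

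The expectation gap is easy. Under $\dnull$, independence of $G_0,G_1$ yields $\E_{\dnull}[T] = \sum_{H}\E[\num{H}{G_0}-\mu_H]\cdot\E[\num{H}{G_1}-\mu_H] = 0$, since $\mu_H$ is the marginal mean. Under $\dstruct$ both marginals are still $\bb{G}(n,p\gamma)$, so each summand equals $\operatorname{Cov}(\num{H}{G_0},\num{H}{G_1})$. Expanding the product as a double sum over pairs of embeddings $(\sigma_0,\sigma_1)$ and conditioning on $\Base$ and $\pi$, pairs with $\pi\circ\sigma_0 = \sigma_1\circ\phi$ for some $\phi \in \aut(H)$ contribute a total of $\mu_H\,\gamma^e$ (both copies live on the same $e$ edges of $\Base$ and each survives an independent $\gamma$-subsampling), while all other pairs contribute $\mu_H^2$ in total because the corresponding edges of $\Base$ are disjoint and are subsampled independently. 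Hence $\E_{\dstruct}[T] = (1+o(1))\,|\cH|\,\mu_H\,\gamma^e$.

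The variance bound is the technical core. Under $\dnull$, $\Var_{\dnull}(T)$ decomposes as $\sum_{H,H'} \operatorname{Cov}(\num{H}{G_0},\num{H'}{G_0})\cdot\operatorname{Cov}(\num{H}{G_1},\num{H'}{G_1})$ by independence of $G_0,G_1$, and each inner covariance expands as a sum indexed by the shared subgraph $F$ on the intersection of images. The $H=H'$ diagonal terms are $O(\mu_H^2)$ by strict balance (for $F \subsetneq H$, $e_F/v_F < e/v$ forces every $F$-contribution to $\Var(\num{H}{G})$ to be $O(\mu_H)$), while the $H \neq H'$ off-diagonal terms are suppressed by an $n^{-\Omega(\alpha v_F)}$ factor by intersection balance, and the $|\cH|^2$ counting factor is absorbed provided $\alpha$ is sufficiently large relative to $\log|\cH|/\log n$ (arranged in the construction). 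Under $\dstruct$ the joint distribution does not factor, but an analogous expansion---now over pairs of shared subgraphs, one in $G_0$ and one in the $\pi$-pullback of $G_1$---gives a bound of the same order $O(|\cH|\,\mu_H^2)$. Plugging into Chebyshev, the gap $|\cH|\,\mu_H\,\gamma^e$ dominates the standard deviation $\sqrt{|\cH|}\,\mu_H$ exactly when $|\cH|\,\gamma^{2e}$ is a sufficiently large constant, matching the largeness condition.

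The main obstacle, and what restricts the theorem to the two stated density ranges, is the off-diagonal variance bound. The number of test graphs $|\cH|$ may be as large as $v^{\Theta(e)}$, whereas each off-diagonal contribution is only polynomially smaller in $n$ than the diagonal; the intersection-balance gap $\alpha$ must therefore be large enough to beat $|\cH|^2$. The entire burden of ensuring this---and of simultaneously satisfying strict balance, triviality of automorphisms, and the size bound---is pushed into the test-family construction of \cref{sec:test-graphs}, which is currently known only in the two density intervals in the theorem. I would treat that construction as a black box and verify that the parameter trade-offs reduce to the claimed ranges of $p\gamma$; amplification to any $0.9$-probability threshold from the constant-gap Chebyshev bound is then immediate.
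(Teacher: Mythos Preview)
Your high-level strategy is the same as the paper's: define $T=\sum_{H\in\cH}(\num{H}{G_0}-\mu_H)(\num{H}{G_1}-\mu_H)$, compute the expectation gap via \cref{lem:dist-exp}, bound the variance under both models, and finish with Chebyshev. The runtime analysis and the choice $v=\gamma^{-O(1)}$ are also correct.

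However, you misidentify the main obstacle. You write that the off-diagonal variance terms must be killed by the intersection-balance gap $\alpha$, and that ``$\alpha$ must be large enough to beat $|\cH|^2$.'' This is not what happens in the distinguishing regime. Here $v$ is a \emph{constant} (depending only on $\gamma$), so $|\cH|$, the number of subgraphs $J$, and the density defect $\beta_J$ of each strict subgraph are all constants independent of $n$. Ordinary strict balance alone already forces each off-diagonal covariance $\operatorname{Cov}(\num{H}{G_0},\num{H'}{G_0})$ to carry a factor $(p\gamma)^{\beta_J}=o(1)$, and since everything else in sight is $O(1)$, the entire off-diagonal block is $o(1)$ relative to $\E_{\dstruct}[T]^2$. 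The paper's proof of \cref{thm:variance} uses only properties (1)--(3) of the test family (strict balance, trivial automorphism group, size $v^{\Omega(e)}$); intersection balance is invoked only for recovery, where $v=\Theta(\log n)$ and the quantitative control matters. The density restrictions in the theorem statement come from the \emph{construction} of strictly balanced families of the right density, not from any tension between $\alpha$ and $|\cH|$.

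Two smaller points. First, in your expectation calculation under $\dstruct$ you say the non-coinciding embedding pairs ``have disjoint edges in $\Base$''; this is false for partially overlapping pairs, and those contribute nontrivially to the covariance (see \cref{eq:fin-expect}). Strict balance is what makes those terms lower order, not disjointness. Second, the $\dstruct$ variance is not merely ``analogous'' to the $\dnull$ case: because $G_0$ and $G_1$ are correlated, the expectation does not factor, and one must expand over four embeddings and track three intersection graphs $J,K,L$ simultaneously (the inclusion--exclusion in the paper's proof). The final bound has the same shape, but getting there is genuinely more work than you indicate.
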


Recall that for graphs $G,H$, we define the {\em subgraph count} $\num{H}{G}$ to be the number of subgraphs of $G$ isomorphic to $H$.
Since $G_0,G_1$ sampled from $\dstruct$ are correlated, if a subgraph $H$ appears in $G_0$ then it is more likely to also appear in $G_1$, and the subgraph counts are correlated.
The following lemma uses this approach to give  a certain ``test'': a polynomial $p_H(G_0,G_1)$ that has zero mean when $(G_0,G_1)$ is chosen from the null distribution, but positive mean when they are chosen from the structured distribution.
This test will not be good enough, since even in the structured case, it will be unlikely that the polynomial takes a non-zero value, but will serve as our starting point.

\begin{lemma}\label{lem:dist-exp}
    Let $H$ be a graph with $v$ vertices and $e$ edges,  define the {\em subgraph count-deviation correlation polynomial}
\[
    p_H(G_0,G_1) = \big(\num{H}{G_0}-\E[\num{H}{G_0}]\big)\big(\num{H}{G_1} -\E[\num{H}{G_1}]\big),
\]
where $G_0,G_1$ are two $n$ vertex graphs and the expectation is taken over $G_0,G_1$ from the \ER  distribution  $\bb{G}(n,p\noise)$.
Then  in the structured distribution,
\begin{equation}
    \E_{\dstruct(n,p;\noise)}[p_H(G_0,G_1)] = \Theta(1)\cdot \frac{\big(\E_{\dstruct}[\num{H}{G_0}]\big)^2}{\E_{G\sim \bb{G}(n,p)}[\num{K}{G}]}\label{eq:struct-expectation}
\end{equation}
where  $K \subseteq H$ is the subgraph  of $H$ which minimizes $\E_{G\sim\bb{G}(n,p)}[\num{K}{G}]$.
\end{lemma}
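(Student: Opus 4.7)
The plan is to expand $\E_{\dstruct}[p_H(G_0,G_1)]$ into a sum over pairs of embeddings of $H$, identify which intersection pattern dominates, and argue it gives the stated ratio. Write $\num{H}{G} = \sum_{S \in \cS} \Ind[S \subseteq G]$, where $\cS$ is the set of edge-subsets of $K_n$ isomorphic to $H$ (so $|\cS| = \ff{n}{v}/|\aut(H)|$). Then
\[
\E_{\dstruct}[p_H(G_0,G_1)] \;=\; \sum_{S_0,S_1 \in \cS}\Big(\Pr_{\dstruct}[S_0 \subseteq G_0,\, S_1 \subseteq G_1] - (p\noise)^{e}\cdot (p\noise)^{e}\Big),
\]
since marginally $G_0, G_1 \sim \bb{G}(n,p\noise)$ in the structured distribution. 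For a fixed permutation $\pi$, the event $\{S_0 \subseteq G_0,\, S_1 \subseteq G_1\}$ requires the base graph $\Base$ to contain $S_0 \cup \pi^{-1}(S_1)$ and the two independent subsamplings to preserve the relevant edges. A short computation gives $\Pr[S_0\subseteq G_0, S_1 \subseteq G_1 \mid \pi] = p^{2e - |S_0 \cap \pi^{-1}(S_1)|}\noise^{2e}$. Averaging over $\pi$ and reindexing by $S_1' := \pi^{-1}(S_1)$, which ranges over $\cS$ uniformly, yields
\[
\E_{\dstruct}[\num{H}{G_0}\num{H}{G_1}] \;=\; \noise^{2e}\sum_{S_0, S_1' \in \cS} p^{2e-|S_0 \cap S_1'|}.
\]

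Subtracting the independent part $|\cS|^2 (p\noise)^{2e}$ precisely cancels the contribution from pairs with $S_0 \cap S_1' = \emptyset$ (and the small error from pairs with nonempty edge-overlap but where we used $p^{2e}$ in place of $p^{2e - |J|}$, which is dominated). What remains is
\[
\E_{\dstruct}[p_H(G_0,G_1)] \;=\; \noise^{2e}\sum_{\emptyset \neq J}\, \big(p^{2e - |J|} - p^{2e}\big)\cdot \#\{(S_0,S_1')\in\cS^2 : S_0 \cap S_1' = J\}.
\]
Group the pairs by the isomorphism type $K$ of the common edge-induced subgraph on $J$. For each subgraph $K \subseteq H$ with $v_K$ vertices and $e_K\ge 1$ edges, the number of ordered pairs $(S_0,S_1')$ whose intersection is isomorphic to $K$ is $\Theta(n^{v_K} \cdot n^{v - v_K} \cdot n^{v - v_K}) = \Theta(n^{2v - v_K})$ (after appropriate automorphism bookkeeping), and each such pair contributes $(1-o(1))\,p^{2e - e_K}\noise^{2e}$. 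Thus the total is
\[
\E_{\dstruct}[p_H(G_0,G_1)] \;=\; \Theta(1) \cdot \noise^{2e}\sum_{\emptyset \neq K \subseteq H} \frac{n^{2v}p^{2e}}{n^{v_K}p^{e_K}} \;=\; \Theta(1) \cdot \frac{(n^v (p\noise)^e)^2}{\min_K \,n^{v_K}p^{e_K}},
\]
which is exactly $\Theta(1)\cdot (\E_{\dstruct}[\num{H}{G_0}])^2 / \E_{\bb{G}(n,p)}[\num{K^*}{G}]$ for $K^* = \arg\min_{K} \E[\num{K}{G}]$, since $\E_{\dstruct}[\num{H}{G_0}] = \Theta(n^v (p\noise)^e)$.

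The only real subtleties, which I expect to be the main source of care rather than a genuine obstacle, are: (i) justifying that summing $n^{2v-v_K}p^{2e-e_K}$ over all nonempty subgraphs $K\subseteq H$ is within a constant factor of the $K=K^*$ term — this follows because $H$ will have a bounded number of subgraph isomorphism types and the geometric-type sum is dominated by its extremal summand; (ii) showing that the inclusion-exclusion correction from replacing ``$S_0 \cap S_1' = J$'' with ``$S_0 \cap S_1' \supseteq J$'' loses only a $(1+o(1))$ factor, using that the next-order overlap contributes a strictly smaller power of $n$; and (iii) absorbing $|\aut(H)|$ and $|\aut(K)|$ factors into the $\Theta(1)$. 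None of these requires special structure of $H$ beyond what is already implicit in the lemma statement (the subgraph $K^*$ being well-defined as the minimizer).
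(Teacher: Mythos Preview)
Your proposal is correct and follows essentially the same approach as the paper: expand the covariance as a double sum over copies of $H$, group by the (edge-induced) intersection subgraph $J$, observe that subtracting $\E[\num{H}{G_0}]\E[\num{H}{G_1}]$ kills the $|E(J)|=0$ terms, and then note that the remaining sum $\sum_{J} \Theta(1)\cdot n^{2v-|V(J)|}\gamma^{2e}p^{2e-|E(J)|}$ is dominated by the $J=K^*$ term. The paper's bookkeeping is slightly more explicit (it tracks the combinatorial factor $c_J(H)=\num{J}{H}^2\aut(J)$ and works with ordered vertex tuples divided by $\aut(H)^2$ rather than unlabeled edge-sets), but the argument is the same and your identified subtleties (i)--(iii) are exactly the points the paper absorbs into the $\Theta(1)$ by remarking that $c_J(H)$ and $\aut(H)$ are independent of $n$.
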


\begin{proof}
Note that under the null distribution, $G_0,G_1$ are independent.
Therefore
\begin{align*}
\E_{\dnull}[p_H(G_0,G_1)]
\,=\,
    \left(\E_{\dnull}\big[\num{H}{G_0} - \E[\num{H}{G_0}]\big]\right)^2
\,=\, 0.
\end{align*}

On the other hand, in the structured distribution, $G_0$ and $G_1$ are correlated.
That is,
\begin{align}
\E_{\dstruct}[p_H(G_0,G_1)]
    &= \E_{\dstruct}\big[\num{H}{G_0}\cdot \num{H}{G_1}\big] -\E[\num{H}{G_0}]\cdot\E[\num{H}{G_1}].\label{eq:cor}
\end{align}
    For an ordered subset of vertices $S \subset V(G_0)$ of size $v$, we define $\isH{S}{H}(G_0)$ to be the indicator that $G_0[S]$ contains $H$ as a labeled subgraph (at times we will drop the parameter $G_0$ for the sake of conciseness).
    Expanding $\num{H}{G_0}$ and $\num{H}{G_1}$ into sums of such indicators, we have
\begin{align}
\E_{\dstruct}\big[\num{H}{G_0}\cdot \num{H}{G_1}\big]
    &= \frac{1}{\aut(H)^2} \cdot\sum_{S_0 \in \ff{V(G_0)}{v}, S_1 \in \ff{V(G_1)}{v}} \E_{\dstruct}\left[\isH{S_0}{H}(G_0) \cdot \isH{S_1}{H}(G_1)\right],\label{eq:exp1}
\end{align}
    where we use $\ff{V(G_0)}{v}$ to denote all ordered subsets of $v$ vertices of $G_0$. The $\frac{1}{\aut(H)^2}$ is due to the fact that the sum is over ordered subset of $V(G_0)$ and $V(G_1)$ of size $v$ and thus it counts the number of labeled ordered copies of $H$ in $G_0$ as well as $G_1$. To avoid over-counting, we divide the number of automorphisms of $H$ and get the $\frac{1}{\aut(H)^2}$ factor.

    We recall that originally, we identified $V(G_0)$ and $V(G_1)$ both with the set $[n]$.
    For each summand, the value of the expectation is determined by the number of edges shared between the realization of $H$ on $S_0$ and the realization of $H$ on $\pi^{-1}(S_1)$, where $\pi$ is the random permutation we applied to the vertices of $G_1$.
    Without loss of generality, suppose that $\pi$ was the identity permutation (for notational convenience).
    Then let $\cE_H(S_0,S_1)$ be the number of edges in the intersection of $H$ as realized on $S_0$ and $S_1$ when both are identified with $[n]$.
    Then letting $\aut(H)$ be the number of automorphisms of $H$, we have
    \begin{align}
	\cref{eq:exp1}
	&= \frac{1}{\aut(H)^2} \cdot \sum_{k = 0}^{v} \sum_{\ell = 0}^e \sum_{\substack{S_0 \in \ff{V(G_0)}{v},\, S_1 \in \ff{V(G_1)}{v}\\ |S_0 \cap S_1| = k, \, \cE_H(S_0,S_1) = \ell}} \noise^{2e} p^{2e - \ell}.\label{eq:exp2}
\end{align}
We can more elegantly express this quantity as a sum over all subgraphs $J \subseteq H$, upon which the copy of $H$ on $S_0$ and the copy of $H$ on $S_1$ may intersect. So we may re-group the sum according to these unlabeled edge-induced subgraphs $J$ that give the intersection. Further, for each $J$ we can define the number $c_J(H)$ to be the number of ways one can obtain a graph by taking two ordered, labeled copies of $H$ and intersecting them on a subgraph isomorphic to $J$.

Specifically, to have such graphs with $J$ as an intersection, one must (a) choose a copy of $J$ in $H$ for the $G_0$ copy, (b) choose a copy of $J$ in $H$ for the $G_1$ copy, (c) choose an automorphism between the copies. 
Thus, for each subgraph $J$, we have $c_J(H) = \num{J}{H}^2 \cdot \aut(J)$.

Now, let us move from the summation over ordered subsets to the summation over unlabeled edge-induced subgraphs $J$, we have
\begin{align}
\E[\num{H}{G_0}\cdot \num{H}{G_1}] 
&= \frac{1}{\aut(H)^2} \sum_{J \subseteq H} c_J(H) \cdot \ff{n}{2v - |V(J)|} \cdot \gamma^{2e}\cdot p^{2e-|E(J)|},
\end{align}
as there are $c_J(H)$ ways of intersecting two copies of $H$ on the subgraph $J$, and for each such type of intersection there are $\ff{n}{2v - |V(J)|}$ choices of vertices for $S_0,S_1$.

    To finish off the proof, we observe that by following an identical sequence of manipulations we can re-write the squared expectation in the same manner,
    \begin{align*}
	\E_{\dstruct}\big[\num{H}{G_0}\big]\cdot\E_{\dstruct}\big[\num{H}{G_1}\big]
	&= \frac{1}{\aut(H)^2}\sum_{J \subseteq H} c_J(H) \cdot \ff{n}{2v - |V(J)|} \cdot \noise^{2e} p^{2e}.
    \end{align*}
The difference is of course that because the expectations were taken separately, the intersection has no effect on the exponent of $p$.
    This allows us to re-write \cref{eq:cor}:
    \begin{align}
	\cref{eq:cor}
	&=\frac{1}{\aut(H)^2} \sum_{\substack{J \subset H \\ |E(J)| \ge 1}} c_J(H) \cdot \ff{n}{2v - |V(J)|} \cdot \noise^{2e} p^{2e} \left(p^{-|E(J)|} - 
1 
\right),\label{eq:fin-expect}
    \end{align}
    where we have used that if $|E(J)| = 0$, the terms cancel.

    To obtain the final conclusion, we use that $p$ is bounded away from $1$, and that $c_J(H)$ and $\aut(H)$ are independent of $n$.
\end{proof}

\paragraph{The need for test sets.}
Lemma~\ref{lem:dist-exp} guarantees that the count-deviation polynomial $p_H$ has larger expected value under $\dstruct$ than $\dnull$.
However, this fact alone does not prove that $p_H$ is a distinguisher.
To illustrate this point, let us for simplicity suppose that we are in the regime where $n^v ({\noise}p)^e = C$ for $C$ constant.
In this case, \cref{lem:dist-exp} gives us that
\[
    \E_{\dstruct}\big[p_H(G_0,G_1)\big] \approx C \cdot {\noise}^e,
\]
up to lower-order terms (assuming that $H$ has no subgraph $K$ with $\E\num{K}{G} < \E\num{H}{G}$).
On the other hand, a simple calculation gives an optimistic bound on the standard deviation of $p_H$ under $\dnull$ of
\[
    \Var_{\dnull}\left(p_H(G_0,G_1)\right)^{1/2} \approx C.
\]
So the standard deviation in the null case is too large for us to reliably detect the offset expectation.

\medskip

Our solution is to identify a ``test set'' of graphs $\cH$, such that the estimators $p_H$ for $H \in \cH$ are close to independent.\footnote{Here we mean in the sense that the variance of their average is asymptotically equal to an average of independent estimators.}
If we had $|\cH| = T$ trials, intuitively we expect the standard deviation to decrease by a factor of $\sqrt{T}$.
So long as we satisfy
\[
    \frac{C}{\sqrt{T}} < C\cdot {\noise}^e,
\]
the variance in the null case may be sufficiently small that we reliably distinguish.

In order to translate this cartoon sketch into a reality, we will require some additional properties of our test set which will be crucial in controlling the variance.

\subsection{Test subgraphs for distinguishing}
Given a graph $H$ with $e$ edges and $v$ vertices, in expectation there are $\ff{n}{v} \cdot p^e/\aut(H)$ copies of $H$ in $\bb{G}(n,p)$.
Because of our prevailing intuition that random graphs are well-behaved, we might naively expect that the number of copies of $H$ is concentrated around its mean; however some simple examples demonstrate that this is not always the case.

\paragraph{Example: the need for balance.}
Consider for example the graph $H$ given by appending a ``hair'', or a path of length $1$, to a clique of size $4$ (see \cref{fig:necessity balance}).
$H$ has $5$ vertices, $7$ edges, and $3!$ automorphisms, so in $G(n,p)$ with $p = 2n^{-5/7}$, we have
\[
    \E_{G\sim \bb{G}(n,p)}[\num{H}{G}] = \frac{1}{3!}\cdot \ff{n}{5} \cdot p^7 = \frac{2^7}{3!} \cdot (1-o_n(1)).
\]
So in expectation, $H$ appears in $G$ $\tfrac{2^7}{3!}$ times.
\bigskip
\begin{figure}[h]
    \centering
    \includegraphics[width=0.15\textwidth]{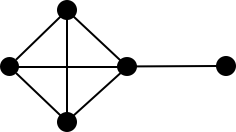}
    \caption{Necessity of being balanced. In this example, when $p = O(n^{-5/6})$, the expectation of the count of $H$ is $O(1)$, however, with high probability $H$ does not have a single occurrence in $G(n,p)$.}
    \label{fig:necessity balance}
\end{figure}
\bigskip
However, if we restrict our attention to the clique $K_4$ which is a subgraph of $H$, in expectation
\[
    \E_{G \sim \bb{G}(n,p)}[\num{K_4}{G}] = \frac{1}{4!}\cdot \ff{n}{4} \cdot p^6 = O(n^{-2/7}).
\]
So while the expected number of copies of $H$ is constant, $K_4$ is not expected to appear even once!
The large expected count of $H$ is due to a small-probability event; if $K_4$ appears (with polynomially small probability), then we will see many copies of $H$.

This issue is well known  in the study of random graphs (it is among the first topics in the textbook  of Janson et al. \cite{Janson}).
From that well-developed literature, we borrow the following concept:

\begin{definition}[Balanced graph]
The \textit{density} of a graph is the ratio of its edges to the vertices.

A graph $H$ with edge density $\alpha$ is called {\em balanced} if it has no strict subgraphs of density larger than $\alpha$.
If all strict subgraphs of $H$ have density strictly smaller than $\alpha$, then $H$ is called {\em strictly balanced}.
\end{definition}

If a graph $H$ is expected to appear at least once, then the {\em balancedness} of a graph $H$ is what determines whether the number of copies of $H$ in $\bb{G}(n,p)$ is well-concentrated around its mean.
For example, \cref{lem:dist-exp} already allows us the following observation:

\begin{observation}
    If $H$ is a graph of fixed size such that $\E_{G \sim \bb{G}(n,p)}[\num{H}{G}] = \Theta(1)$, then if $H$ is not balanced, $\Var_{G \sim \bb{G}(n,p)}[\num{H}{G}] = \omega(1)$.

    This follows from applying \cref{lem:dist-exp} with ${\noise} = 1$, in which case $\E_{\dstruct}[p_H(G_0,G_1)] = \Var_{G \sim \bb{G}(n,p)}[\num{H}{G}]$, and taking $K$ to be the densest subgraph of $H$, which must have $\E[\num{K}{G}] = o(1)$.
\end{observation}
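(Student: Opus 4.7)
The plan is to instantiate \cref{lem:dist-exp} at the extreme value $\noise = 1$. In this regime the structured distribution $\dstruct(n,p;1)$ degenerates: no subsampling occurs, so $G_0 = \Base$ and $G_1 = \pi(\Base)$. Since $\num{H}{\cdot}$ counts subgraphs up to isomorphism and is therefore invariant under relabeling of vertices, $\num{H}{G_0} = \num{H}{G_1}$ almost surely. Consequently $p_H(G_0,G_1)$ collapses to $\bigl(\num{H}{G} - \E \num{H}{G}\bigr)^2$ for a single draw $G \sim \bb{G}(n,p)$, and its expectation under $\dstruct(n,p;1)$ is precisely $\Var_{G\sim\bb{G}(n,p)}\bigl(\num{H}{G}\bigr)$.

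Substituting into \cref{eq:struct-expectation} and using the hypothesis $\E\num{H}{G} = \Theta(1)$ reduces the statement to
$$
\Var_{G\sim\bb{G}(n,p)}\bigl(\num{H}{G}\bigr) \;=\; \Theta(1) \cdot \frac{1}{\E\num{K}{G}},
$$
where $K\subseteq H$ is the subgraph of $H$ minimizing the expected count. Thus it suffices to exhibit any subgraph of $H$ whose expected count is $o(1)$, since by the minimality of $K$ such a subgraph upper-bounds $\E\num{K}{G}$ and thereby forces the variance to be $\omega(1)$.

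This is where the failure of balance enters. Writing $v = |V(H)|$ and $e = |E(H)|$, the hypothesis $\E\num{H}{G} = \Theta(n^v p^e) = \Theta(1)$ pins $p = \Theta(n^{-v/e})$. If $K' \subsetneq H$ is a strict subgraph of density $e'/v' > e/v$ witnessing the imbalance, then, since $H$ (and hence $K'$) is of fixed size,
$$
\E\num{K'}{G} \;=\; \Theta(n^{v'} p^{e'}) \;=\; \Theta\bigl(n^{v' - e' v / e}\bigr),
$$
and the exponent $v'\bigl(1 - (e'/v')(v/e)\bigr)$ is strictly negative precisely because $e'/v' > e/v$. Hence $\E\num{K'}{G} = o(1)$, completing the reduction.

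The only mild subtlety I foresee is not conflating $K$ (the minimizer of expected count, produced by the lemma) with $K'$ (the denser strict subgraph produced by imbalance): the lemma gives the variance in terms of $K$, while the hypothesis of imbalance only directly controls $K'$, and we bridge the two via the minimality $\E\num{K}{G} \le \E\num{K'}{G}$. Beyond this bookkeeping, the argument is a direct plug-in to \cref{lem:dist-exp}.
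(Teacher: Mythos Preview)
Your proposal is correct and follows exactly the approach sketched in the paper's own observation: set $\noise=1$, identify $\E_{\dstruct}[p_H]$ with $\Var(\num{H}{G})$, and use the imbalance of $H$ to produce a subgraph with $o(1)$ expected count. Your explicit distinction between the lemma's minimizer $K$ and the dense witness $K'$ (bridged via $\E\num{K}{G}\le\E\num{K'}{G}$) is a nice clarification that the paper's one-line proof glosses over.
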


To ensure asymptotic independence, we will require that each of the graphs n our test set $\cH$ be {\em strictly balanced}.

\begin{theorem*}[see \Cref{thm:graph family}]
    Let $\delta \in (0,1)$ be a rational number, so that $\tfrac{1}{1-\delta} \in (1,1+\tfrac{1}{153}] \cup \{\tfrac{3}{2},2,\tfrac{5}{2}\}$ or $\tfrac{1}{1-\delta} \ge 3$.
Let $v$ be a sufficiently large even integer.
    There exists a test set of graphs $\cH = \{H\}$, each on $v$ vertices and containing $e = \tfrac{1}{1-\delta}v$ edges, which satisfies the following properties:
\begin{compactenum}
    \item Every $H \in \cH$ is {\em strictly balanced}.
    \item Every $H \in \cH$ has no non-trivial automorphisms.
    \item $|\cH| \ge v^{c\frac{1}{1-\delta}v}$, for a constant $c$ independent of $v$.
\end{compactenum}
\end{theorem*}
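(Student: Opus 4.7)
The plan is to construct the family $\cH$ case by case in the three density regimes in the hypothesis, using random graph models that satisfy the three required properties with high probability and then discarding any members that fail.

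For half-integer densities $\rho = 1/(1-\delta) = d/2$ with $d \in \{3,4,5\}$ or $d \ge 6$, I would take $\cH$ to consist of uniformly random $d$-regular graphs on $v$ vertices. Strict balance holds with high probability by a union-bound/edge-isoperimetric argument in the configuration model: for each vertex subset of size $s \le v/2$, the probability of spanning more than $(d/2 - \eps) s$ edges is at most $v^{-\Omega(s)}$, leaving a quantitative density gap to all proper subgraphs. Asymmetry of random $d$-regular graphs for $d \ge 3$ is a classical result of Bollob{\'a}s. The configuration model produces $v^{\Omega(v)}$ such graphs, so after discarding the $o(1)$-fraction of bad ones we retain $|\cH| \ge v^{\Omega(v)} \ge v^{c\rho v}$. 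For arbitrary rational $\rho \ge 3$ that is not a half-integer, I would use the same template with a correction: take a random $\lfloor 2\rho \rfloor$-regular graph and then add $O(v)$ extra random edges to adjust the density. Since $\rho \ge 3$ keeps the base degree at $\ge 6$, the strict-balance and asymmetry arguments go through after absorbing the $O(v)$ perturbation, and the count remains $v^{\Omega(v)}$.

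For $\rho \in (1, 1+\tfrac{1}{153}]$, I would use the subdivision construction sketched in the introduction: start with a random $3$-regular graph $H'$ on $v'$ vertices and subdivide each edge by inserting $k$ intermediate vertices (allowing $k$ to vary across edges to realize arbitrary rational $\rho$ in the range), producing $H$ on $v$ vertices with $e = \rho v$ edges. The core claim is that any subgraph of $H$ maximizing edge density must ``respect'' the subdivision structure: for every original edge of $H'$ the corresponding subdivided path is either fully included or fully excluded in the subgraph. This follows because a proper prefix of a length-$(k+1)$ path has density $\ell/(\ell+1) < (k+1)/(k+2)$, so any density-maximizer strictly improves by either completing or dropping its partial paths. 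Given the claim, dense subgraphs of $H$ correspond to subgraphs of $H'$, reducing strict balance of $H$ to strict balance of $H'$, which again follows from expansion of random $3$-regular graphs. Asymmetry of $H$ is inherited from asymmetry of $H'$ (after checking that path-reversal cannot produce a new automorphism), and the count $(v')^{\Omega(v')} = v^{\Omega(v)}$ is more than sufficient since $k$ depends only on $\rho$.

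The principal obstacle is the quantitative version of the ``respect-the-subdivision'' claim: morally partial paths are suboptimal, but when several partial paths are glued through a single base-graph vertex the accounting becomes delicate, and one must produce a large enough density gap to support the intersection-balance requirement of the full test-family definition elsewhere in the paper. The constant $1/153$ in the theorem is precisely what the present bookkeeping absorbs; closing the gap $(1+\tfrac{1}{153}, 3) \setminus \{\tfrac{3}{2},2,\tfrac{5}{2}\}$ would require more delicate combinatorial arguments, as the authors remark in the introduction.
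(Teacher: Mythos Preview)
Your proposal is correct and follows essentially the same construction as the paper: random $d$-regular graphs for half-integer densities, a $d$-regular graph plus a random matching for non-integer $d'\ge 6$, and edge-subdivision of a random $3$-regular graph for densities near $1$. Two small remarks: for $d$-regular $H$ the paper uses the cheaper deterministic observation that any connected $d$-regular graph is strictly balanced (a proper subgraph must have a vertex of degree $<d$), so your probabilistic isoperimetric argument is stronger than needed for property~1 here (though it is exactly what is used for the quantitative balance in the recovery section); and your ``respect-the-subdivision'' justification compares the partial path density $\ell/(\ell+1)$ to the \emph{full path} density, whereas the relevant comparison is to the \emph{ambient} density $\ge 1$ of $J$ --- the paper's one-line version is that a partial path forces a degree-$1$ vertex, and pruning degree-$1$ vertices never lowers density.
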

We will prove this proposition in \cref{sec:test-graphs}; for now, assuming we have such a family, we proceed to prove \cref{thm:distinguish:intro}.

\subsection{Concentration for distinguisher}

We are now ready to prove that there is a poly-time computable distinguishing polynomial with bounded variance.
The existing results on concentration of subgraph counts is not sufficient for us here, because we are interested in the setting of {\em correlated} graphs.
We will bound the variance directly.
\begin{theorem}\label{thm:variance}
    Suppose that $p = n^{\delta - 1}$ for $\delta \in (0,1)$ with $\tfrac{1}{1-\delta} \in (1,1+\tfrac{1}{153}] \cup \{\tfrac{3}{2},2,\tfrac{5}{2}\}$ or $\tfrac{1}{1-\delta} \ge 3$.
    Then there exists a polynomial $P(G_0,G_1)$ such that $\E_{\dnull}[P(G_0,G_1)] = 0$, and
    \[
	\E_{\dstruct}\left[P(G_0,G_1)\right] \ge 40\cdot \max\left(\Var_{\dstruct}\left(P(G_0,G_1)\right)^{1/2}, \Var_{\dnull}\left(P(G_0,G_1)\right)^{1/2}\right).
    \]
Further, $P(G_0,G_1)$ is a sum of subgraph count-deviation correlation polynomials for subgraphs of size $v = \noise^{-O(1)}$, and is computable in time $n^{O(v)}$, where the $O(\cdot)$ in the exponent of $n$ hides only a dependence on the size of the representation of $(1-\delta)^{-1}$ as a ratio of two integers.
When $\gamma = \Omega(1)$, the algorithm runs in polynomial time.
\end{theorem}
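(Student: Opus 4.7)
I take the natural candidate
\[
  P(G_0,G_1) \;=\; \sum_{H \in \cH} p_H(G_0,G_1),
\]
where $\cH$ is the test family furnished by the preceding theorem (with $v$ to be chosen momentarily), and $p_H$ is the count-deviation correlation polynomial from \cref{lem:dist-exp}. The first property, $\E_{\dnull}[P]=0$, is immediate since under $\dnull$ the factors in each $p_H$ are centered and independent. For the structured expectation, apply \cref{lem:dist-exp} to each $H \in \cH$: since $H$ is strictly balanced and has no non-trivial automorphism, the subgraph $K \subseteq H$ minimizing $\E_{\bb{G}(n,p)}\num{K}{G}$ is $K = H$ itself (the minimization corresponds to the \emph{maximum} density, and $H$ is the unique subgraph of density $e/v$). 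Hence for every $H \in \cH$ the contribution is $\Theta(n^v p^e \gamma^{2e})$, and summing,
\[
  \E_{\dstruct}[P] \;=\; \Theta\!\bigl(|\cH|\cdot n^v p^e \gamma^{2e}\bigr).
\]

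The main work lies in bounding the two variances; I sketch $\Var_{\dnull}(P)$, the structured case being analogous with only mild additional bookkeeping coming from the base graph $\Base$. Expanding,
\[
  \Var_{\dnull}(P) \;=\; \sum_{H \in \cH}\Var_{\dnull}(p_H) \;+\; \sum_{H \neq H'} \E_{\dnull}[p_H \, p_{H'}].
\]
Because $G_0 \perp G_1$ under $\dnull$, the diagonal term factors and each $\Var_{\dnull}(p_H) = \Var_{\bb{G}(n,p\gamma)}(\num{H}{G})^2$; standard strictly-balanced concentration (compute $\Var \num H G$ via the same ordered-subset expansion used in \cref{lem:dist-exp} with $\gamma=1$, noting that the dominant term comes from $J = H$) gives $\Var_{\dnull}(p_H) = \Theta\bigl((n^v p^e \gamma^e)^2\bigr)$. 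Summing yields $\Theta\!\bigl(|\cH|\cdot n^{2v} p^{2e} \gamma^{2e}\bigr)$. For the off-diagonal terms, independence of $G_0$ and $G_1$ gives
\[
  \E_{\dnull}[p_H p_{H'}] \;=\; \mathrm{Cov}_{\dnull}(\num H G,\num{H'}G)^2,
\]
and expanding this covariance groups contributions by the common subgraph $J$ of $H$ and $H'$ with $|E(J)| \ge 1$. Each such contribution is $O\!\bigl(n^{2v-v_J} (p\gamma)^{2e-e_J}\bigr)$. Comparing with the diagonal term, the ratio is $O\!\bigl((n^{v_J} p^{e_J})^{-1}\bigr)$, and the intersection-balance property $e_J/v_J < e/v - \alpha$ of the test family forces $n^{v_J} p^{e_J} = n^{\Omega(v_J)}$, so the off-diagonal contribution is $o(1)$ times the diagonal. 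The same estimate carries over to $\Var_{\dstruct}$ once one accounts for the extra factor $\gamma^{2e}$ and the shared base graph, which only improves the relative magnitude of the diagonal.

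Combining, $\Var(P)^{1/2} = O\!\bigl(\sqrt{|\cH|}\cdot n^v p^e \gamma^e\bigr)$, so the signal-to-noise ratio is
\[
  \frac{\E_{\dstruct}[P]}{\Var(P)^{1/2}} \;=\; \Omega\!\bigl(\sqrt{|\cH|}\cdot \gamma^e\bigr).
\]
Since the test family supplies $|\cH| \ge v^{c e}$ for a constant $c$, choosing $v$ to be a sufficiently large constant multiple of $\gamma^{-1/c}$ makes this ratio exceed $40$, as required. Finally, $p_H$ is computable by enumerating ordered $v$-tuples of vertices in each graph, costing $n^{O(v)} = n^{\gamma^{-O(1)}}$, and we need only sum over the $|\cH| \le 2^{O(v^2)}$ test graphs, absorbed into the same bound; for $\gamma = \Omega(1)$, $v$ is constant and the runtime is polynomial. \textbf{The main obstacle} I expect is the clean execution of the off-diagonal variance bound: carefully tracking the number of ways two labeled copies of $H,H'$ can intersect on each candidate $J$, extracting the dominant term, and verifying that the quantitative intersection-balance condition of \cref{sec:test-graphs} really is strong enough to beat the combinatorial overhead of enumerating those intersections.
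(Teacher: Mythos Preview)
Your overall plan matches the paper's: define $P$ as a (scaled) sum of $p_H$ over the test family, compute $\E_{\dstruct}[P]$ via \cref{lem:dist-exp}, and bound both variances by expanding into intersection patterns. Your null-variance sketch is essentially correct, and your final signal-to-noise estimate $\Omega(\sqrt{|\cH|}\,\gamma^e)$ is the right answer.

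The genuine gap is the structured variance, which you dismiss as ``analogous with only mild additional bookkeeping.'' It is not. Under $\dnull$ you exploit $G_0\perp G_1$ to factor $\E[p_H p_{H'}]=\mathrm{Cov}(\num H G,\num{H'}G)^2$, reducing to a two-copy intersection. Under $\dstruct$ no such factorization exists: $\E_{\dstruct}[p_H p_{H'}]$ is a genuine fourth moment involving all four counts $\num{H}{G_0},\num{H'}{G_0},\num{H}{G_1},\num{H'}{G_1}$ simultaneously, and you must track \emph{triples} of intersection subgraphs (one within $G_0$, one within $G_1$, one across via the shared base graph). The paper organizes this via an inclusion--exclusion over such triples $(J,K,L)$ and then uses two facts you have not identified: first, that subtracting $\E_{\dstruct}[p_H]\E_{\dstruct}[p_{H'}]$ cancels exactly those patterns in which the only edge-intersections are $H_0$--with--$H_1$ and $H'_0$--with--$H'_1$; second, a short density claim (\cref{claim:intersections} in the paper) showing that any nontrivial union of strictly balanced graphs of density $e/v$ is \emph{strictly denser} than $e/v$, which is what forces the surviving terms to be $o(1)$. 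Without the first cancellation the diagonal-in-$\dstruct$ term would swamp the signal; without the second you cannot close the off-diagonal bound. Your parenthetical that the shared base graph ``only improves the relative magnitude of the diagonal'' is not a priori true and is precisely what this machinery establishes.

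A secondary remark: for the null off-diagonal you invoke the intersection-balance property $e_J/v_J<e/v-\alpha$. The paper gets by with ordinary strict balance, since here $v$ is a constant, so there are only finitely many candidate $J$ and $\min_J\beta_J>0$ is already a positive constant independent of $n$; a single factor of $(p\gamma)^{\beta_J}=o(1)$ kills the cross terms. Your stronger hypothesis also works (the family does satisfy it), but it is not needed for this theorem and the ``main obstacle'' you anticipate is not where the real difficulty lies.
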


\begin{proof}[Proof of \Cref{thm:variance}]
    Choose $v$ to be a sufficiently large even integer such that $v^{c} > 400/(\noise)^2$, where $c$ is the constant from \cref{thm:graph family} so that $|\cH| \ge v^{\frac{c}{1-\delta} v}$.
    Let $\cH$ be the test set of subgraphs guaranteed by \cref{thm:graph family} with $v$ vertices and $e$ edges, so that $\frac{e}{v} = (1-\delta)^{-1}$.
    By this setting of parameters we have then that $\E[\num H {G_0}] = 1 \pm o(1)$.

Define the polynomial $P$ to be the average of $p_H$ over $H \in \cH$,
\[
    P(G_0,G_1) = \frac{1}{|\cH|}\sum_{H \in \cH} p_{H}(G_0,G_1).
\]

    Since every $H \in \cH$ is strictly balanced, every strict subgraph $K\subset H$ has $\E_{G\sim \bb{G}(n,p)}[\num K G] = \omega(1)$.
    So by \Cref{lem:dist-exp} (or by the more precise \Cref{eq:fin-expect}) we have that
    \[
	\E_{\dstruct}\left[P(G_0,G_1)\right] \ge \ff{n}{v} \noise^{2e} (p^{e} - p^{2e}),
    \]
where we have also used that $\aut(H) = 1$ for every $H \in \cH$.

    We define the following quantity, which will show up repeatedly in our variance bounds:
    \begin{align}
	\rho = \frac{v^2}{(p\noise)^{(1-\delta)^{-1}} n} = O(1)\label{eq:cond-p}.
    \end{align}

The bounds from \cref{lem:dist-exp} gives us the expectation of $P$ under $\dnull$ and $\dstruct$.
Using the balancedness properties of our test set, we will bound the variance of $P$ under $\dnull$ and $\dstruct$.
    \paragraph{Variance bound for $\dnull$.}
Because $G_0$ and $G_1$ are independent and identically distributed, we have
    \begin{align}
	\E_{\dnull}\left[P(G_0,G_1)^2\right]
	&= \frac{1}{|\cH|^2}\sum_{H,H' \in \cH} \E_{\dnull}[ p_H(G_0,G_1) \cdot p_{H'}(G_0,G_1)]\nonumber\\
	&= \frac{1}{|\cH|^2}\sum_{H,H' \in \cH} \left(\E_{\dnull} \left[\big(\num{H}{G_0} -\E[\num{H}{G_0}]\big)\cdot\big(\num{H'}{G_0}- \E[\num{H'}{G_0}]\big)\right] \right)^2\nonumber\\
	&= \frac{1}{|\cH|^2}\sum_{H,H' \in \cH} \left(\E_{\dnull} \big[\num{H}{G_0}\cdot \num{H'}{G_0}\big] - \E[\num{H}{G_0}]\cdot\E[\num{H'}{G_0}] \right)^2.\label{eq:varnull}
    \end{align}
    Now, we will re-write the expression within the square as we did in the proof of \cref{lem:dist-exp}, when bounding the expectation of $p_H$ under $\dstruct$.
    We will sum over graphs $J$ which are subgraphs of both $H$ and $H'$.
    For each such $J$, let $\kappa_J = \num{J}{H}$ and let $\kappa'_J = \num{J}{H'}$.
    Then, letting $H \cup_J H'$ be the graph given by taking the union of $H$ and $H'$ and identifying the vertices and edges on the subgraph $J$, the first term in~\cref{eq:varnull} can be re-written as follows.
    \begin{equation}
	\E_{\dnull} \big[\num{H}{G_0}\cdot \num{H'}{G_0}\big]
	= \sum_{J \subseteq H,H'} \kappa_J \cdot \kappa_J'\cdot \aut(J) \cdot \ff{n}{2v - |V(J)|} \cdot \E_{G \sim \bb{G}(n,p\noise)}\big[\Ind[H \cup_J H' \text{ on } 2v - |V(J)| \text{ vertices}]\big],\label{eq:nullcor}
    \end{equation}
    where we have also used that $\aut(H) = \aut(H') = 1$.
    In $\bb{G}(n,p\noise)$, the event that $H \cup_J H'$ is present on $2v - |V(J)|$ vertices occurs with probability $(p\noise)^{2e - |E(J)|}$.
    This gives us
    \begin{equation}
	\cref{eq:nullcor}
	= \sum_{ J \subseteq H,H'} \kappa_J \cdot \kappa_J'\cdot \aut(J) \cdot \ff{n}{2v - |V(J)|} \cdot (p\noise)^{2e - |E(J)|}.\label{eq:deg2}
    \end{equation}
    Re-writing the term $\E[\num{H}{G_0}]\E[\num{H'}{G_1}]$ similarly and combining with \cref{eq:deg2},
    \begin{align}
&\E_{\dnull} \big[\num{H}{G_0}\cdot \num{H'}{G_0}\big] - \E[\num{H}{G_0}]\cdot\E[\num{H'}{G_0}]\nonumber\\
	&= \sum_{\substack{J \subseteq H,H'\\ |E(J)| \ge 1}} \kappa_J\cdot \kappa_J'\cdot \aut(J) \cdot \ff{n}{2v - |V(J)|} \cdot (p\noise)^{2e}\big( (p\noise)^{-|E(J)|} - 1\big).\label{eq:diff-null}
    \end{align}

    Now, when $H\neq H'$, if there is a subgraph $J \subseteq H,H'$, $K$ must be a strict subgraph.
    Since $H,H'$ are strictly balanced, every $J \subseteq H,H'$ has density strictly less than that of $H$ and $H'$.
    Therefore, we can assume that $|E(J)| \le \density |V(J)| - \beta_J$ for some fixed (as a function of $n$) positive $\beta_J$.
    Thus returning to \Cref{eq:diff-null}, when $H \neq H'$,
    \begin{align}
	\cref{eq:diff-null}
	&\le \sum_{\substack{J \subseteq H,H'\\ |E(J)| \ge 1}} \kappa_J\cdot \kappa_J'\cdot \aut(J)\cdot n^{2v - |V(J)|} (p\noise)^{2e -|V(J)|\density + \beta_J}\nonumber\\
	&\le n^{2v}(p\noise)^{2e + \min_J \beta_J}\cdot \sum_{i=2}^v \left(\frac{v^{2}}{n (p\noise)^{\density}}\right)^i,\label{eq:geom}
    \end{align}
    where we have grouped subgraphs $J$ according to $i = |V(J)|$, then bounded the contribution of the $\kappa_J$ using that there are at most $\binom{v}{i}$ subgraphs $J$ of size $i$, so
    \begin{equation}
    \sum_{\substack{J \subseteq H\\ |J| = s}} \kappa_J\cdot \kappa_J'\cdot \aut(J)
    \le \left(\sum_{\substack{J \subseteq H \\|J| = s}} \num{J}{H}\cdot \aut(J)\right)\cdot\left(\sum_{\substack{J \subseteq H \\|J| = s}} \num{J}{H'}\cdot \aut(J)\right)
    \le \left( \binom{v}{s} \cdot s!\right)^2 
    \le v^{2s}.\label{eq:cJ upper bound}
    \end{equation}
Since the final sum in \Cref{eq:geom} is geometric, using \Cref{eq:cond-p} we have that when $H \neq H'$,
\[
    \cref{eq:diff-null} \le (p\noise)^{\min_J \beta_J} \cdot n^{2v} (p\noise)^{2e} \cdot \rho^2 \frac{\rho^{v-1} - 1}{\rho - 1}
\]
where the extra $(p\noise)^{\min_J \beta_J}$ factor is due to the strictly balanced property which implies $\beta_J$ is a positive constant independent of $n$ for each $J$.
When $H = H'$, an identical calculation bounds the terms in \Cref{eq:diff-null} for which $J \neq H$; when $J = H$, then (since $H$ has no non-trivial automorphisms) the term contributes $n^{v}(p\noise)^e$.
Thus we have
\begin{equation}
	\cref{eq:diff-null}
	\le
	\begin{cases}
	    n^v(p\noise)^e + (p\noise)^{\min_J \beta_J} \cdot n^{2v} (p\noise)^{2e} \cdot \rho^2 \frac{\rho^{v-1} - 1}{\rho - 1} & H = H'\\
	    (p\noise)^{\min_J \beta_J} \cdot n^{2v} (p\noise)^{2e} \cdot \rho^2 \frac{\rho^{v-1} - 1}{\rho - 1} & H \neq H'.
	\end{cases}
\end{equation}

Finally, combining with \Cref{eq:varnull}, this gives a bound on the variance:
\begin{equation}
    \Var_{\dnull}(P(G_0,G_1))
    \le (p\noise)^{2\min \beta_J} \cdot \left(n^{2v} (p\noise)^{2e} \cdot \rho^2 \frac{\rho^{v-1} - 1}{\rho - 1}\right)^2 + \frac{1}{|\cH|} n^{2v} (p\noise)^{2e}.\label{eq:varnull-bound}
\end{equation}
Note that in~\Cref{eq:varnull-bound}, $\rho=O(1)$ and $n^{2v}(p{\noise})^{2e}=O(1)$ while $(p\gamma)^{\min_J \beta_J}=o(1)$, thus, the first term in \Cref{eq:varnull-bound} is $o(1)$. Furthermore $|\cH| \ge 200\cdot \noise^{-2e}$, therefore the second term is at most $\frac{1}{200} n^{2v}p^{2e}\noise^{4e} = \frac{1}{200}\E[P(G_0,G_1)]^2$.
Thus, for sufficiently large $n$ we have
\[
\E_{\dstruct}\left[P(G_0,G_1)\right] \ge 40\cdot\Var_{\dnull}\left(P(G_0,G_1)\right)^{1/2}.
\]

    \paragraph{Variance bound for $\dstruct$.}
    In the structured case, the correlation of $G_0$ and $G_1$ introduces some additional complications.
    In particular, because $G_0$ and $G_1$ are not independent, the expectation does not factor nicely.
    We'll expand the expression for $\E_{\dstruct}[P(G_0,G_1)^2]$; we'll use the shorthand $X_H$ for $\num H G$ to save space when $G$ doesn't matter.
    We have
    \begin{align}
	&\Var_{\dstruct}[P(G_0,G_1)]\nonumber\\
	&=\E_{\dstruct}[P(G_0,G_1)^2] - \E_{\dstruct}[P(G_0,G_1)]^2\nonumber\\
	&= \frac{1}{|\cH|^2}\sum_{H,H' \in \cH}\E_{\dstruct}\left[\big(\num{H}{G_0}-\E[X_H]\big)\big(\num{H'}{G_0}-\E[X_{H'}]\big)\big(\num{H}{G_1}-\E[X_H]\big)\big(\num{H'}{G_1}-\E[X_{H'}]\big)\right]\label{eq:bigexp}\\
	& \qquad - \E_{\dstruct}\left[\big(\num{H}{G_0}-\E[X_H]\big)\big(\num{H}{G_1}-\E[X_H]\big)\right]\E_{\dstruct}\left[\big(\num{H'}{G_0}-\E[X_{H'}]\big)\big(\num{H'}{G_1}-\E[X_{H'}]\big)\right].\nonumber
    \end{align}
    As in the proof of \cref{lem:dist-exp}, we will write $\num H G$ as a sum of indicators that ordered subsets of vertices contain $H$ as a subgraph.
    We apply such a transformation to the first half of the summand above, and we have
    \begin{align}
	&\E_{\dstruct}\left[\big(\num{H}{G_0}-\E[X_H]\big)\big(\num{H'}{G_0}-\E[X_{H'}]\big)\big(\num{H}{G_1}-\E[X_H]\big)\big(\num{H'}{G_1}-\E[X_{H'}]\big)\right]\nonumber\\
	&= \sum_{\substack{A \in \ff{V(G_0)}{v}\\B \in \ff{V(G_0)}{v} \\ C \in \ff{V(G_1)}{v}\\ D \in \ff{V(G_1)}{v}}} \E_{\dstruct}\big[(\isH{A}{H} - \E[\isH{A}{H}])(\isH{B}{H'}-\E[\isH{B}{H'}])(\isH{C}{H}-\E[\isH{C}{H}])(\isH{D}{H'}-\E[\isH{D}{H'}])\big].\label{eq:4term}
    \end{align}
    Rather than expand the product above into 16 terms, we liken it to a process of {\em inclusion-exclusion}.
    We have already seen that sums of the form $\sum\E[ \isH{A}{H} \isH{B}{H'}]$ are equivalent to weighted sums over subgraphs $J \subseteq H,H'$, which correspond to different ways of taking the union of the graphs $H,H'$ by choosing the intersection to be equal to the subgraph $J$.
    Specifically, the weight for each subgraph $J$ is proportional to $n^{-|V(J)|} p^{-|E(J)|}$.
    The term $\sum\E[\isH{A}{H}]\E[\isH{B}{H'}]$, on the other hand, corresponds to a similar sum, but the weight for the $J$th term is proportional merely to $n^{-|V(J)|}$.
    For terms in which $J$ contains no edges, these weights are equivalent; when $|E(J)| > 0$, the weight of the first term dominates the weight of the second.
    Therefore, when the second term is subtracted from the first, we are ``exclude'' the intersections $J$ which contain no edges (up to lower order terms).

    An analogous phenomenon transpires above.
    The leading term, $\E[\isH{A}{H}\isH{B}{H'}\isH{C}{H}\isH{B}{H'}]$ includes {\em all} ways of intersecting the two copies of $H$ and the two copies of $H'$.
    The following ``order-1 exclusion term'', $\E[\isH{A}{H}]\E[\isH{B}{H'}\isH{C}{H}\isH{D}{H'}]$, is subtracted to exclude terms in which the copy of $H$ in $G_0$ does not intersect with any of the other subgraphs on any edges.
    After subtracting all three of the order-1 exclusion terms, we must add the six ``order-2 inclusion terms'', (e.g. $\E[\isH{A}{H}]\E[\isH{B}{H'}]\E[\isH{C}{H}\isH{D}{H'}]$), and so on, until we are finally left with a summation in which every copy of $H$ and $H'$ must intersect with at least one other subgraph on at least one edge.\footnote{For example, if $H$ in $G_0$ contains the edge $(u,v)$, and $H'$ in $G_1$ contains the edge $(\pi(u),\pi(v))$, this term is included in the summation. On the other hand if $H$ in $G_0$ does not share any edges with the second copy of $H$ in $G_1$ or either copy of $H'$, the term is excluded.}

    To disambiguate which ``copy'' of $H$ and $H'$ we refer to (that in $G_0$ or $G_1$), we now attach subscripts $H_0$, $H_1$).
    We can write our expression as a sum over subgraphs $J \subseteq H_0,H_0'$, $K \subseteq H_1,H_1'$, and $L \subseteq H_0\cup_J H_0', H_1\cup_K H_1'$, which satisfy the condition $\cE(J,K,L)$ that $H_0,H_1,H'_0$ and $H_1'$ all contain at least one edge in $J \cup K \cup L$:
    \begin{align}
	\cref{eq:4term}
	&\le \sum_{\substack{J \subseteq H_0,H_0'\\ K \subseteq H_1,H_1' \\ L \subseteq H_0 \cup_J H_0', H_1 \cup_K H_1' \\ \cE(J,K,L)}} c_{J,K,L}(H,H')\cdot \ff{n}{4v - |V(J)| - |V(K)| - |V(L)|}\cdot (\noise p)^{4e}({\noise}p)^{- |E(J)| - |E(K)|}p^{- |E(L)|} .\nonumber
    \end{align}
    where we have let $c_{J,K,L}(H,H')$ to be the number of ways one can obtain a graph by (a) taking two ordered, labeled copies of $H$ and intersecting them on a subgraph isomorphic to $J$, (b) taking two ordered, labeled copies of $H'$ and intersecting them on a subgraph isomorphic to $K$, and (c) the intersection of the above two intersecting parts is isomorphic to $L$. 
For our bounds, we will ultimately only use that $c_{J,K,L}(H,H') \le \kappa_J \kappa_K \kappa'_J \kappa'_K \kappa^{J}_L\kappa^K_L\cdot \aut(J)\cdot\aut(K)\cdot\aut(L) = O(1)$ is independent of $n$. 

    To obtain the expression for the variance, we now subtract the second half of the summand from \Cref{eq:bigexp}, $\E_{\dstruct}[p_H(G_0,G_1)]\cdot \E_{\dstruct}[p_{H'}(G_0,G_1)]$.
    This term, too, can be written as a summation over subgraphs $J,K,L$, in which there must be an intersection between $H_0,H_1$ and between $H_0',H_1'$.
    Since it is only re-weighted according to the intersections of $H_0,H_1$ with each other and $H_0',H_1'$ with each other, subtracting this term will fully cancel any combination of $J,K,L$ in which the exclusive intersections occur between copies of the identical graph (i.e. both copies are $H$ or $H'$).
    Therefore, letting $\cF(J,K,L)$ be the condition that the exclusive intersections in $J\cup K \cup L$ are not between $H_0,H_1$ and $H_0',H_1'$, we have that
    \begin{align}
	&\Var(P(G_0,G_1))\nonumber\\
	&\le \frac{1}{|\cH|^2} \sum_{H,H' \in \cH} \sum_{\substack{J \subseteq H_0,H_0' \\ K \subseteq H_1,H_1'\\L \subseteq H_0 \cup_J H_0', H_1\cup_K H_1'\\\cE(J,K,L)\wedge \cF(J,K,L)}} c_{J,K,L}(H,H') \cdot \ff{n}{4v - |V(J)| - |V(K)| - |V(L)|}\cdot ({\noise}p)^{4e}({\noise}p)^{- |E(J)| - |E(K)|}p^{- |E(L)|}.
    \end{align}

Now we claim that if $H\neq H'$, then the union of $J,K,L$ must be strictly sparser than $H$ or $H'$, or equivalently, that the graph union $H_U$ of $H_0,H_1,H_0',H_1'$ given by identifying the edges together is strictly denser than $H$ and $H'$.
The conditions $\cE(J,K,L)$ and $\cF(J,K,L)$ ensure that we cannot have $H_U$  isomorphic to a disjoint union of copies of $H$ and $H'$.
At this point, we appeal to the following claim:
    \begin{claim}\label{claim:intersections}
	Suppose that $A$ is a graph of density $\alpha$, and let $B$ be a strictly balanced graph of density $\beta \le \alpha$.
	Then if $J \subset A,B$ is a non-empty proper subgraph of both $A$ and $B$, and we form the graph $A \cup_J B$ by identifying the vertices corresponding to $J$ in $A$ and $B$, then $A \cup_J B$ has density strictly larger than $\beta$.
    \end{claim}
    \begin{proof}
	We have that the density of $A \cup_J B$ is given by
	\[
	    \frac{|E(A)| + |E(B)| - |E(J)|}{|V(A)| + |V(B)| - |J|} = \frac{\alpha|V(A)| + \beta|V(B)| - |E(J)|}{|V(A)| + |V(B)| - |J|}.
	\]
	Since $B$ is strictly balanced, by definition $J\subset B$ must be sparser than $B$.
	Thus, $|E(J)| < \beta |J|$, and the conclusion follows.
    \end{proof}
We can apply this claim iteratively if we take one intersection at a time (first intersecting a copy of $H'$ with a copy of $H$). That is, the density of $H_U$ is strictly larger than $(1-\delta)^{-1}$.
Therefore, $|E(L)| + |E(J)| + |E(K)| \le \density(|V(J)| + |V(K)| + |V(L)|) - \min_{J,K,L}\beta_{J,K,L}$, where $\beta_{J,K,L}$ is a positive number independent of $n$.
So when $H \neq H'$, we have that the value of the summation is bounded by
\begin{align*}
    &\le \sum_{\substack{J \subseteq H_0,H_0' \\ K \subseteq H_1,H_1'\\L \subseteq H_0 \cup_J H_0', H_1\cup_K H_1'\\\cE(J,K,L)\wedge \cF(J,K,L)}} c_{J,K,L}(H,H')\cdot \ff{n}{4v - |V(J)| - |V(K)| - |V(L)|}\cdot \noise^{2e} p^{4e - \density(|E(J)| + |E(K)|+ |E(L)|) + \min_{J,K,L}\beta_{J,K,L}}\\
    &\le n^{4v} {\noise}^{2e} p^{4e+\min_{J,K,L} \beta_{J,K,L}} \cdot \sum_{i=2}^{3v}\left(\frac{4v^6}{np^{\density}}\right)^i
    \, = \,
    n^{4v} {\noise}^{2e} p^{4e+\min_{J,K,L} \beta_{J,K,L}} \cdot \rho^2 \cdot \frac{\rho^{3v-1} - 1}{\rho - 1}.
\end{align*}
where we have bounded the $c_{J,K,L}(H,H')$ using the same trick as we did in~\Cref{eq:cJ upper bound} and finally applied \Cref{eq:cond-p}.

When $H = H'$, the above applies except for the case when $J = \emptyset, K = \emptyset$, and $L$ is the disjoint union of $H$'s, so that $H_0$ intersects fully with $H'_0$ and the identical thing happens in $G_1$, and also when all of the copies intersect fully.
For those two terms, the value is $n^{2v}({\noise}p)^{2e} + n^v p^e {\noise}^{2e}$.

Thus, our bound on the variance is
\begin{align}
    \Var_{\dstruct}(P(G_0,G_1))
    \le n^{4v}\noise^{2e}p^{4e + \min_J\beta_J} \cdot \rho^2 \cdot \frac{\rho^{3v -1} -1}{\rho-1} + \frac{1}{|\cH|} \cdot \left(n^{2v}(\noise p)^{2e} + n^v p^e \noise^{2e}\right).
\end{align}
Here, the first term has magnitude $O(p^{\min_{J,K,L} \beta_{J,K,L}}/{\noise}^{2e})\cdot \E[\num H G_0]^4 = o(1)$, since $p = o(1)$ and $\min_{J,K,L} \beta_{J,K,L},\rho,\noise,\E[\num H G_0] = O(1)$.
Since $|\cH| = (c_1 v)^{c_2 e} > \left(\frac{400}{\noise^2}\right)^e$,
the second term is at most $\tfrac{1}{400}\E[P(G_0,G_1)]^2 + \tfrac{\noise^{3e}}{400}\E[\num H G_0]$.
Therefore for $n$ sufficiently large we have that
\[
    \E_{\dstruct}[P(G_0,G_1)] \ge 40 \cdot \Var_{\dstruct}(P(G_0,G_1))^{1/2}.
    \]
This completes the proof.
\end{proof}

\subsection{Putting everything together}

Given \Cref{thm:variance}, we can complete our algorithm and prove  \Cref{thm:distinguish}:

\begin{proof}[Proof of \Cref{thm:distinguish}]
    The algorithm is as follows: compute the value of $P(G_0,G_1)$, and if the value is larger than $\frac{1}{3}\E_{\dstruct}[P(G_0,G_1)]$, say that $G_0$ and $G_1$ come from $\dstruct$, otherwise say that $G_0$ and $G_1$ come from $\dnull$.
    Clearly the algorithm runs in polynomial time: it requires counting the number of occurrences of constant-sized subgraphs in two graphs of size $n$.

    The claim is that in both cases, we are correct with probability at least $0.99$.\footnote{Notice that had we taken $v = \omega(1)$, we could amplify this probability to $1-o(1)$.}
    The reason is that if we apply Chebyshev's inequality to $P(G_0,G_1)$, by \Cref{thm:variance} and \Cref{lem:dist-exp} we have that in the structured distribution, $P(G_0,G_1)$ is unlikely to be too small:
    \[
	\Pr_{\dstruct}\left( P(G_0,G_1) < \frac{1}{2} \E_{\dstruct}[P(G_0,G_1)] \right) \le \frac{4}{1600}.
    \]
At the same time, Chebyshev's inequality also guarantees that $P(G_0,G_1)$ is not too large with good probability:
    \[
	\Pr_{\dnull}\left( P(G_0,G_1) \ge \frac{1}{4} \E_{\dstruct}[P(G_0,G_1)] \right) \le \frac{16}{1600}.
    \]
   This concludes the proof.
\end{proof}

\section{Recovering the ground truth permutation} \label{sec:recovery}

In this section, we will solve the following recovery problem:
\begin{problem}\label{prob:recovery}
    Let  $G_0,G_1$ be $n$-vertex graphs sampled from $\dstruct(n,p;\noise)$ according to the following procedure: God samples a ``base'' graph $\Base\sim \bb{G}(n,p)$, then twice independently subsamples each edge of $\Base$ with probability $\noise$ to form the graphs $G_0,\tilde G_1$, and finally applies a random permutation $\pi^*$ to $\tilde G_1$ to obtain $G_1$.
    We are given $G_0,G_1$ without knowledge of $\pi^*$, our goal is to recover $\pi^*$.
\end{problem}

We give a quasipolynomial time algorithm for the above problem; below we state the precise guarantees.

\begin{theorem}[Recovery]
\label{thm:recovery}
    For an arbitrarily small constant $\delta>0$, if $(G_0,G_1,\pi) \sim \dstruct(n,p;\gamma)$ so that $p \in [\tfrac{ n^{\varepsilon}}{n},\tfrac{n^{1/153}}{n}]$ where\footnote{We did not make an effort to optimize the constant in the exponent $n^{1/153}$.} $\varepsilon = \Omega\left(\left(\frac{\log\log^4 n}{\log n}\right)^{\frac{1}{2}}\right)$ or $p \in [n^{-1/3},n^{-\delta}]$, and $\noise\ge\left(\frac{1}{\log n}\right)^{o(1)}$, then there is a randomized $n^{O(\log n)}$-time algorithm $A$ that recovers $\pi$ exactly with high probability over the input $(G_0,G_1)$ and over the choices of $A$.
\end{theorem}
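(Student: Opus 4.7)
The approach is to lift the distinguishing algorithm of Section~\ref{sec:distinguish} to a recovery algorithm by enlarging the test graphs from constant to logarithmic size, so that their joint occurrences cover every vertex of $G_0$ and $G_1$. The key mechanism is that if $H$ has no non-trivial automorphism and appears both as a subgraph of $G_0$ on some vertex set $S$ and as a subgraph of $G_1$ on $\pi^*(S)$, then the pair of labeled embeddings uniquely pins down the restriction $\pi^*|_S$. Using a large family $\cH$ of such ``black swans'' on $v = \Theta(\log n)$ vertices and $e \approx \density v$ edges, every planted co-occurrence of $H$ in $(G_0,G_1)$ contributes a ``vote'' for a partial permutation, and $\pi^*$ is recovered by plurality vote.

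\medskip

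\textbf{Algorithm.} Fix $v = \Theta(\log n)$ sufficiently large and invoke a logarithmic-$v$ analogue of the test family of Theorem~\ref{thm:graph family} (constructed in Section~\ref{sec:test-graphs}) to obtain a family $\cH$ of $v$-vertex, $e$-edge graphs that are strictly balanced, automorphism-free, and pairwise intersection-balanced, with parameters tuned so that $\mu := \E_{\Base \sim \bb G(n,p)}[\num{H}{\Base}] = o(1)$ and $|\cH|\cdot \mu\cdot \noise^{2e}\cdot v \gg n\cdot \polylog(n)$. For each $H \in \cH$, enumerate in time $n^{O(v)} = n^{O(\log n)}$ every labeled copy $\phi:V(H) \to V(G_0)$ of $H$ in $G_0$ and every labeled copy $\psi:V(H) \to V(G_1)$ of $H$ in $G_1$; for each pair $(\phi,\psi)$ and each vertex $x \in V(H)$, increment a counter $V[\phi(x),\psi(x)]$. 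Finally, output $\hat\pi(u) = \arg\max_w V[u,w]$.

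\medskip

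\textbf{Analysis.} Call a triple $(H,\phi,\psi)$ \emph{planted} if $\psi = \pi^* \circ \phi$ and both copies arose from subsampling a common copy of $H$ in $\Base$; otherwise call it \emph{spurious}. A routine first-moment calculation shows that the expected number of planted triples through a fixed vertex $u$ is $\approx |\cH|\, v\, \mu\, \noise^{2e}/n = \omega(\polylog(n))$, whereas the expected number of spurious triples voting $u \mapsto w$ for any $w \ne \pi^*(u)$ is smaller by a factor $\mu = o(1)$. Thus in expectation the correct vote $V[u,\pi^*(u)]$ dominates every incorrect $V[u,w]$ by an $\omega(1)$ factor, and provided we can upgrade this to a high-probability statement, the plurality rule identifies $\pi^*(u)$ for every $u$ simultaneously and yields exact recovery.

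\medskip

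\textbf{Main obstacle.} The hard part is amplifying this expectation-level gap to a high-probability statement after a union bound over all $n^2$ candidate pairs $(u,w)$: for each pair we need a tail bound of the form $\Pr[\,V[u,w] > \text{threshold}\,] \le n^{-\omega(1)}$, which via Chebyshev demands $\Var(V[u,w]) \ll \E[V[u,\pi^*(u)]]^2 / n^2$. Mirroring the $\dnull$ variance analysis in the proof of Theorem~\ref{thm:variance}, $\Var(V[u,w])$ decomposes into a sum over shared subgraph patterns arising between two test graphs $H,H' \in \cH$; the \emph{intersection-balance} property (strictly stronger than per-graph strict balance) ensures each such pattern contributes a term that decays geometrically in $v$, and the super-exponential family size $|\cH| \ge v^{\Omega(v)} = n^{\Omega(\log\log n)}$ supplies the extra concentration needed to beat the $n^2$ union bound. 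It is precisely this step that forces $v = \Theta(\log n)$, and hence the quasipolynomial runtime $n^{O(\log n)}$; extending the structured-distribution variance bounds of Theorem~\ref{thm:variance} to logarithmic-size test subgraphs, while tracking the dependence on $\noise = (\log n)^{-o(1)}$, is where the bulk of the technical work lies.
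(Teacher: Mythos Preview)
Your high-level picture---enlarge the test graphs to $v=\Theta(\log n)$ so their occurrences cover all vertices, then read off $\pi^*$---matches the paper's intuition, but the paper does \emph{not} recover $\pi^*$ directly by plurality vote. It uses a two-stage algorithm (Algorithm~\ref{alg:recover}): the test-graph step produces only a \emph{partial solution} in the sense of Definition~\ref{def:partial}---a one-to-one map defined on a $\tilde\Omega(1)$-fraction of vertices and correct on a $1-o(1)$ fraction of those (Lemma~\ref{lem:partial recovery})---and then a separate percolation-style \emph{boosting} procedure (Algorithm~\ref{algo:boosting}, Lemma~\ref{lem:boost}) completes and repairs this map to obtain $\pi^*$ exactly. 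Your proposal omits the boosting stage entirely.

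This is a genuine gap, and it bites exactly where you flag the ``main obstacle.'' You ask Chebyshev to deliver $\Pr[\,V[u,w]>\text{threshold}\,]\le n^{-\omega(1)}$, which needs $\Var/\E^2\ll n^{-2}$. But the paper's second-moment computation for the analogous count $N_u$ (Lemma~\ref{lemma:recovery cover vertex}) yields only $\Var(N_u)=o(\E[N_u]^2)$: after subtracting the mean squared, the $H\neq H'$ cross terms leave a residual of order $(nq^{d'/2}/v^2)\cdot\E[N_u]^2$ (dense case) or $(nq^2/v^2)\cdot\E[N_u]^2$ (sparse case). In the dense regime $d'$ is chosen with $nq^{d'/2}$ only polylogarithmically below $1$ (see the parameter verification in the proof of Lemma~\ref{lem:partial recovery}), so $\Var/\E^2=(\log n)^{-O(1)}$, nowhere near $n^{-2}$. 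The super-exponential family size $|\cH|\ge v^{\Omega(v)}$ kills the $H=H'$ diagonal term, but it does nothing to the $H\neq H'$ cross term that carries the $nq^{d'/2}$ factor---that term is governed by the intersection-balance constant, not by $|\cH|$. This is precisely why the paper settles for per-vertex success probability $1-o(1)$, uses an averaging argument (Claim~\ref{claim:high-frac}) to conclude that most vertices are matched, tolerates an $o(1)$ error rate (Claim~\ref{claim:low-error}), and then hands the result to the boosting lemma. Without boosting, your union bound over $n^2$ pairs does not close.
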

Note that \cref{thm:recovery} implies \cref{thm:recovery:intro}.\\

Our algorithm first finds a rough estimate of $\pi^*$ using test subgraphs (as we did for our distinguishing algorithm), then completes the estimate via a boosting procedure.
We fill in the details of our strategy in \cref{sec:test-rec,sec:alg}.
In \cref{sec:unique,sec:coverin,sec:partial,sec:moments}, we prove that our algorithm successfully estimates $\pi^*$, and in \cref{sec:boost} we give the boosting algorithm.
Finally, we  tie everything together and prove \cref{thm:recovery} in \cref{sec:putting-together}.

\subsection{Test subgraphs for recovery}\label{sec:test-rec}

For the purposes of distinguishing, we used the counts of subgraphs belonging to a specially constructed ``test set'' of strictly balanced graphs with no non-trivial automorphisms.
However, we only used subgraphs of size $O(1)$ to distinguish, each of which crucially appeared only ${O}(1)$ times.
For a fixed constant size $s$, there are only $2^{s^2}$ possible subgraphs on $s$ vertices; if each appears only with multiplicity ${O}(1)$, most vertices in the graph will not participate in such subgraphs.

To recover the permutation $\pi^*$, we will use the same ``test set'' of subgraphs we used for distinguishing, only now we will choose the subgraphs to have size $\Omega(\log n)$ (and consequently we will have a larger test set, so that more vertices are covered).
Also, we will set their density so that every subgraph which does appear in both $G_0$ and $G_1$ almost certainly appears exactly once.
If we see an occurrence of such a subgraph $H$ in both $G_0$ and $G_1$, we can be confident that the vertices of $H$ in $G_0$ and $G_1$ correspond to each other in $\pi^*$.
Finally, we fix any mis-matched vertices with a boosting procedure.

We require stronger conditions from our test set, which we delineate below.
\begin{theorem}[see \Cref{thm:graph family,prop:large-d,prop:small-d}]\label{thm:meta-graph-fam}
Let $d' = d + \lambda$ for an integer $d \ge 2$ and a rational number $\lambda  \in [0,1]$.
Let $v,e$ be integers such that $e=\frac{d'}{2}v$.
Then so long as $d = 2$ and $\lambda \le \frac{1}{76}$ or $d \ge 6$ (with arbitrary $\lambda$), then there exists a test set of graphs $\cH_{d'}^v = \{H\}$, each on $v$ vertices and containing $e$ edges, which satisfies the following properties:
\begin{enumerate}
    \item Every $H \in \cH$ is {\em strictly balanced}, in a strong sense: for any $\alpha \in (0,1)$, every subgraph of $H$ on at most $\alpha v$ vertices has density $(\frac{d'}{2} - f(\alpha,d'))$ for an explicit function $f(\alpha,d') > 0$.
    \item Every $H\in\cH$ has no non-trivial automorphisms.
    \item The size of the family is $|\cH|\geq  v^{\Omega(C_{d'}\cdot v)}$, for a constant $C_{d'}$ depending on $d'$.
    \item For every pair of distinct graphs $H,H' \in \cH$, every common subgraph $J \subset H,H'$ is less dense than $H$ and $H'$ in a strong sense: $|E(J)|\leq (\frac{d'}{2} - g(d'))\cdot|V(J)|$ for an explicit function $g(d') >0$.
\end{enumerate}
\end{theorem}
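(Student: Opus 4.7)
The plan is to prove the theorem by case analysis on $d$, constructing the family $\cH_{d'}^v$ via two different random graph models depending on the regime. For integer $d \geq 6$ with arbitrary $\lambda \in [0,1]$, I would take the family to consist of graphs sampled from a random graph model with $v$ vertices and exactly $e = \frac{d'}{2}v$ edges, where non-integrality of $d'$ is handled by allowing a small number of vertices to have degree $d+1$ rather than $d$. For the case $d = 2$ with $\lambda \leq \tfrac{1}{76}$, I would use the subdivision construction sketched earlier in the paper: take a random 3-regular graph $H'$ on $v' = \Theta(v)$ vertices, and subdivide each edge by inserting intermediate vertices so as to bring the density down to exactly $\tfrac{d'}{2} = 1 + \tfrac{\lambda}{2}$; the correct number of insertions per edge is $\Theta(1/\lambda)$, with a fine-tuning step (e.g., subdividing a small fraction of edges one less time) to hit the target density exactly.

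For the large-$d$ construction, I would verify each property via first-moment arguments over the random graph model. Property 1 (strong strict balance) follows by bounding the expected number of subgraphs on $\alpha v$ vertices of density $\tfrac{d'}{2} - f(\alpha, d')$; choosing $f$ appropriately makes the expected count $o(1)$ even after a union bound over subgraph sizes. Property 2 (trivial automorphism group) is a classical fact about random regular graphs for $d \geq 3$. Property 3 (largeness) follows from the standard count that the number of $d$-regular graphs on $v$ vertices is $v^{\Omega(v)}$. Property 4 is the most delicate: I would bound the expected number of pairs $(H,H') \in \cH \times \cH$ that share a common subgraph of density $\tfrac{d'}{2} - g(d')$, and remove offending members; the quantitative gap $g(d')$ must be chosen small enough for the first-moment argument to succeed while still positive, and the removal retains a family still of size $v^{\Omega(v)}$.

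For the small-$d$ subdivision construction, the central observation is that the densest subgraphs of a subdivided graph \emph{respect} the underlying base graph: a density-maximizing subgraph must, for each base edge, either include the entire subdivided path or omit it, since a partial path contributes edges and vertices at rate exactly $1$, strictly below the target density $\tfrac{d'}{2} > 1$. This observation reduces property 1 for the subdivided graph $H$ to an expansion property of $H'$ (that every $\alpha v'$-vertex subgraph of $H'$ has edge count bounded away from $\tfrac{3}{2}\alpha v'$), which holds a.a.s.\ for random 3-regular graphs. It similarly reduces property 4 for pairs $(H,H')$ to the statement that two independent random 3-regular graphs do not share large common subgraphs. Properties 2 and 3 inherit directly: any automorphism of $H$ must permute subdivided paths consistently and therefore descends to an automorphism of $H'$ (which is trivial a.a.s.), and $|\cH| \geq (v')^{\Omega(v')} = v^{\Omega(v)}$ since there are super-exponentially many 3-regular graphs on $v'$ vertices.

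The main obstacle is the quantitative intersection balance (property 4) with an explicit gap $g(d') > 0$, especially in the subdivision regime, which is what drives the restriction $\lambda \leq \tfrac{1}{76}$. Here one must carefully enumerate the possible shapes of common subgraphs between two distinct members of $\cH$. Using the ``respecting'' property, the problem reduces to controlling how densely a subgraph of a random 3-regular $H_1'$ can appear inside an independent random 3-regular $H_2'$, and a careful union bound over the combinatorial types of shared structure shows that the number of bad pairs is much less than $|\cH|^2$, so a cleanup step yields a family of size $v^{\Omega(v)}$ satisfying property 4. Tracking explicit constants through this union bound—the edge-density slack, the exponent in the subgraph count, and the loss from the cleanup step—is the technical heart of the argument, and the numerology forces the restriction on $\lambda$ in the small-$d$ case.
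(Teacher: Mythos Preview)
Your proposal is essentially correct and matches the paper's two constructions: for $d \ge 6$ the paper uses a near-regular random graph (concretely, a $d$-regular random graph from the already-constructed family $\cH_d^v$ plus a random matching on $\lambda v$ vertices), and for $d' = 2+\lambda$ it uses exactly your subdivision-of-a-random-cubic construction, with the same ``respecting'' observation (their Claim~4.14) reducing property~1 to expansion of the base cubic graph.

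Two minor discrepancies are worth flagging. First, in the large-$d$ regime the paper's construction is explicitly \emph{layered}: it builds on the integer-degree family $\cH_d^v$ (itself obtained by the meta-graph independent-set argument you describe) and then adds a matching, re-running the independent-set argument only over choices of matching; the strict-balance property~1 is obtained not by a fresh first-moment computation but by invoking Bollob\'as's edge-expansion theorem for random regular graphs as a black box. Your direct first-moment approach should also work, but the layered route gives cleaner control of the constants. Second, you attribute the constraint $\lambda \le \tfrac{1}{76}$ to the intersection-balance property~4, but in the paper the binding constraint is actually property~1 (quantitative strict balance): the proof of Lemma~4.12 needs $\frac{152}{300}\theta\lambda^2 \le \frac{1}{60}\theta\lambda - \frac{1}{100}\theta\lambda$, which forces $\lambda \le \tfrac{1}{76}$, whereas property~4 only needs the looser $\lambda < \tfrac{11}{109}$. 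The technical heart of the sparse case is thus the tension between the expansion of the underlying cubic graph and the unevenness of the $k$- versus $(k{+}1)$-path lengths, not the pairwise intersection bound.
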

Because our constructions differ depending on the value of $d'$, we have stated the theorem in generality here; for precise statements, see \cref{sec:test-graphs}.

\subsection{The recovery algorithm}\label{sec:alg}
The recovery algorithm consists of four steps.
\begin{algorithm}[Recovery]\label{alg:recover}
    On input $(G_0,G_1)\sim \dstruct(n,p;\noise)$:
\begin{enumerate}
    \item Generate a family of test graphs $\cH := \cH_{d'}^v$ that satisfies the properties guaranteed in \cref{thm:meta-graph-fam}, choosing $v,d'$ as follows:
	\begin{itemize}
	    \item If $p = n^{\delta-1} \in [\frac{n^\varepsilon}{n},\frac{n^{1/153}}{n}]$, choose $v = \Theta(\log n)$ to be the smallest even integer so that $\lambda v$ is also an integer, for some $\lambda$ chosen from the window $\lambda \in \left(\frac{2\delta}{1-\delta},\frac{2\delta}{1-\delta} + \frac{\log\log n}{\log n}\right)$.
		Choose $d' = 2 + \lambda$.
	    \item If $p=n^{\delta-1}\in[n^{-1/3},n^{-\epsilon}]$: Choose $v = \Theta(\log n)$ to be the smallest even integer so that there is some $d'\in \left(\frac{2}{1-\delta},\frac{2}{1-\delta}+\frac{\log\log n}{4\log n}\right)$, so that $(d'-\lfloor d'\rfloor)v$ is also an integer.
	\end{itemize}
	In \cref{lem:partial recovery}, we verify that these conditions are feasible.
    \item For each vertex $u \in V(G_0)$, we find all $H \in \cH$ such that $u$ is incident on a copy of $H$ in $G_0$, and so that $H$ also appears in $G_1$.
	If $u$ is incident on at least $\frac{1}{2}|\cH|\cdot v \cdot n^{v-1} q^{e}$ distinct $H$, then we choose one uniformly at random, and set $\pi(u)$ equal to the corresponding vertex in the copy of $H$ in $G_1$.
	If a collision occurs so that $\pi(u) = \pi(u')$ for $u \neq u'$, ties are broken arbitrarily.
    \label{step:match}
    \item Run a boosting algorithm (\cref{algo:boosting}, see~\cref{sec:boost}) with the partial map obtained from step 3 .
\end{enumerate}
Output the resulting permutation $\pi$.
\end{algorithm}
\medskip

To prove that \cref{alg:recover} works, we'll establish the following three claims:
\begin{itemize}
    \item The probability that a vertex $u \in V$ appears in a test graph $H \in \cH$ which appears more than once in the base graph $\Base$, and also appears at least once in both $G_0,G_1$, is small.
    \item Each vertex $u \in V$ appears in some test graph $H \in \cH$ in both $G_0,G_1$ with large enough probability.

	Since most $H \in \cH$ that appear in both $G_0,G_1$ will appear uniquely in the base graph $\Base$, together, these claims imply that in step \ref{step:match} we will match many vertices to each other, and make few mistakes.
    \item Given a one-to-one map between an $\tilde\Omega(1)$-fraction of the vertices of $G_0,G_1$ which is consistent with $\pi^*$ on all but $o(1)$ of the matches, the boosting algorithm recovers the optimal permutation $\pi^*$.
\end{itemize}

\subsection{Few test graphs introduce errors in the algorithm}\label{sec:unique}
In this subsection, we will prove that few test graphs $H \in \cH$ appear more than once in the base graph, and also survive the subsampling in both $G_0$ and $G_1$.
We will bound the probability of this event in a straightforward manner, taking advantage of properties of the test graphs in $\cH$.

\begin{lemma}\label{lem:recovery uniquness single subgraph}
    Suppose that $G_0,G_1 \sim \dstruct(n,p;\gamma)$ are both subsampled from the base graph $\Base = (V,E)$.
    Define $q = p\noise^2$.
    Suppose we have chosen $\cH$ to be a test set comprised of graphs on $v$ vertices as constructed in \cref{sec:test-graphs}, so that $H \in \cH$ has average degree $d'$, and so that $np^{d'/2} < 1$.

    For a vertex $u \in V$, let $B_u$ be the number of ``bad'' $H \in \cH$ that contain $u$, such that $H \in \cH$ appears at least twice in the base graph $\Base$, and that a copy of $H$ survives in both $G_0$ and in $G_1$.

    If $d' = 2 + \lambda \le 2 + \frac{1}{76}$ and further $v \log^2 v \ll \lambda^2 \log^2 n$, $\frac{1}{v^8} < np^{d'/2}$ and $p = n^{\delta - 1}$ for $\delta < \frac{1}{3}$, then:
	    \[
	    \E[B_u]
	    \le |\cH|\cdot v\cdot n^{v-1} q^e\cdot O\left(\frac{1}{\sqrt{v}}\right),
	    \]

    And if $d' \ge 6$, then if $2v^{2+2(d'+2)} q^{\frac{1}{50}}\le v^{2(d'+2)} nq^{d'/2} \le \frac{1}{2}q^{-\frac{1}{50}}$, then:
    \[
	\E[B_u]
	    \le |\cH|\cdot v\cdot n^{v-1} q^e\cdot  16 \left(v^{2(d'+2)} \cdot np^{\frac{1}{2}d' + \frac{1}{50}}\right).
    \]
\end{lemma}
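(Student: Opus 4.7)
The plan is to bound $\E[B_u]$ via linearity of expectation over $H\in\cH$, and for each $H$ enumerate \emph{witness configurations}: ordered pairs of distinct copies $(C,C')$ of $H$ in the complete graph on $n$ vertices with $C,C'\subseteq \Base$, $u\in C$, and a copy of $H$ surviving in each of $G_0,G_1$. The bad event breaks into two essentially symmetric sub-cases---either (i) $C\subseteq G_0,\ C'\subseteq G_1$, or (ii) $C\subseteq G_0\cap G_1$ and $C'\subseteq \Base$---each contributing a factor $\gamma^{2e}$ conditional on $C,C'\subseteq \Base$, by the conditional independence of the $G_0$ and $G_1$ subsampling. Up to a factor $2$ collecting these cases, a union bound over witness pairs gives the starting inequality.

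Fixing $H$ and grouping witness pairs by the isomorphism type $J$ of the overlap $C\cap C'$ with $|V(J)|=j,\ |E(J)|=\ell$, I follow the re-grouping trick used in the proof of \cref{lem:dist-exp}. The number of ordered pairs intersecting on a copy of $J$ with $u\in C$ is at most $v\cdot \num{J}{H}^{2}\,\aut(J)\cdot \ff{n-1}{2v-j-1}\le v\cdot v^{2j}\cdot n^{2v-j-1}$, using the crude bound from \cref{eq:cJ upper bound} together with $\aut(H)=1$. The probability that $C\cup C'\subseteq \Base$ is $p^{2e-\ell}$. Factoring out $v\cdot n^{v-1}q^{e}$ (with $q=p\gamma^{2}$), this leads to
$$\Pr[H\text{ bad for } u]\;\le\; 2v\cdot n^{v-1}q^{e}\cdot \sum_{J\subsetneq H} v^{2|V(J)|}\,n^{v-|V(J)|}\,p^{e-|E(J)|}.$$
Summing over $H\in\cH$ produces the universal prefactor $|\cH|\cdot v\cdot n^{v-1}q^{e}$ appearing in both conclusions; what remains is to control the inner sum $\Sigma(H)$ by the claimed quantity in each regime.

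To bound $\Sigma(H)$ I invoke the strong strict-balance guarantee of \cref{thm:meta-graph-fam}: every $J\subseteq H$ with $|V(J)|=j\le \alpha v$ satisfies $|E(J)|\le \bigl(\tfrac{d'}{2}-f(\alpha,d')\bigr)\,j$ for an explicit deficit $f>0$ supplied by the construction. Since $e=\tfrac{d'}{2}v$, this gives $n^{v-j}\,p^{e-|E(J)|}\le (np^{d'/2})^{v-j}\cdot p^{f(j/v,d')\,j}$, so $\Sigma(H)$ becomes a sum of the form $\sum_{j} v^{2j}\cdot (np^{d'/2})^{v-j}\cdot p^{f(j/v,d')\,j}$, in which $(np^{d'/2})^{v-j}$ shrinks for small $j$ while $p^{fj}$ shrinks for large $j$. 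In the regime $d'\ge 6$, $f$ is uniformly bounded below by $g(d')>0$, the sum behaves geometrically, and under the hypothesis $v^{2(d'+2)}np^{d'/2}\le \tfrac{1}{2}q^{-1/50}$ it collapses to its leading term of order $v^{2(d'+2)}np^{d'/2+1/50}$. In the sparse regime $d'\le 2+\tfrac{1}{76}$, $f$ degenerates as $d'\to 2$, so one must optimize $j$ explicitly and invoke both $v\log^{2}v\ll \lambda^{2}\log^{2}n$ and $np^{d'/2}>v^{-8}$ to obtain the sharper $O(v^{-1/2})$ bound.

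The principal obstacle is the sparse case: the density gap supplied by the construction scales with $\lambda$, so the behavior of $\Sigma(H)$ is parametrically fragile, and one must match the explicit form of $f(\alpha, 2+\lambda)$ coming from \cref{sec:test-graphs} against the hypothesized parameter window to show the optimization over $j$ really produces the $v^{-1/2}$ decay rather than an $\Omega(1)$ quantity. The dense case $d'\ge 6$, by contrast, reduces to a routine geometric-series estimate once the uniform deficit $g(d')$ is in hand.
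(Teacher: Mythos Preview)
Your high-level plan---union-bound over witness pairs $(C,C')$, group by the intersection $J$, then control $\sum_{J\subsetneq H}\num{J}{H}^2\aut(J)\,n^{-|V(J)|}p^{-|E(J)|}$ via the quantitative strict-balance property---matches the paper's. The initial reduction to \cref{eq:need-bound} is essentially identical (your constant $2$ versus the paper's $4$ is just a looser case count on which of $C,C'$ survives in which of $G_0,G_1$, but this is immaterial).

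The genuine gap is your commitment to the crude bound $\sum_{|V(J)|=j}\num{J}{H}^2\aut(J)\le v^{2j}$ from \cref{eq:cJ upper bound}. This bound is fatal when $j$ is close to $v$. In that regime the density deficit $f(j/v,d')$ vanishes like $(v-j)/v$ (see \cref{lem:qbalsparse} for the sparse case and \cref{claim:densebalance} for the dense case), so $p^{f\cdot j}$ contributes only $p^{O(1)}$ to the $j=v-1$ term, whereas $v^{2j}\approx v^{2v}=\exp(\Theta(\log n\log\log n))$ overwhelms any fixed power of $p$. Your sum $\Sigma(H)$ is therefore \emph{not} $o(1)$ under the stated hypotheses, in either regime.

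What the paper does instead (\cref{lem:sparseequal,lem:denseequal}, via \cref{lem:qbalsparse removing aut} and \cref{claim:auto}) is to replace the $v^{2j}$ bound by $v^{O(v-j)}$ for large $j$. The key observation is that $H$ has no non-trivial automorphisms and bounded maximum degree $\Delta$, so a subgraph $J$ obtained by deleting $v-j$ vertices (hence at most $\Delta(v-j)$ edges) has $\aut(J)\le v^{2\Delta(v-j)}$; combined with the count $\binom{v}{v-j}\le v^{v-j}$ of vertex-induced subgraphs of size $j$, this gives $\sum_{|V(J)|=j}\num{J}{H}^2\aut(J)\le v^{2(\Delta+1)(v-j)}$. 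With this refinement the large-$j$ tail becomes a genuinely decreasing geometric series (under the hypothesis $v^{2(\Delta+1)}np^{d'/2}\le \tfrac12 p^{-1/50}$ in the dense case, and its analogue in the sparse case), and the dominant term at $j=v-1$ is exactly the $v^{2(d'+2)}np^{d'/2+1/50}$ you quote. Without this automorphism-counting step your argument does not close.
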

\begin{proof}
    Fix some $H \in \cH$, and let $B^H_u$ be the bad event that the vertex $u$ is contained in $H$, and $H$ appears more than once in the base graph $\Base$, and also that at least one copy of $H$ survives in both $G_0$ and $G_1$.
    Let $\isH{H}{S}$ be the indicator that the ordered vertex set $S \subset V$ contains a labeled copy of $H$ in $\Base$.
    If there is a copy of $H$ on the ordered vertex set $S \subset V$ where $S$ contains $u$, and also a distinct copy of $H$ on a different ordered set of vertices $S' \subset V$ so that $|S \cap S'| < v$, then the product of indicators $\isH{H}{S}^{(0)} \cdot \isH{H}{S'}^{(1)}$ yields 1.
    Now, define the indicators $\Ind_{S,H}^{G_0},\Ind_{S,H}^{G_1}$ to be the indicators that $H$ survives on $S$ in $G_0,G_1$ respectively.

    The probability that $u$ is contained in $H$ which appears more than once in $\Base$ and which has at least one copy surviving in both $G_0,G_1$ is at most
    \begin{align}
	\Pr[B^H_u]
	&\le \sum_{\substack{S,S' \subseteq V\\|S| = |S'| = v\\ |S \cap S'| < v\\u \in S}} \E\left[\isH{H}{S}\cdot\isH{H}{S'}\cdot\left(\Ind_{S,H}^{G_0} + \Ind_{S',H}^{G_0}\right)\cdot\left(\Ind_{S,H}^{G_1} + \Ind_{S',H}^{G_1}\right)\right]\nonumber\\
	&= \sum_{\substack{S,S' \subseteq V\\|S| = |S'| = v\\ |S \cap S'| < v\\u \in S}} \E\left[\left(\Ind_{S,H}^{G_0} + \Ind_{S',H}^{G_0}\right)\cdot\left(\Ind_{S,H}^{G_1} + \Ind_{S',H}^{G_1}\right)~\Big{|}~\isH{H}{S}\cdot\isH{H}{S'}\right]\cdot \Pr[\isH{H}{S}\cdot\isH{H}{S'}]
	\label{eq:expectation-bad}
    \end{align}
    The copies of $H$ on $S,S'$ may have shared edges (though because $S \neq S'$, they cannot share the entire graph $H$).
    Therefore, if the copies of $H$ and $H'$ intersect on the edge-induced strict subgraph $J \subset H$, then
    \[
	\Pr[\isH{H}{S}\cdot\isH{H'}{S}] = p^{2e - |E(J)|}.
    \]
Now, because the edges of $G_0,G_1$ are subsampled from $\Base$ independently,
    \[
\E\left[\left(\Ind_{S,H}^{G_0} + \Ind_{S',H}^{G_0}\right)\cdot\left(\Ind_{S,H}^{G_1} + \Ind_{S',H}^{G_1}\right)~\Big{|}~\isH{H}{S}\cdot\isH{H}{S'}\right]
    = \E\left[\left(\Ind_{S,H}^{G_0} + \Ind_{S',H}^{G_0}\right)~\Big{|}~\isH{H}{S}\cdot\isH{H}{S'}\right]^2
    = \left(2\gamma^e\right)^2.
    \]
Thus, we may refine the sum over $S,S'$ in \cref{eq:expectation-bad} into a sum over the proper subgraph $J$ upon which $S,S'$ intersect:
    \begin{align}
	\cref{eq:expectation-bad}
	~=~ \sum_{\substack{J \subset H}} \sum_{\substack{S,S' \subseteq V\\ S \cap S' = J\\u \in S}} 4 \gamma^{2e} p^{2e - |E(J)|}
	&\le \sum_{\substack{J \subset H}} \num{J}{H}^2\cdot\aut(J)  \cdot v\cdot \ff{(n-1)}{2v - 1 - |V(J)|} \cdot 4\gamma^{2e} p^{2e - |E(J)|}\nonumber\\
	&\le 4v\cdot n^{2v-1}\gamma^{2e}p^{2e}\sum_{\substack{J \subset H}}\num{J}{H}^2\cdot\aut(J)\cdot  n^{- |V(J)|} \cdot p^{- |E(J)|}.\label{eq:need-bound}
    \end{align}
    where we have used that the number of $S,S'$ that intersect on $J$ is at most $\num{J}{H}^2\cdot\aut(J)$ for the choice of the position of $J$ in $H$ for each set and the automorphism between the two copies of $J$, $v$ for the choice of the position of $u$ within $S$, and $\ff{(n-1)}{2v - 1 - |V(J)|}$ for the choice of the identities of the remaining vertices in $S$ and $S'$.

	The strict balancedness property of $H$ is no longer sufficient to ensure that this sum is small; that is because the density of $H$ is such that $nq^{e/v} \ll 1$, and therefore we must be careful that the terms of \cref{eq:need-bound} corresponding to the densest, and largest, subgraphs are not too numerous.
	Because we utilize different constructions for our test set $\cH$ depending on the density of $H$, we will require distinct arguments for bounding $\cref{eq:need-bound}$ for the case when $|E(H)|/|V(H)| \le 1.5$ and $|E(H)|/|V(H)| > 1.5$.
    We now apply the following lemmas:
	\begin{restatable}{lemma}{sparseequal}\label{lem:sparseequal}
	   When the average degree of $H$ is $d' = 2 + \lambda$ with $\lambda \in [0,\frac{1}{76}]$, then if $\frac{1}{v^8} \le nq^{d'/2}$, for $v$ sufficiently large,
    \[
	\sum_{\substack{J \subset H}} \num{J}{H}^2\cdot\aut(J)\cdot n^{-|V(J)|}q^{-|E(J)|}
        \le \sum_{s=1}^{(1 - \theta)v} \left(\frac{v^2q^{\frac{1}{100}\lambda \theta}}{n q^{1 + \frac{1}{2}\lambda}}\right)^{s} + \theta v\left(\frac{1}{nq^{1 + \frac{1}{2}\lambda}}\right)^v \left(v^8n q^{1 + \frac{1}{2}\lambda}\right)^{\theta v} q^{\frac{1}{100}\lambda(1 - \theta)}.
    \]
	\end{restatable}

\begin{restatable}{lemma}{denseequal}
    \label{lem:denseequal}
    When the average degree of $H$ is $d' \ge 6$, so that $2v^{2+2(d'+2)} q^{\frac{1}{50}}\le v^{2(d'+2)} nq^{d'/2} \le \frac{1}{2}q^{-\frac{1}{50}}$ and $(nq^{d'/2})^{v-1} \le v^{2(d'+2)}q^{\frac{1}{50}}$, then
    \[
	\sum_{\substack{J \subset H}} \num{J}{H}^2\cdot\aut(J)\cdot n^{-|V(J)|}q^{-|E(J)|} \le  4  (v^{2(d'+2)} q^{\frac{1}{50}})\cdot \left(\frac{1}{nq^{d'/2}}\right)^{v-1}.
    \]
\end{restatable}
We will prove these lemmas in \cref{sec:moments}; first we will complete the theorem.

    \paragraph{Sparse case.}
    In the sparse case, we will apply \cref{lem:sparseequal} (under the assumption that $np^{d'/2} \ge v^{-8}$) with setting $q=p$ and $\theta = \frac{600 \log v}{\lambda \log n}$.
    At this parameter setting, if $p = n^{\delta -1}$, then
    \[
    \log \left(v^2 p^{\frac{1}{100}\theta\lambda} \right)
    = 2 \log v - \frac{1}{100}\theta \lambda (1-\delta)\log n
    < \left(2 - 6(1-\delta)\right)\log v < 0,
    \]
    so long as $\delta < \frac{1}{3}$.
    The summation term from the bound in \cref{lem:sparseequal} is thus $\le \frac{1}{\sqrt{v}}n^{-v}p^{-e}$.
Also, for the second term,
\begin{align*}
\log\left(\theta v \cdot (v^8 n p^{1+\frac{1}{2}\lambda})^{\theta v} p^{\frac{1}{100}\lambda(1-\theta)}\right)
&= \log \theta v + \theta v(8\log v - (\delta - \frac{\lambda}{2} + \delta \frac{\lambda}{2})\log n) - \frac{1}{100}\lambda(1-\theta)(1-\delta) \log n\\
&\le \log \theta v + \frac{4800}{\lambda\log n}v\log^2 v - \frac{1}{300}\lambda \log n
\ll 0,
\end{align*}
where the final inequality follows for $n$ sufficiently large given our assumption that $v \log^2 v \ll \lambda^2 \log^2 n$.
It follows that the second term is $\le \frac{1}{n^{\frac{\lambda}{300}}}\cdot n^{-v}p^{-e}$.
Therefore,
    \begin{align}
	    \cref{eq:need-bound}
	    &\le 4v \cdot n^{2v-1} \gamma^{2e} p^{2e} \cdot o\left(\frac{1}{\sqrt{v} n^vp^e}\right)
	    \le O\left(v^{-1/2}\right)\cdot v \cdot n^{v-1} q^e,\label{eq:sBbd}
    \end{align}
    where we have used that $q = p\noise^2$.

	By linearity of expectation, summing over all $H \in \cH$, the number of $H$ which $u$ participates in that have at least two distinct appearances in $\Base$ and survive at least once in each of $G_0,G_1$ is at most (by \cref{eq:sBbd}):
	\[
	    \E[B_u]
	    = \E\left[\sum_{H \in \cH} B^H_u \right]
	    \le \sum_{H \in \cH} \Pr\left[B_u^H\right]
	    \le |\cH|\cdot v\cdot n^{v-1} q^e\cdot O\left(v^{-1/2}\right)
	\]
	which completes the proof for the sparse case.
    \paragraph{Dense case.}
	In the dense case, since $J$ is always a strict subgraph of $H$, we can now apply \cref{lem:denseequal} (since we have assumed that $2v^{2+2(d'+2)} q^{\frac{1}{50}}\le v^{2(d'+2)} nq^{d'/2} \le \frac{1}{2}q^{-\frac{1}{50}}$)
	to bound  \cref{eq:need-bound}:
	\begin{align}
	    \cref{eq:need-bound}
	    &\le 4v\cdot n^{2v-1} \gamma^{2e}q^{2e}  \cdot 4 \cdot  (v^{2(d'+2)} p^{\frac{1}{50}})\cdot \left(\frac{1}{np^{d'/2}}\right)^{v-1}
	    = 4 v\cdot n^{v-1}q^e\cdot  4 \left(v^{2(d'+2)} \cdot np^{\frac{1}{2}d' + \frac{1}{50}}\right).\label{eq:Bbound}
	\end{align}
	where we have used that $q = p\noise^2 $ and $e = \frac{d'}{2}v$.

	Now by definition of $B_u$ and by \cref{eq:Bbound},
	\[
	    \E[B_u]
	    = \E\left[\sum_{H \in \cH} B^H_u \right]
	    \le \sum_{H \in \cH} \Pr\left[B_u^H\right]
	    \le |\cH|\cdot 4v\cdot n^{v-1} q^e\cdot  4 \left(v^{2(d'+2)} \cdot np^{\frac{1}{2}d' + \frac{1}{50}}\right),
	\]
	which completes the proof for the dense case.
\end{proof}

Therefore we have bounded the probability that each vertex participates in a test graph $H$ introduces errors in the map constructed by the algorithm (that is, which appears more than once in $\Base$ and survives subsampling in both graphs).

\subsection{Many vertices appear in test subgraphs}\label{sec:coverin}

To guarantee that the algorithm recovers a $\tilde\Omega(1)$ fraction of $\pi^*$, we will show that many vertices in $\Base$ are incident on a copy of $H$ which is present in {\em both} $G_0$ and $G_1$.
It will be convenient for us to think of the {\em intersection graph} of $G_0$ and $G_1$:
\begin{definition}[Intersection graph]\label{def:intersection}
    For $G_0,G_1 \sim \dstruct$, the {\em intersection graph} $\Inter$ is the subgraph of the base graph $\Base$ induced by edges that were sub-sampled in both $G_0$ and $G_1$.
    Note that the distribution of intersection graph is the same as $\mathbb{G}(n, p\noise^2)$.
\end{definition}
We will lower bound the probability that a vertex is contained in at least one test graph inside the intersection graph.

\begin{lemma}\label{lemma:recovery cover vertex}
    Define $q := p\gamma^2$, and let $\Inter \sim G(n,q)$.
    Furthermore let $\cH$ be a test set with $H \in \cH$ having $v$ vertices and average degree $d' = d + \lambda$ for an integer $d = 2$ or $d \ge 6$, and $\lambda \in [0,1]$,
    so that $nq^{d'/2} < 1$ and $|\cH|\cdot v n^{v-1} q^e \ge 1$.
    If $d = 2$, then we also require $\lambda \in(0 ,\frac{1}{76}]$, $v\log^2 v\ll \lambda^2 \log^2 n$, $v^8 nq^{d'/2} \ge 1$, and
	$v^2 = o( n q^{1 + \frac{1}{3}\lambda})$.
    If $d\ge 6$, then we require that $2v^{2+2(d'+2)} q^{\frac{1}{50}}\le v^{2(d'+2)} nq^{d'/2} \le \frac{1}{2}q^{-\frac{1}{50}}$, and $nq^{(1-\beta)\frac{1}{2}d'} \gg v^{2}$.

    For a vertex $u \in V(\Inter)$, let $N_u$ be the number of $H \in \cH$ that $u$ appears in in $G$.
    When these conditions are met, then
    \[
	\Var[N_u] \le o(\E[N_u]^2).
    \]
\end{lemma}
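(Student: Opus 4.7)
The plan is to compute $\Var[N_u]$ by decomposing $N_u = \sum_{H \in \cH} N_u^H$, where $N_u^H$ is the number of copies of $H$ in $\Inter$ that contain $u$. (Since $nq^{d'/2} < 1$, each $N_u^H \in \{0,1\}$ with high probability, so the count interpretation essentially agrees with the ``distinct $H$'' interpretation.) Using $\aut(H) = 1$ and $\Inter \sim \bb{G}(n,q)$, linearity gives $\E[N_u] = |\cH| \cdot v \cdot \ff{(n-1)}{v-1} \cdot q^e$, so $\E[N_u]^2 = (1 - o(1))\cdot |\cH|^2 v^2 n^{2v-2} q^{2e}$. Writing $\Var[N_u] = \sum_{H, H' \in \cH} \operatorname{Cov}[N_u^H, N_u^{H'}]$, I would expand each covariance by grouping pairs of labeled copies $(S, S')$ of $H$ and $H'$ with $u \in S \cap S'$ by their edge-induced intersection subgraph $J$, just as in the proofs of \cref{lem:dist-exp} and \cref{lem:recovery uniquness single subgraph}. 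An inclusion-exclusion argument cancels the $|E(J)| = 0$ contributions against $\E[N_u^H]\E[N_u^{H'}]$, yielding
\[
\operatorname{Cov}[N_u^H, N_u^{H'}] \le \sum_{\substack{J \subseteq H, H'\\ |E(J)| \ge 1}} \num{J}{H}\num{J}{H'}\aut(J)\cdot |V(J)|\cdot n^{2v - |V(J)|-1}\cdot q^{2e - |E(J)|}.
\]

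I then analyze this bound in two cases. When $H = H'$, the inner sum is (up to a harmless $|V(J)| \le v$ factor) precisely the moment quantity controlled by \cref{lem:sparseequal} in the sparse regime and \cref{lem:denseequal} in the dense regime. Plugging those bounds in and summing over $|\cH|$ graphs, the resulting contribution, divided by $\E[N_u]^2$, is $o(1)$ via the hypothesis $|\cH| v n^{v-1} q^e \ge 1$. When $H \ne H'$, I would invoke the intersection balance property (property 4 of \cref{thm:meta-graph-fam}): every common subgraph $J \subset H, H'$ has density at most $\tfrac{d'}{2} - g(d')$, so $q^{-|E(J)|} \le q^{-(d'/2 - g(d'))|V(J)|}$. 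Combined with the crude estimate $\num{J}{H}\num{J}{H'}\aut(J) \le v^{2|V(J)|}$ (the same bound as in \cref{eq:cJ upper bound}), the sum over $J$ with $|V(J)| \ge 2$ becomes geometric in the ratio $v^2/(nq^{d'/2 - g(d')})$, which the hypotheses $v^2 = o(nq^{1+\lambda/3})$ (sparse) or $nq^{(1-\beta)d'/2} \gg v^2$ (dense) make much less than $1$. Summing over the $O(|\cH|^2)$ pairs and dividing by $\E[N_u]^2$ again yields $o(1)$.

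The main obstacle will be the $H \ne H'$ case, since \cref{lem:sparseequal,lem:denseequal} apply only to single-graph moments and the intersection balance must carry all the weight. Moreover, the regime $nq^{d'/2} < 1$---in which the test graphs are ``black swans''---makes the analysis delicate because the $n^{-|V(J)|}$ factor alone is insufficient and must be balanced against the density saving $q^{g(d')|V(J)|}$. The quantitative (rather than merely qualitative) density gap guaranteed by \cref{thm:meta-graph-fam}, together with the precise parameter tuning in the lemma's hypotheses, is exactly what ensures geometric convergence of the cross-term sum, and hence the desired $\Var[N_u] = o(\E[N_u]^2)$ bound.
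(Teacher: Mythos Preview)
Your proposal is correct and follows essentially the same approach as the paper. The paper computes $\E[N_u]$ identically, expands $\E[N_u^2]$ into the $H=H'$ and $H\neq H'$ contributions, invokes \cref{lem:sparseequal}/\cref{lem:denseequal} for the diagonal terms, and handles the off-diagonal terms via the intersection-balance property and a geometric sum exactly as you describe (the paper packages this last step as two separate lemmas, \cref{lem:sparseneq} and \cref{lem:denseneq}, but their proofs are precisely the argument you sketched).
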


\begin{proof}

    Our proof is via the second moment method.
	First, the mean of $N_u$ can be calculated as follows:
\begin{align}
    \E[N_u]
    = \E\Big[\sum_{H\in\cH}\sum_{\substack{S\subseteq V,\\|S|=v,\\u\in S}} \isH{S}{H} \Big]
    = |\cH| \cdot \sum_{\substack{S\subseteq V,\\|S|=v,\\u\in S}} \E[\isH{S}{H} ]
    &= |\cH| \cdot v\cdot\ff{(n-1)}{v-1} q^{e},\label{eq:mean}
\end{align}
    since there are $v$ choices for the position of $u$ in $H$, then $\ff{(n-1)}{v}$ choices for the (ordered) remaining vertices of $H$, as $H$ has no non-trivial automorphisms.

    Now, we will bound the second moment.
    We expand the expression in terms of the indicators $\isH{S}{H}$, the indicator that the ordered vertex subset $S \subset V$ which includes $u$ has $H$ as an edge-induced subgraph in $\Inter$.
\begin{align}
    \E[N_u^2]
    = \E\Big[\Big(\sum_{H\in\cH}\sum_{\substack{S\subseteq V,\\|S|=v,\\u\in S}}\isH{S}{H}\Big)^2 \Big]
    &= \E\Big[\sum_{H,H'\in\cH}\sum_{\substack{S,S'\subseteq V,\\|S|=|S'|=v,\\u\in S,S'}}\isH{S}{H}\cdot \isH{S'}{H'} \Big]\nonumber\\
    &= \sum_{H\in\cH}\sum_{\substack{S,S'\subseteq V,\\|S|=|S'|=v,\\u\in S,S'}}\E[\isH{S}{H}\isH{S'}{H}] + \sum_{H\neq H'\in\cH}\sum_{\substack{S,S'\subseteq V,\\|S|=|S'|=v,\\u\in S,S'}}\E[\isH{S}{H}\isH{S'}{H'}].\label{eq:recovery cover 1}
\end{align}
In the final step, we have split the terms corresponding to products of indicators of the same subgraph $H$, and products of indicators of distinct subgraphs $H,H'$.
We will bound each of these summations individually.

\paragraph{Contribution of copies of the same subgraph $H$.}

	The first term of~\cref{eq:recovery cover 1} is the second moment of the number of labeled copies of $H$ incident on $u$.
	To bound the expectation $\E[\isH{S}{H}\isH{S'}{H}]$, we need to consider the overlap of the ordered vertex sets $S$ and $S'$ (and specifically, the overlap of the copies of $H$ on those sets).
	For each vertex-induced subgraph $J \subseteq H$, we may have the copies of $H$ on $S$ and $S'$ overlapping on $J$.
	In that case, the indicators $\isH{S}{H}, \isH{S'}{H}$ are not independent, and we have
	\[
	    \E[\isH{S}{H}\isH{S'}{H}] = q^{2e - |E(J)|}.
	\]
	Recall that $\num{J}{H}$ denotes the subgraph count of $J$ in $H$.
	Taking this into account,
	\begin{align}
	    \sum_{\substack{S,S' \subset V\\|S| = |S'| = v\\ u \in S,S'}}\E[\isH{S}{H}\isH{S'}{H}]
	    \, =\, \sum_{\substack{J \subseteq H\\|V(J)|\ge 1}}\sum_{\substack{S,S'\subset V\\ |S| = |S'| = v\\ u\in S,S',\, S\cap S' = J}} q^{2e - |E(J)|}
	    &\le \sum_{\substack{J \subseteq H\\|V(J)|\ge 1}} \num{J}{H}^2\cdot\aut(J) \cdot \ff{n}{2v -1 - |V(J)|}\cdot q^{2e - |E(J)|}\nonumber\\
	    &\le n^{2v-2}q^{2e}\cdot \sum_{\substack{J \subseteq H\\|V(J)|\ge 1}} \num{J}{H}^2\cdot\aut(J) \cdot n^{-|V(J)|+1}\cdot q^{-|E(J)|} \label{eq:same-expand}
	\end{align}
	where we have $\num{J}{H}^2$ ways that $J$ can appear as a subgraph of $H$ in $S$ and $S'$ and create a collision, $\aut(J)$ ways that maps $J$ from a copy to another, and for each such occurrence we choose $\ff{(n-1)}{2v - 1 - |V(J)|}$ vertices for the vertices of $S,S'$; $2v-1$ initially, since we are forced to include $u$, and then subtracting the vertices in $J$.

	As in the proof of \cref{lem:recovery uniquness single subgraph}, since $nq^{e/v} \ll 1$ we must carefully bound \cref{eq:same-expand} so that the densest $J$ do not cause it to blow up.
	Since we use two distinct constructions for $\cH$, we require different arguments depending on the density of $H$.
	We use \cref{lem:sparseequal,lem:denseequal}, mentioned above and proven in \cref{sec:moments}.

	It remains to bound the contribution of $H \neq H'$.

\paragraph{Contribution of distinct subgraphs $H,H'$.}
For a fixed pair $H,H' \in \cH$, expressing the second term \cref{eq:recovery cover 1} as a sum over subgraphs $J$ as before, we have
\begin{align}
    \sum_{\substack{S,S'\subseteq V,\\|S|=|S'|=v,\\u\in S,S'}}\E[\isH{S}{H}\isH{S'}{H'}]
     &= \sum_{\substack{J\subseteq H,H'\\|V(J)| \ge 1}}\sum_{\substack{S,S'\subseteq V,\\|S|=|S'|=v,\\u\in S,S',\, S\cap S'=J}}\E[\isH{S}{H}\isH{S'}{H'}]\nonumber\\
     &= \sum_{\substack{J\subseteq H,H'\\|V(J)| \ge 1}}\sum_{\substack{S,S'\subseteq V,\\|S|=|S'|=v,\\u\in S,S',\, S\cap S'=J}} q^{2e - |E(J)|}\nonumber\\
     &\le \sum_{\substack{J\subseteq H,H'\\|V(J)| \ge 1}}\num{J}{H}\cdot \num{J}{H'}\cdot\aut(J)\cdot \ff{n}{2v -1 -|V(J)|}\cdot q^{2e - |E(J)|},\label{eq:left off}
\end{align}
where to obtain the last inequality we have used that there are at most $\num{J}{H} \cdot \num{J}{H'}$ ways for $S,S'$ to intersect on $J$, $\aut(J)$ ways that maps $J$ from a copy to another, and there are at most $\ff{(n-1)}{2v - 1 - |V(J)|}$ choices of vertices for $S,S'$ once we are forced to include $u$.

To bound \cref{eq:left off}, we need to use the property of our graph family that for every $H\neq H'$ inside the family, the density of the densest common subgraph is appreciably less than the density of $H$.

Again because our guarantees differ depending on the density of $H$, we bound this quantity separately for the case when $H$ has density at most $1.5$ and the case when $H$ has density at least $3$.
\begin{restatable}{lemma}{sparseneq}\label{lem:sparseneq}
	Let $d' = 2 + \lambda$ for $\lambda \in (0,\frac{1}{76}]$.
	Suppose that $v^2 = o( n q^{1 + \frac{1}{3}\lambda})$.
	Then for distinct $H,H' \in \cH_{d'}^v$,
	\[
	\sum_{\substack{J \subseteq H,H'\\|V(J)|\ge 1}} \num{J}{H}\cdot \num{J}{H'}\cdot\aut(J)\cdot n^{-|V(J)|+1}q^{-|E(J)|}
    \le v^2 + nq^2(1 + o(1)).
	\]
\end{restatable}

\begin{restatable}{lemma}{denseneq}\label{lem:denseneq}
    Let $d' = d + \lambda$ for $d \ge 6$ and $\lambda \in [0,1]$.
    Define $\beta =  (\lambda + \frac{1}{26})\frac{1}{d'}$, and suppose that $nq^{(1-\beta)\frac{1}{2}d'} \gg v^{2}$.
    Then for distinct $H,H' \in \cH_{d'}^v$,
    \[
	\sum_{\substack{J \subseteq H,H'\\|V(J)|\ge 1}} \num{J}{H}\cdot \num{J}{H'}\cdot\aut(J)\cdot n^{-|V(J)|+1}q^{-|E(J)|} \le v^2 + nq^{d'/2}\cdot (1+o(1)).
    \]
\end{restatable}
We will prove these lemmas in \cref{sec:moments}; with these bounds in hand, we complete the proof of the theorem.

\paragraph{Bounding the variance of $N_u$ when $H$ is sparse.}
We are now equipped to bound the variance of $N_u$.
We return to \cref{eq:same-expand}; we wish to apply \cref{lem:sparseequal}, which bounds this sum for all $J \neq H$, and to it we will add the term for $J = H$, which gives an additional $n^{-v}q^{-e}$.
So When $H$ has density at most $1 + \frac{1}{152}$,
\[
\cref{eq:same-expand} \le n^{2v-1} q^{2e}\left(\frac{1}{n^v q^e} + \sum_{s=1}^{(1-\theta) v}\left(\frac{v^2 q^{\frac{1}{100}\theta \lambda}}{nq^{1+\frac{\lambda}{2}}}\right)^s + \frac{1}{n^v q^e}\cdot \theta v (v^8 n q^{1+\frac{\lambda}{2}})^{\theta v} q^{\frac{1}{100}\lambda(1-\theta)}\right).
\]
where we have applied \cref{lem:sparseequal} under the assumption that $v^8 nq^{1+\frac{\lambda}{2}} \ge 1$.
Now, we choose $\theta = \frac{800 \log v}{\lambda \log n}$; at this parameter setting, using that $q = n^{\delta - 1}$ for $\delta < \frac{1}{2}$,
\[
\log \left(v^2 q^{\frac{1}{100}\theta\lambda} \right)
= 2 \log v - \frac{1}{100}\theta \lambda \cdot (1-\delta) \log n
< \left(2 - 4\right)\log v < 0,
\]
so the sum is at most $\frac{1}{n^v q^e}$.
Also, for the third term, using that $nq^{1 + \lambda/2} < 1$ and that $(1-\delta)(1-\theta) > \frac{1}{2}$,
\begin{align*}
\log\left(\theta v \cdot (v^8 n q^{1+\frac{1}{2}\lambda})^{\theta v} q^{\frac{1}{100}\lambda(1-\theta)}\right)
    &\le \log\left(\theta v \cdot v^{8\theta v}\cdot q^{\frac{1}{100}\lambda(1-\theta)}\right)\\
    &= \log \theta v + 8\theta v\log v - \frac{1}{100}\lambda(1-\delta)(1-\theta) \log n\\
&\le \log \theta v + \frac{6400}{\lambda\log n}v\log^2 v - \frac{1}{200}\lambda \log n
< 0,
\end{align*}
where the final inequality follows for $n$ sufficiently large given our assumption that $v \log^2 v \ll \lambda^2 \log^2 n$.
It follows that $\cref{eq:same-expand} \le 3 n^{v-1}q^e$.

Thus, using \cref{eq:mean,eq:recovery cover 1,eq:same-expand,eq:left off} together with \cref{lem:sparseequal} and \cref{lem:sparseneq} we have
\begin{align}
    \E[N_u^2] - \E[N_u]^2
    &\le |\cH|^2 \cdot n^{2v -2} q^{2e}\left(v^2 + nq^{2}(1+o(1))\right) +|\cH|\cdot 3 n^{v-1}q^{e} - |\cH|^2 v^2 n^{2v-2} q^{2e}\\
    &\le |\cH|^2 \cdot n^{2v -2}q^{2e} \cdot n q^2 (1+o(1))+ 3\cdot |\cH|\cdot n^{v-1}q^{e},
\end{align}
where we have subtracted the third term from the first term and simplified the second term.
Because we have set the density so that $nq^2 \ll nq^{d'/2} \ll 1$, and because $\E[N_u] = |\cH| \cdot v\cdot \ff{(n-1)}{v-1} \cdot q^e$, the first term has magnitude $o(\E[N_u]^2)$.

We compare the second term against $\E[N_u]^2$; we have that
\[
    \frac{4\cdot |\cH| \cdot n^{v-1} q^e}{\E[N_u]^2} \le \frac{6\cdot|\cH| n^{v-1}q^e}{v^2 |\cH|^2 n^{2(v-1)} q^{2e}}
    = \frac{6n}{v^2}\cdot \frac{1}{|\cH|n^{v} q^e}
    \le o(1),
\]
by assumption.
Thus, if these conditions are satisfied,
\[
    \Var(N_u) = o(\E[N_u]^2),
\]
which completes the proof for sparse $H$.

\paragraph{Bounding the variance of $N_u$ when $H$ is dense.}
When $H$ has density at least $3$, using \cref{eq:mean,eq:recovery cover 1,eq:same-expand,eq:left off} together with \cref{lem:denseequal} (to which we add the term for $J = H$, which contributes $n^{-v}q^{-e}$ and dominates the contribution of the other terms) and \cref{lem:denseneq} (we meet the assumptions since we have assumed $v^2 < nq^{d'/2 - 1/50} < nq^{(1-\beta)d'/2}$) we have
\begin{align*}
    \E[N_u^2] - \E[N_u]^2
    &\le |\cH|^2 \cdot n^{2v -2}q^{2e}\left(v^2 + nq^{d'/2}(1+o(1))\right)
     +|\cH|\cdot n^{2v -1}q^{2e}\cdot 4 \left(\frac{1}{nq^{d'/2}}\right)^v      - |\cH|^2 v^2 n^{2v - 2} q^{2e}\\
    &\le |\cH|^2 \cdot n^{2v -2}q^{2e}\cdot (nq^{d'/2})\cdot(1+o(1)) + |\cH|\cdot n^{2v -1}q^{2e} \cdot 4 \left(\frac{1}{nq^{d'/2}}\right)^v,
\end{align*}
where to obtain the final line we have subtracted the third term from the first term.
Because we have set the density of $H$ so that $nq^{d'/2} = o(1)$, and since $\E[N_u] = |\cH|\cdot v\cdot \ff{(n-1)}{v-1}\cdot q^e$, the first term has magnitude $o(\E[N_u]^2)$.
To bound the second term, using \cref{eq:mean}, we have
\begin{align*}
    \frac{|\cH|\cdot n^{2v -1}q^{2e} \cdot 4 \left(\frac{1}{nq^{d'/2}}\right)^v }{\E[N_u]^2}
    = \frac{4 n \left(\frac{1}{nq^{d'/2}}\right)^v }{|\cH|\cdot v^2}
    = \frac{4n}{v^2}\cdot \frac{1}{|\cH| n^v q^e}
    \le o(1),
\end{align*}
by assumption.

It follows that if these conditions are satisfied,
\[
    \Var(N_u) = o(\E[N_u]^2),
\]
which completes the proof for dense $H$.
\end{proof}

From~\cref{lemma:recovery cover vertex}, we know that with good probability each vertex $u$ will be covered by at least one graph from the test set $\cH$ in the intersection graph.

\subsection{Partial solution}\label{sec:partial}
The previous two lemmas allow us to argue that step~\ref{step:match} of \cref{alg:recover} what we call a {\em partial solution}.
We start by defining the notion of a $(\theta,\eta)$ partial solution, which corresponds to getting $\eta$-accurate information about a $\theta$ fraction of the vertices:

\begin{definition}[Partial solutions] \label{def:partial}
If $(G_0,G_1,\pi^*)$ are sampled from $\dstruct(n,p;\gamma)$, and $s \in \{1,\ldots, n\}$  and $0 < \eta \leq 1$ are some constants then we say that a partial function $\pi:[n] \rightarrow [n]$ is an \textit{$(s,\eta)$ partial solution} if $\pi$ is a one-to-one function that is defined on at least $s$ inputs, and such that for at least $\eta$ fraction of the inputs $u$ on which $\pi$ is defined, $\pi(u)=\pi^*(u)$.
\end{definition}

\begin{lemma}\label{lem:partial recovery}
    Suppose that $G_0,G_1 \sim \dstruct(n,p;\gamma)$.
    Then under the conditions of \cref{thm:recovery} with probability $1-o(1)$ over the choice of $G_0,G_1\sim\dstruct(n,p;\gamma)$ and the randomness of the algorithm, step~\ref{step:match} of \cref{alg:recover} recovers a $(\frac{n}{\log v},1-\frac{1}{v^{1/8}})$-partial solution.
\end{lemma}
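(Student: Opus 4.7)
The plan is to combine \cref{lem:recovery uniquness single subgraph} (which bounds the expected number of ``bad'' test graphs $B_u$ incident to each vertex) with \cref{lemma:recovery cover vertex} (which shows concentration for $N_u$, the number of test graphs incident to $u$ in the intersection graph $\Inter$), applied via Chebyshev's and Markov's inequalities, to show that step~\ref{step:match} of \cref{alg:recover} matches at least $(1-o(1))n \ge n/\log v$ vertices and assigns correctly for all but a $1/v^{1/8}$ fraction of them.

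First, I would verify that the parameter choices in step~1 of \cref{alg:recover} (namely $v = \Theta(\log n)$ with either $d' = 2+\lambda$, $\lambda \in (\tfrac{2\delta}{1-\delta},\tfrac{2\delta}{1-\delta}+\tfrac{\log\log n}{\log n})$ in the sparse regime or $d'\ge 6$ with $d'\in(\tfrac{2}{1-\delta},\tfrac{2}{1-\delta}+\tfrac{\log\log n}{4\log n})$ in the dense regime) satisfy the hypotheses of \cref{lem:recovery uniquness single subgraph} and \cref{lemma:recovery cover vertex} under the theorem's assumptions on $p$ and $\gamma$ (using $q = p\gamma^2 \ge p\cdot(1/\log n)^{o(1)}$). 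This is a routine calculation; for example, the restriction $\varepsilon = \Omega((\log\log^4 n/\log n)^{1/2})$ in the sparse regime is precisely what makes $v\log^2 v \ll \lambda^2 \log^2 n$ hold, and $\delta \le 1/153$ is what ensures $\lambda \le 1/76$.

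Second, I would apply \cref{lemma:recovery cover vertex} together with Chebyshev's inequality to conclude $\Pr[N_u < \tfrac{1}{2}\E[N_u]] = o(1)$ for each vertex $u$. Since $\E[N_u] = |\cH|\cdot v\cdot \ff{(n-1)}{v-1}\cdot q^e \ge (1-o(1))|\cH|v n^{v-1}q^e$, the algorithm's threshold in step~\ref{step:match} is then met with probability $1-o(1)$; moreover, each $H$ counted by $N_u$ appears on $u$ in $G_0$ and somewhere in $G_1$, hence also contributes to the algorithm's count $M_u$, giving $M_u \ge N_u$. A Markov bound on the set of vertices failing this event then shows that at least $(1-o(1))n \ge n/\log v$ vertices meet the threshold with high probability. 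Conditional on $N_u\ge \tfrac{1}{2}\E[N_u]$, the algorithm picks $H$ uniformly from $M_u\ge N_u$ candidates, and a wrong match arises only from a ``bad'' $H$ (counted by $B_u$). Hence the conditional probability of a wrong match at $u$ is at most $B_u/N_u \le 2B_u/\E[N_u]$. By \cref{lem:recovery uniquness single subgraph}, $\E[B_u]/\E[N_u] = O(v^{-1/2})$ in the sparse regime and $o(v^{-1/2})$ in the dense regime (where $np^{d'/2+1/50} = n^{-\Omega(1)}$ dominates the $v^{2(d'+2)} = \polylog n$ factor). Summing over $u$, the expected number of mis-matches is $O(n/\sqrt{v})$; Markov's inequality then yields that with probability $1-o(1)$ the number of mis-matches is at most $\tfrac{n}{2v^{1/8}}$. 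Collisions introduce at most one additional error per incorrect assignment (since $\pi(u) = \pi(u')$ with $u \ne u'$ forces at least one of $u,u'$ to disagree with the bijection $\pi^*$), so the total fraction of errors among matched vertices is at most $1/v^{1/8}$, producing the claimed $(n/\log v, 1 - 1/v^{1/8})$ partial solution.

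The main obstacle will be the first step: carefully checking that all the numerical hypotheses of \cref{lem:recovery uniquness single subgraph} and \cref{lemma:recovery cover vertex} (in particular $v\log^2 v \ll \lambda^2 \log^2 n$, $v^8 nq^{d'/2} \ge 1$, and $v^2 = o(nq^{1+\lambda/3})$ in the sparse case, and the window $2v^{2+2(d'+2)} q^{1/50} \le v^{2(d'+2)} nq^{d'/2} \le \tfrac{1}{2}q^{-1/50}$ in the dense case) hold simultaneously for the parameter choices made by the algorithm; these inequalities are what dictate the precise ranges of $p$ and the lower bound on $\gamma$ stated in \cref{thm:recovery}.
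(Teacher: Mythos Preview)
Your proposal is correct and follows essentially the same skeleton as the paper: verify the parameter hypotheses of \cref{lem:recovery uniquness single subgraph} and \cref{lemma:recovery cover vertex}, use Chebyshev on $N_u$ to show most vertices meet the threshold, and bound the mismatch fraction via $\E[B_u]/\E[N_u]=O(v^{-1/2})$.

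There is one small wrinkle worth tightening. Your sentence ``conditional on $N_u\ge \tfrac12\E[N_u]$, the probability of a wrong match is at most $B_u/N_u\le 2B_u/\E[N_u]$'' does not cover the event that $M_u$ meets the threshold while $N_u$ does not, and when you then take expectations to get $\E[\#\text{wrong}]=O(n/\sqrt{v})$ you would need to control $\E[B_u\mid N_u\ge\tfrac12\E[N_u]]$. The cleaner route, which you are implicitly using anyway, is to observe that whenever the algorithm makes a choice at $u$ we have $M_u\ge\text{threshold}=\tfrac12\E[N_u]$, so the conditional error probability given the graph is at most $B_u/\text{threshold}=2B_u/\E[N_u]$ pointwise; taking expectation immediately gives $\Pr[\text{wrong at }u]\le 2\E[B_u]/\E[N_u]=O(v^{-1/2})$, and then your Markov step goes through.

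For comparison, the paper proceeds slightly differently in the last stage: rather than bounding $\E[\#\text{wrong}]$ directly, it first proves (via Markov on $B_u$) that with high probability all but $n/v^{1/4}$ vertices have $B_u\le v^{-1/4}\E[N_u]$, and then applies a Chernoff bound over the algorithm's independent random choices to conclude that at most a $v^{-1/8}$ fraction are mismatched. Your single Markov bound on the total error count is a bit more direct and equally valid.
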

\begin{proof}
    We begin by noting that the parameters chosen at the start of \cref{alg:recover} satisfy the requirements of \cref{lem:recovery uniquness single subgraph,lemma:recovery cover vertex}:
    \paragraph{Sparse case requirements.}
	    If $p = n^{\delta-1} \in [\frac{n^\eps}{n},\frac{n^{1/153}}{n}]$, we have chosen $d' = 2 + \lambda$ for $\lambda$ in the range
    $
	\lambda \in \left(\frac{2\delta}{1-\delta}, \frac{2\delta}{1-\delta} +  \frac{\log \log n}{\log n}\right).
    $
    It can be verified that $\lambda \le \frac{1}{76}$.
We have chosen $v = \Theta(\log n)$ to be the smallest even integer so that $\lambda v$ is also an integer.
	    We must verify the following conditions:
	    \begin{itemize}
		\item For sufficiently large $n$, there exists such a choice of $\lambda$ and $v$ so that $v$ and $\lambda v$ are integers:

		   We have that so long as $v = \Theta(\log n)$, the interval of choices of $\lambda$ is such that there is at least one value such that $\lambda v$ is an integer.

		\item $v \log^2 v \ll \lambda^2 \log^2 n$.
		    We have required that $\delta \ge \sqrt{\frac{(\log\log n)^4}{\log n}}$, from which this follows.
		    This condition is used by~\cref{lem:recovery uniquness single subgraph}.
		\item $np^{d'/2} < 1$.

		    Using that $\frac{d'}{2} = 1+\frac{1}{2}\lambda$ and $p = n^{\delta -1}$, taking logarithms we have that this is equivalent to
		    \[
			\log n - (1-\delta)(1+\frac{1}{2}\lambda)\log n = (\delta -\frac{1}{2}\lambda + \frac{1}{2}\delta \lambda) \log n < 0 \iff \frac{2\delta}{(1-\delta)} < \lambda.
			\]
			For our choice of $\lambda$, this requirement is satisfied.
			This condition is used by~\cref{lem:recovery uniquness single subgraph}.
		\item $v^2 \ll nq^{1 + \frac{1}{3}\lambda}$.

		    Again taking a logarithm, we have that this condition is equivalent to
		    \[
			\log n - (1+\frac{1}{3}\lambda)(1-\delta)\log n - 2\log v \gg 0,
			~\Longleftarrow~
			\lambda < \frac{3\delta}{1-\delta} - \frac{6\log v}{\log n} -\Omega\left(\frac{\log v}{\log n}\right).
			\]
			Since we have required $\delta \ge \sqrt{\frac{(\log\log n)^4}{\log n}} \gg \frac{\log v}{\log n}$ and $\lambda < \frac{2\delta}{1-\delta} + O\left(\frac{\log v}{\log n}\right)$, we satisfy this condition.
			This condition is used by~\cref{lemma:recovery cover vertex} and ~\cref{lem:sparseneq}.

		\item $nq^{d'/2} v^8 \ge 1$.

		    Again taking logarithms, we have that this condition is equivalent to
		    \[
			\left(\delta - \frac{1}{2}\lambda + \frac{1}{2}\lambda \delta\right)\log n - (2+\lambda)\log \frac{1}{\gamma} + 8\log v> 0
		    \]
		    which is in turn implied by
		    \[
			\log \frac{1}{\gamma} \ll \log v,\quad \text{and}\quad \left(\delta - \frac{1}{2}\lambda + \frac{1}{2}\lambda \delta\right) > -1.9\frac{\log v}{\log n},
		    \]
		    where the latter condition can be strengthened to
		    \[
			\frac{2\delta}{1-\delta - 1.9\frac{\log v}{\log n}} \ge \frac{2\delta }{1-\delta} + 1.9\frac{\log v}{\log n}> \lambda.
		    \]
		    This latter condition is met by our choice of $\lambda$.
		    By assumption, $\gamma^{-1} = o(\log^c n)$ for any constant $c$, so the first condition is met as well.
		    This condition is used by~\cref{lem:recovery uniquness single subgraph},~\cref{lem:sparseequal}, and~\cref{lemma:recovery cover vertex}.
		    
		\item $1 \le |\cH| \cdot v n^{v-1} q^{\frac{d'}{2} v}$.

		    We first address the lower bound.
		    By \cref{prop:small-d}, we can take $|\cH|$ to be of size up to $(\lambda v)^{c\lambda v}$ for $c = \frac{69}{100}$.
		    Taking logarithms, this is equivalent to
		    \begin{align*}
			0 &< c \lambda v \log(\lambda v) + \log v + (v-1)\log n - v(1-\delta)(1+\frac{1}{2}\lambda)\log n - v(2+\lambda) \log \frac{1}{\gamma} \intertext{which is in turn implied by}
			0 &< c\lambda \left(\log v - \log \frac{1}{\lambda}\right) + \frac{\log v}{v} + \left(\delta - \frac{1}{2} \lambda+ \frac{1}{2}\lambda \delta - \frac{1}{v}\right)\log n - (2+\lambda)\log \frac{1}{\gamma},
		    \end{align*}
		    And this holds when
		    \[
			\log\frac{1}{\gamma} \ll \log v, \quad \frac{\log n}{v} \ll \log v, \quad \log{\frac{1}{\lambda}} \ll \log v,\quad \text{and} \quad \left(\delta - \frac{1}{2}\lambda + \frac{1}{2}\lambda \delta \right) > -0.9 \cdot c\lambda\frac{\log v}{\log n}.
		    \]
		    And (since $\delta$ is sufficiently small) the latter condition can be strengthened as follows:
		    \[
			\frac{2\delta}{(1-\delta - 1.8 c\frac{\log v}{\log n})} > \frac{2\delta }{1-\delta} + 1.8\cdot c\frac{\log v}{\log n}> \frac{2\delta }{1-\delta} + \frac{\log v}{\log n}\ge \lambda,
		    \]
		    This is met by our choice of $\lambda$, and we also have $\lambda^{-1} \le O(\delta^{-1}) = o\left(\sqrt{\frac{\log n}{(\log\log n)^4}}\right)$.
		    Therefore, we satisfy the lower bound.
\
		    In fact, under these conditions, we satisfy the following, stronger condition:
		    \begin{claim}
			In the sparse case when $d' = 2 + \lambda$, $|\cH|v n^{v-1}q^{d'/2} \ge v^{\lambda v/100}$.
		    \end{claim}
		    This condition is used by~\cref{lemma:recovery cover vertex}.
		\item $p = n^{\delta - 1}$ for $\delta < \frac{1}{3}$.
		    This holds by assumption.
		    This condition is used by~\cref{lem:recovery uniquness single subgraph}.
	    \end{itemize}

    \paragraph{Dense case requirements.}
    If $p=n^{\delta-1}\in[n^{-1/3},n^{-\epsilon}]$, we have chosen an even integer $v = \Theta(\log n)$ and $d'\in \left(\frac{2}{1-\delta}, \frac{2}{1-\delta}+\frac{\log\log n}{4\log n}\right)$ so that $d'v$ is an integer.
    It can be verified that $d' \ge 6$.
    We must verify the following conditions:
    \begin{itemize}
	\item There exists a choice of $d'$ so that $d'v$ is an integer.

	    We first notice that this implies that $(d'-\lfloor d' \rfloor)v$ is an even integer; this is because we have chosen $v$ even, and because $\lfloor d'\rfloor v$ is also an integer.

	    Now, we see that so long as $v = \Omega(\log n)$, for $n$ sufficiently large there is at least one integer in the allowed interval of $d' v$.
        \item $np^{d'/2} < 1$.

	    Taking the logarithm, we equivalently require that
	    \[
		\log n - (1-\delta)\frac{d'}{2}\log n < 0 \iff d' > \frac{2}{1-\delta}.
		\]
		Our choice of $d'$ satisfies this requirement.
	    When $v \ll n$ and $\frac{d'}{2}(1-\delta) - 1 = O(\frac{\log v}{\log n})$, as in our case, these inequalities are easily satisfied.
	    This condition is used by~\cref{lem:recovery uniquness single subgraph}.
        \item $nq^{(1-\beta)\frac{1}{2}d'} \gg v^{2}$.
        
        Take logarithm, this is equivalent to requiring that 
        \[
        \log n-(1-\delta)\frac{d'}{2}\log n +\frac{1}{2}(1-\delta)(\lambda+\frac{1}{26})\log n\gg\log v.
        \]
        Which holds by our choice of $\lambda$ and $d'$. This condition is used by~\cref{lemma:recovery cover vertex} and~\cref{lem:sparseneq}.
        \item $2v^{2+2(d'+2)} p^{\frac{1}{50}}\le v^{2(d'+2)} np^{d'/2} \le \frac{1}{2}p^{-\frac{1}{50}}$.

	    Taking the logarithm, this is equivalent to requiring that
	    \[
		1 + \left(2+2(d'+2)\right)\log v - \frac{1}{50}(1-\delta)\log n < \left(2d'+4\right) \log v + \log n - \frac{d'}{2}(1-\delta)\log n <- 1 + \frac{1}{50}(1-\delta)\log n.
	    \]
	    Which, using that $\frac{d'}{2}(1-\delta) - 1 = O\left(\frac{\log v}{\log n}\right)$, is equivalent to
	    \[
		1 + \left(3+2(d'+2)\right)\log v - \frac{1}{50}(1-\delta)\log n < \pm O(\log v) <- 1 + \frac{1}{50}(1-\delta)\log n.
	    \]
	    Using that $\delta < 1-\epsilon$ for $\epsilon$ fixed as a function of $n$, and that $v = O(\log n)$, gives the desired conclusion.
        \item $2v^{2+2(d'+2)} q^{\frac{1}{50}}\le v^{2(d'+2)} nq^{d'/2} \le \frac{1}{2}q^{-\frac{1}{50}}$.

	    Since we have assumed that $\gamma = o(\log^c n)$ for any constant $c$, and because $p = n^{\delta-1}$ for $\delta$ a constant, the previous condition implies this one as well. This condition is used by~\cref{lem:recovery uniquness single subgraph},~\cref{lem:denseequal}, and~\cref{lemma:recovery cover vertex}.

	\item $1 \le |\cH|\cdot v n^{v-1} q^{\frac{d'}{2}v}$.

	    By \cref{prop:large-d}, we can take $|\cH| = v^{C_d\cdot v}$, where $C_d = \left(\frac{d'}{4}(1-\beta) - \frac{1}{2}\lambda\right)$ and $\beta = (\lambda + \frac{1}{26})\frac{1}{d'}$.
	    When $d' \ge 6, C_d > \frac{1}{2}$.
	    Taking the logarithm and dividing by $v$, we have that the lower bound is equivalent to
	    \[
		0 < C_d \log v + \frac{\log v}{v} + (1-\frac{1}{v})\log n - \frac{d'}{2}(1-\delta)\log n - \frac{d'}{2}\log \frac{1}{\gamma}.
	    \]
	    This condition is met so long as
	    \[
		\frac{\log v}{ v} \ll \log v,
		\quad \frac{\log n}{v}\ll \log v,
		\quad \log \frac{1}{\gamma} \ll \log v,
		\text{and}\quad \left(\frac{d'}{2}(1-\delta) - 1\right) \le \frac{\log v}{4\log n}< \frac{1}{2}C_d\frac{\log v}{\log n};
	    \]
	    The first three conditions we have by assumption, and the latter condition can be strengthened to
	    \[
		d' < \frac{2}{1-\delta} + \frac{\log v}{2\log n},
	    \]
	    which our condition on $\lambda$ satisfies.

	    In fact, we can deduce the following stronger consequence:
	    \begin{claim}
		In the sparse case when $d' \ge 6$, $|\cH| v n^{v-1} q^{d'/2} \ge v^{\frac{1}{2}C_d v} \ge v^{\frac{1}{4} v}$.
	    \end{claim}
	    This condition is used by~\cref{lemma:recovery cover vertex}.
    \end{itemize}

    \medskip

    \paragraph{The algorithm returns a partial solution.}
    For each $u \in V$, let $N_u$ be the number of distinct $H \in \cH$ which contain the vertex $u$ and which appear in both $G_0,G_1$.
    In both the sparse and dense regimes, we have that $\E[N_u] = |\cH| \cdot v n^{v-1} q^e$, which we have set to be at least $v^{\Omega(v)}$ (where the $\Omega$ hides a dependence on $\lambda$ in the sparse case).

    \begin{claim}\label{claim:single}
	For $u \in V$,
	let $N_u$ be the number of $H \in \cH$ that contain $u$, so that $H$ survives into both $G_1$ and $G_2$.
	Then $\Pr[N_u < \frac{1}{2}\E[N_u]] = o(1)$.
    \end{claim}
    \begin{proof}
	By \cref{lemma:recovery cover vertex}, $\Var(N_u) = o(\E[N_u])$.
	The claim follows by Chebyshev's inequality.
    \end{proof}

    \begin{claim}\label{claim:high-frac}
	With probability $1-o(1)$, at least a $\frac{1}{\log v}$ fraction of vertices $u \in V$ appear in at least $\frac{1}{2}\E[N_u]$ distinct $H \in \cH$ that survive the subsampling in both $G_0,G_1$.
    \end{claim}
    \begin{proof}
	Let $\mathbb{I}_u$ be the indicator that $u$ is contained in at least $\frac{1}{2}\E[N_u]$ such $H$.
	With~\cref{claim:single} we have that $\sum_{u\in V}\E [\mathbb{I}_u]\geq (1-o(1))n$.
	We can then apply the following averaging argument to conclude that with probability $1-\theta - o(1)$, a $\theta$-fraction of the vertices are covered by at least one $H \in \cH$:
\begin{align*}
    (1-o(1))n &\leq \sum_{u\in V}\E_{G_0,G_1}[\mathbb{I}_u] = \E\left[\sum_{u\in V}\mathbb{I}_u\right]
\leq \theta\cdot n\Pr\left[\sum_{u\in V}\mathbb{I}_u<\theta n\right] + n\cdot\Pr\left[\sum_{u\in V}\mathbb{I}_u\geq\theta n\right].
\end{align*}
    Thus, $\Pr\left[\sum_{u\in V}\mathbb{I}_u\geq\theta n\right]\geq1-\theta -o(1)$ as desired.
    Taking $\theta = \frac{1}{\log v}$ gives us the desired conclusion.
    \end{proof}

    \begin{claim}\label{claim:low-error}
	For $u \in V$, let $B_u$ be the number of $H \in \cH$ that contain $u$, so that $H$ appears at least twice in the base graph $\Base$, and so that at least one copy of $H$ survives into both $G_0$ and $G_1$.
	Then with probability at least $1-o(1)$, at most $\frac{n}{v^{1/4}}$ of $u \in V$ have $B_u \ge \frac{1}{v^{1/4}} \E[N_u]$.
    \end{claim}
    \begin{proof}
	Let $\mathbb{J}_u$ be the event that $B_u \ge v^{-1/4} \E[N_u]$.
	From \cref{lem:recovery uniquness single subgraph} and Markov's inequality, we have that
	\[
	\Pr[\mathbb{J}_u]
	\le \frac{\E[B_u]}{v^{-1/4}\E[N_u]}
	\le \frac{v^{-1/2}\E[N_u]}{v^{-1/4}\E[N_u]}
	\le O(v^{-1/4}).
	\]

	Then by Markov's inequality we get
	$$
	\Pr\left[\sum_{u \in V} \mathbb{I}_u \ge \frac{n}{\log^2 v} \right] \leq O(v^{-1/4}\log^2 v),
	$$
	as desired.
    \end{proof}

    In our algorithm, independently for every vertex which participates in at least $\frac{1}{2}\E[N_u]$ distinct $H \in \cH$ that appear in both $G_0,G_1$, we choose a uniformly random $H \in \cH$ that it participates in, and match it based on that copy of $H$.

    Given these conditions, \cref{claim:high-frac} (using \cref{lemma:recovery cover vertex}) proves that step \ref{step:match} of \cref{alg:recover} will match at least an $\Omega(\frac{1}{\log v})$ fraction of the vertices of $G_0,G_1$; since the $H \in \cH$ have no non-trivial automorphisms, the matches are always correct (and one-to-one) unless $H$ is ``bad'', i.e. there is more than one copy of the $H$ that was chosen in the base graph $\Base$.

    By \cref{claim:low-error}, with high probability the proportion of such ``bad'' $H$ incident on each $u \in V$ in the intersection graph is at most $\frac{\log^2 v}{v^{1/4}}$; since we make the choice of mapping for each vertex independently, a Chernoff bound implies that for all but a ${v^{-1/8}}$ fraction of the matched vertices, we will make the match based on an $H$ that appears only once in the base graph $\Base$.
    This completes the proof.
\end{proof}

\subsection{Bounds on the order-2 moments of subgraph counts}\label{sec:moments}
Here we prove \cref{lem:sparseequal,lem:denseequal,lem:sparseneq,lem:denseneq}, each of which bounds correlations of the counts of subgraphs in the test sets $\cH$, in the sparse and dense cases respectively.

\subsubsection{Sparse case}
\sparseequal*
\begin{proof}
    We will make use of \cref{lem:qbalsparse}, which states that if $J$ has  $(1-\alpha)v$ vertices, then it has density at most $1 + \frac{\lambda}{2} - \frac{1}{100}\lambda \alpha$; taking $|V(J)| = s$ then gives that $|E(J)| \le (1 + \frac{\lambda}{2} - \frac{1}{100}\lambda(1 - \frac{s}{v}))s$. First, we are going to remove the $\aut(J)$ with the following lemma.
    
    \begin{lemma}\label{lem:qbalsparse removing aut}
    	Suppose that $H$ is a connected graph on $v\ge 1$ vertices with maximum degree $\Delta$.
    	Suppose also that $pv^2 < 1$, and that the densest subgraph $J^* \subset H$ on $s$ vertices has $\alpha s$ edges.
    	Then,
    	\[
    	\sum_{\substack{J \subset H, |V(J)| = s}} \num{J}{H}^2 \aut(J) \cdot n^{-|V(J)|} p^{-|E(J)|}
    	\le v^{2\Delta (v-s)} \sum_{J \subset H, |V(J)| = s}  \num{J}{H}^2\cdot p^{-\alpha |V(J)|} n^{-|V(J)|}
    	\]
    \end{lemma}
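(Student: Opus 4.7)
The plan is to prove the inequality term-by-term after two reductions. First, by the definition of $\alpha$ as the edge density of the densest subgraph of $H$ on $s$ vertices, every iso type $J$ appearing in the sum satisfies $|E(J)| \le \alpha s$; combined with $p<1$ this yields $p^{-|E(J)|} \le p^{-\alpha s}$ for each term. The factor $n^{-s}$ cancels identically between the two sides. What remains is the purely combinatorial inequality
\[
\sum_{\substack{J \subset H\\|V(J)|=s}} \num{J}{H}^2 \aut(J) \,\le\, v^{2\Delta(v-s)} \sum_{\substack{J \subset H\\|V(J)|=s}} \num{J}{H}^2.
\]

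To establish this, I would interpret both sides combinatorially: the left side counts triples $(J_1, J_2, \phi)$ where $J_1,J_2$ are subgraph copies of an iso type $J$ in $H$ and $\phi : J_1 \to J_2$ is an isomorphism, whereas the right side counts only the ordered pairs $(J_1, J_2)$. The inequality thus reduces to the per-pair claim that, for any fixed pair of copies $(J_1, J_2)$ of the same iso type, there are at most $v^{2\Delta(v-s)}$ isomorphisms $\phi : J_1 \to J_2$. I would prove this by extending each $\phi$ to a partial map of $V(H)$ vertex by vertex in a BFS order rooted at $V(J_1)$: connectedness of $H$ guarantees that the BFS order reaches all of $V(H)$, and the max-degree bound $\Delta$ limits the set of admissible images of each new vertex to at most $v^{2\Delta}$ choices (each of the $\le 2\Delta$ already-placed neighbors gives up to $v$ choices for the image of the new vertex compatible with adjacency). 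Iterating over the $v-s$ vertices outside $V(J_1)$ yields at most $v^{2\Delta(v-s)}$ extensions per $\phi$, and since distinct $\phi$'s restrict differently to $V(J_1)$ they produce distinct extensions, giving the bound.

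The main obstacle lies in the extension step for iso types $J$ with many automorphisms, such as an edgeless graph where $\aut(J) = s!$: a per-iso-type bound of $\aut(J) \le v^{2\Delta(v-s)}$ would fail outright, but the reorganization of the sum into ordered pairs of copies of $J$ absorbs most of $\aut(J)$ on the right side, leaving only the per-pair extension count $v^{2\Delta(v-s)}$ to bound on the left. Verifying the BFS extension carefully---so that the $v^{2\Delta}$-per-step bound is uniform across the structure of $J$, and so that distinct $\phi$ really do yield distinct extensions---is the main technical subtlety. Once the per-pair bound is in place, summing over iso types $J$ and re-incorporating the $p^{-\alpha s}$ factor completes the proof.
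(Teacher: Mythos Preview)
Your reduction has a genuine gap. After you replace $p^{-|E(J)|}$ by $p^{-\alpha s}$ on the left, the remaining purely combinatorial inequality
\[
\sum_{\substack{J \subset H\\|V(J)|=s}} \num{J}{H}^2 \aut(J) \;\le\; v^{2\Delta(v-s)} \sum_{\substack{J \subset H\\|V(J)|=s}} \num{J}{H}^2
\]
is \emph{false} in general. Your own obstacle---the edgeless graph $J_0$ on $s$ vertices---already breaks it: for $s=v-1$ we have $\aut(J_0)=(v-1)!$ while $v^{2\Delta(v-s)}=v^{2\Delta}$, and the remaining terms on the right cannot compensate (the total number of $s$-vertex copies in $H$ is at most $v\cdot 2^{\Delta v/2}$, so the right-hand sum is at most $v^{2\Delta+2}\cdot 2^{\Delta v}$, which is dwarfed by $v^2(v-1)!$ for large $v$). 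The reorganization into pairs $(J_1,J_2)$ does not help: for a fixed pair of copies of the same iso type, the number of isomorphisms $\phi:J_1\to J_2$ \emph{is exactly} $\aut(J)$, so your ``per-pair'' claim is the same as the per-iso-type bound you already concede fails. No BFS extension can produce an injection from a set of size $(v-1)!$ into a set of size $v^{2\Delta}$.

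The paper's proof does not separate the $p$-dependence first; that is precisely where the hypothesis $pv^2<1$ enters. The paper groups each $s$-vertex subgraph $K$ by the vertex-induced subgraph $J$ it sits inside and the number $\ell$ of additional edges removed, so that $|E(K)|=|E(J)|-\ell$. Using that $H$ has no non-trivial automorphisms (an assumption used in the proof though omitted from the lemma statement; it holds in every application since $H$ comes from a test family), one gets $\aut(K)\le v^{2(\Delta(v-s)+\ell)}$: removing $v-s$ vertices deletes at most $\Delta(v-s)$ edges, then $\ell$ more, and a graph obtained from an asymmetric $H$ by deleting $m$ edges has at most $v^{2m}$ automorphisms. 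The crucial point is that the extra factor $v^{2\ell}$ in the automorphism bound is exactly cancelled by the factor $p^{\ell}$ coming from $p^{-|E(K)|}=p^{-|E(J)|}\cdot p^{\ell}$, since $(v^2p)^{\ell}\le 1$. Only \emph{after} this cancellation does one bound $p^{-|E(J)|}\le p^{-\alpha s}$ for the vertex-induced $J$. Your step~1 throws away the $p^{\ell}$ savings before they can be used, which is why the residual combinatorial inequality is unprovable.
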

    
    We will prove~\cref{lem:qbalsparse removing aut} in the end of this subsubsection. Now, note that there are only $\binom{v}{s} \le \min(v^s/s!,v^{v-s})$ vertex induced subgraphs of $H$ of size $s$, $\aut(J)$ can be trivially upper bounded by $|V(J)|!$, and the maximum degree of $J$ is at most $3$.
    Therefore, splitting the sum by the size of the subgraph $J$,
    \begin{align*}
        \sum_{\substack{J \subset H\\|V(J)|\ge 1}} \num{J}{H}^2\cdot\aut(J) \cdot n^{-|V(J)|}\cdot q^{-|E(J)|}
        &\le \sum_{s=1}^{(1 -\theta) v} v^{2s} \cdot n^{-s} q^{-(1+\frac{\lambda}{2} - \frac{1}{100}\lambda(1-\frac{s}{v}))s} + \sum_{s=(1 -\theta) v + 1}^{v-1} v^{8(v-s)} \cdot n^{-s} q^{-(1+\frac{\lambda}{2} - \frac{1}{100}\lambda(1-\frac{s}{v}))s},
        \intertext{And taking the maximal density in each of the sums,}
        &\le \sum_{s=1}^{(1 - \theta)v} \left(\frac{v^2}{n q^{1 + \frac{1}{2}\lambda - \frac{1}{100}\lambda \theta}}\right)^{s} + v^{8v}\sum_{s=(1 - \theta) v + 1}^{v-1} \left(\frac{1}{v^8 n q^{1 + \frac{1}{2}\lambda - \frac{1}{100}\lambda \frac{1}{v}}}\right)^{s}
        \intertext{Since we have assumed that $v^8 nq^{1+\frac{1}{2}\lambda - \frac{1}{100}\lambda \frac{1}{v}} \ge 1$, the first term in the second sum dominates:}
        &\le \sum_{s=1}^{(1 - \theta)v} \left(\frac{v^2}{n q^{1 + \frac{1}{2}\lambda - \frac{1}{100}\lambda \theta}}\right)^{s} + v^{8v}\theta v \left(\frac{1}{v^8 n q^{1 + \frac{1}{2}\lambda - \frac{1}{100}\lambda \frac{1}{v}}}\right)^{(1 - \theta)v} \\
        \intertext{And finally simplifying both terms,}
        &\le \sum_{s=1}^{(1 - \theta)v} \left(\frac{v^2q^{\frac{1}{100}\lambda \theta}}{n q^{1 + \frac{1}{2}\lambda}}\right)^{s} + \theta v\left(\frac{1}{nq^{1 + \frac{1}{2}\lambda}}\right)^v \left(v^8n q^{1 + \frac{1}{2}\lambda}\right)^{\theta v} q^{\frac{1}{100}\lambda(1 - \theta)}.
    \end{align*}
    This gives us our conclusion.
\end{proof}

\sparseneq*
\begin{proof}
We split up the sum into $J$ corresponding to an isolated vertex, for which $\num{J}{H} = \num{J}{H'} = v$, and into $J$ with more vertices (and therefore possibly edges):
\begin{align*}
	\sum_{\substack{J \subseteq H,H'\\|V(J)|\ge 1}} \num{J}{H}\cdot \num{J}{H'}\cdot\aut(J)\cdot n^{-|V(J)|+1}q^{-|E(J)|}
     &\le v^2 + \sum_{k=2}^v\sum_{\substack{J\subseteq H,H'\\|V(J)| = k}}\num{J}{H}\cdot \num{J}{H'}\cdot\aut(J)\cdot n^{-|V(J)|+1}\cdot q^{2e - |E(J)|}\\
     \intertext{Since we have by \cref{prop:small-d} that $|E(J)| \le (1+\frac{1}{3}\lambda)|V(J)| - 2$,}
     &\le v^2 + \sum_{k=2}^v\sum_{\substack{J\subseteq H,H'\\|V(J)| = k}}\num{J}{H}\cdot \num{J}{H'}\cdot\aut(J)\cdot n^{-k+1}\cdot q^{- (1+\frac{1}{3}\lambda)k + 2}
     \intertext{and finally because there are at most $\ff{v}{k}$ vertex-induced subgraphs $J$ on $k$ vertices and each of them has at most $v^{|V(J)|}$ non-trivial automorphisms, we obtain the crude upper bound}
     &\le v^2 + nq^2\sum_{k=2}^v v^{2k}\cdot n^{-k}\cdot q^{- (1+\frac{1}{3}\lambda)k}
     \,=\, v^2 + nq^2\sum_{k=2}^v \left(\frac{v^{2}}{n\cdot q^{(1+\frac{1}{3}\lambda)}}\right)^k.
\end{align*}
    So long as $v^2 = o(n q^{1 + \frac{1}{3}\lambda})$, the conclusion holds.
\end{proof}

\begin{proof}[Proof of~\cref{lem:qbalsparse removing aut}]
Let us start with an observation on the number of non-trivial automorphsims for a vertex-induced subgraph of a graph that has no non-trivial automorphism.
\begin{claim}\label{claim:auto}
	Suppose that $H$ is a $v$-vertex graph with no non-trivial automorphisms, and let $J$ be a graph obtained from $H$ by removing $\ell$ edges.
	Then $J$ has at most $v^{2\ell}$ non-trivial automorphsims.
\end{claim}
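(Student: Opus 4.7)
My plan is to use an orbit-stabilizer argument, leveraging the trivial-automorphism hypothesis on $H$ as the key input. Let $R = E(H) \setminus E(J)$, so $|R| = \ell$. I will consider the natural action of $\aut(J)$ on the collection of $\ell$-element subsets of pairs from $V(H)$, where $\sigma \in \aut(J)$ sends a set $S$ to $\{\sigma(e): e \in S\}$, and analyze the orbit of $R$ under this action.

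The first step is to identify the stabilizer of $R$. If $\sigma \in \aut(J)$ satisfies $\sigma(R) = R$, then since $\sigma(E(J)) = E(J)$ by assumption, and $E(H) = E(J) \sqcup R$, we get $\sigma(E(H)) = E(H)$, i.e.\ $\sigma \in \aut(H)$. By hypothesis $\aut(H)$ is trivial, so the stabilizer of $R$ in $\aut(J)$ contains only the identity. By orbit-stabilizer, $|\aut(J)|$ equals the size of the orbit of $R$.

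The second step is to bound the orbit. Every element of the orbit is an $\ell$-element subset of pairs from $V(H)$, and there are only $\binom{\binom{v}{2}}{\ell} \le v^{2\ell}$ such subsets. Hence $|\aut(J)| \le v^{2\ell}$, which is stronger than the stated bound on \emph{non-trivial} automorphisms.

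There is no serious obstacle here; the argument is just a clean application of orbit-stabilizer once one observes that the removed edge set $R$ plays the role of a rigidifying ``marker'' for automorphisms of $J$ relative to $H$. The only thing to be careful about is that the action is well-defined (which it is, since $\sigma$ is a bijection on $V(H)$ and therefore permutes edge sets), and that the inequality $\binom{\binom{v}{2}}{\ell} \le v^{2\ell}$ holds (which follows from $\binom{N}{\ell} \le N^\ell$ with $N = \binom{v}{2} \le v^2$).
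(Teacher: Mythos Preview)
Your proof is correct and rests on the same core idea as the paper's: an automorphism of $J$ that preserves the set $R$ of removed edges must be an automorphism of $H$, hence trivial. The paper phrases this as a pigeonhole argument on the ordered tuple $S$ of endpoints of the $\ell$ removed edges (with at most $v^{|S|} \le v^{2\ell}$ possible images), whereas you phrase it as orbit--stabilizer on the unordered set $R$ of removed edges (with orbit size at most $\binom{\binom{v}{2}}{\ell} \le v^{2\ell}$); these are two packagings of the same argument, and your formulation is arguably cleaner.
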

\begin{proof}
	Let $S \subset V(J)$ be the ordered set of endpoints of the $\ell$ edges removed.
	We have $|S| \le 2\ell$. 
	Now suppose by way of contradiction that $J$ has more than $v^{2\ell}$ non-trivial automorphisms. 
	Then by a pigeonhole argument, there must be an ordered set of vertices $T \subset V(J)$ and two distinct automorphisms, $\pi$ and $\sigma$, such that $\pi(S) = \sigma(S) = T$.
	But then, $\pi \circ \sigma^{-1}$ is a non-trivial automorphism of $J$ that maps $S$ to $S$, and can therefore be extended to a nontrivial automorphism of $H$.
	This is a contradiction. 
\end{proof}

Let $\cS_{v-s}$ be the set of vertex-induced subgraphs of $H$ obtainable by removing $(v-s)$ vertices, and for each $J \in \cS_{v-s}$, let $\cS(J)_{\ell}$ be the set of edge-induced subgraphs obtainable from $J$ by removing $\ell$ edges.
If there is a subgraph $K \subset H$ which appears in more than one set $\cS(J)_{\ell}$, we remove some of its appearances so that the $\cS(J)_{\ell}$ form a partition.
Re-writing the sum on the left-hand side, we have
\begin{align}
\sum_{\substack{J \subset H \\ |V(J)| = s}} \num{J}{H}^2 \aut(J) \cdot n^{-|V(J)|} p^{-|E(J)|}
&= \sum_{J \in \cS_{v-s}} \sum_{\ell = 0}^{\Delta v/2} \sum_{K \in \cS(J)_{\ell}} \num{K}{H}^2 \aut(K) \cdot n^{-|V(K)|} p^{-|E(K)|}\\
&= \sum_{J \in \cS_{v-s}} \sum_{\ell = 0}^{\Delta v/2} \sum_{K \in \cS(J)_{\ell}} \num{K}{H}^2 \aut(K) \cdot n^{-|V(K)|} p^{-|E(J)| + \ell}\\
\intertext{Now we bound $\aut(K)$: if we remove $v-s$ vertices from $H$ this removes at most $\Delta (v-s)$ edges, and then we remove at most $\ell$ additional edges, which removes at most $\Delta (v-s) + \ell$ edges.
	Then applying \cref{claim:auto}:}
&\le \sum_{J \in \cS_{v-s}} \sum_{\ell = 0}^{\Delta v/2} \sum_{K \in \cS(J)_{\ell}} \num{K}{H}^2 v^{2(\Delta (v-s) + \ell)} \cdot n^{-|V(K)|} p^{-|E(J)| + \ell}\\
&\le \sum_{J \in \cS_{v-s}} \left(\max_{\ell \in [\Delta v/2]} (v^2p)^\ell \right)\sum_{\ell = 0}^{\Delta v/2} \sum_{K \in \cS(J)_{\ell}} \num{K}{H}^2 v^{2\Delta s} n^{-|V(K)|} p^{-|E(J)|} 
\intertext{and so long as $pv^2 < 1$, this is maximized for $\ell = 0$:}
&\le v^{2\Delta (v-s)}\sum_{J \in \cS_{v-s}} \sum_{\ell = 0}^{\Delta v/2} \sum_{K \in \cS(J)_{\ell}} \num{K}{H}^2  n^{-|V(K)|} p^{-|E(J)|} \\
&\le v^{2\Delta (v-s)}\sum_{J \subset H, |J| = s}  \num{J}{H}^2 \cdot p^{-\alpha|V(J)|} n^{-|V(J)|},
\end{align}
where in the final line we have used that the $S(J)_{\ell}$ form a partition, and that $\alpha$ is the maximum density of any subgraph $J$.
\end{proof}

	\subsubsection{Dense case}
	\denseequal*
\begin{proof}
    When $H\in\cH$ have average degree $d'= d + \lambda$ for $d > 2$, our construction is as follows: we sample a $d$-regular graph on $v$ vertices, then add a uniformly random matching on $\lambda v$ vertices.

    We will rely upon the following claim, which bounds the density of subgraphs of $H$.
    \begin{claim*}[consequence of \cref{claim:densebalance}]
	For $H \in \cH_{d'}^v$ with $d' = d + \lambda$ for $d \ge 6$ and $\lambda \in [0,1]$,
	For any $J \subseteq H$ of with $|V(J)| = (1-\theta)v$ for $\theta \in [0,1]$,
	\[
	    |E(J)| \le
	    \begin{cases}
		\left(\frac{d'}{2} - \frac{1}{50}\right)|V(J)| & \text{ if } \theta < \frac{1}{2},\\
		\frac{d'}{2}|V(J)| - \frac{1}{50}\theta v & \text{ if } \theta \ge \frac{1}{2}.
	    \end{cases}
	    \]
    \end{claim*}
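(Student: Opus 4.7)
The proof is a direct specialization of \cref{claim:densebalance} — which gives a quantitative density bound on every subgraph $J \subseteq H$ — followed by a case split on $\theta$. My plan is to invoke that claim to obtain an inequality of the form
\[
    |E(J)| \;\leq\; \tfrac{d'}{2}\,|V(J)| \;-\; h(\alpha, d')\cdot v
\]
for $|V(J)| = \alpha v$ with $\alpha = 1-\theta$, where $h(\alpha,d')$ is an explicit non-negative function encoding the expansion of a random $d$-regular graph (with $d \ge 6$) perturbed by a random matching of size $\lambda v / 2$. The key structural fact underlying $h$ is standard: in a random $d$-regular graph with $d \ge 6$, every vertex subset $S$ has edge boundary at least proportional to $\min(|S|,v-|S|)$, so the edges removed when passing from $H$ to $J$ cannot be concentrated on a small boundary.

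For the regime $\theta < \tfrac{1}{2}$, the subgraph has at least $v/2$ vertices, so $\alpha$ is bounded away from $1$ by a universal constant. In this range the relative density deficit $h(\alpha,d')/\alpha$ controlled by \cref{claim:densebalance} is itself a positive constant; the construction parameters are chosen so that this deficit exceeds $\tfrac{1}{50}$ whenever $d' \ge 6$. Rewriting $|E(J)| \le \bigl(\tfrac{d'}{2} - h(\alpha,d')/\alpha\bigr)|V(J)|$ then yields the first claimed bound $|E(J)| \le \bigl(\tfrac{d'}{2} - \tfrac{1}{50}\bigr)|V(J)|$.

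For the regime $\theta \ge \tfrac{1}{2}$, the subgraph is small, and it is more natural to keep the bound in the absolute form $|E(J)| \le \tfrac{d'}{2}|V(J)| - h(\alpha,d')\cdot v$. Here the missing edges must at minimum be the boundary of the complement $\bar S$, which has $\theta v \ge v/2$ vertices; by the expansion property $|E(S,\bar S)| \ge c\, d\, \theta v$ for a constant $c$ independent of $\theta$, so $h(\alpha,d')\cdot v \ge \tfrac{\theta v}{50}$ once one tracks through the constants provided by \cref{claim:densebalance} and the assumption $d \ge 6$. This gives the second bound $|E(J)| \le \tfrac{d'}{2}|V(J)| - \tfrac{\theta v}{50}$.

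The only nontrivial step is verifying that the same numerical constant $\tfrac{1}{50}$ works on both sides of $\theta = \tfrac{1}{2}$; this is where having $d \ge 6$ (rather than $d = 3$) provides the necessary slack. The random matching of $\lambda v / 2$ extra edges does not change this conclusion because its contribution to any subgraph's edge count is bounded by its overall size and, in the worst case, merely shifts the density by $O(\lambda)$, which is absorbed into the $\tfrac{d'}{2}$ baseline. Everything else is bookkeeping in converting the density form of \cref{claim:densebalance} into the two piecewise forms in the statement.
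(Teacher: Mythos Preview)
Your argument applies expansion to the wrong side in both regimes, and in fact the statement as written cannot be proved: the two cases are swapped relative to \cref{claim:densebalance}, and in that swapped form the $\theta<\tfrac12$ bound is simply false. Concretely, take $\theta = 1/v$, so $J$ is $H$ with a single vertex deleted. Then $|E(J)|\ge \tfrac{d'}{2}v-(d+1)$, whereas the claimed bound $(\tfrac{d'}{2}-\tfrac{1}{50})(v-1)$ would force $|E(J)|\le \tfrac{d'}{2}v - \Theta(v)$, a contradiction for large $v$. So no argument can establish the $\theta<\tfrac12$ case as stated.

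What \cref{claim:densebalance} actually proves --- and what the surrounding proof of \cref{lem:denseequal} actually uses, if you look at which bound is applied to which range of $k$ --- is the version with the regimes exchanged: when $\theta>\tfrac12$ (so $|V(J)|\le v/2$ is the \emph{small} side) one expands $V(J)$ itself to obtain the multiplicative deficit $|E(J)|\le(\tfrac{d'}{2}-\tfrac{1}{50})|V(J)|$; when $\theta\le\tfrac12$ (so the complement, of size $\theta v$, is small) one expands the complement to obtain the additive deficit $|E(J)|\le \tfrac{d'}{2}|V(J)|-\tfrac{1}{50}\theta v$. Your write-up inverts this. For $\theta<\tfrac12$ you assert that ``$\alpha$ is bounded away from $1$ by a universal constant,'' which is false as $\theta\to 0$ and is exactly where the argument breaks. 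For $\theta\ge\tfrac12$ you expand the complement $\bar S$ of size $\theta v\ge v/2$, but Bollob\'as's expansion bound (\cref{thm:expands}) only controls the boundary of sets of size at most $v/2$; the set you should be expanding is $S=V(J)$, and that yields a deficit proportional to $|V(J)|=(1-\theta)v$, not to $\theta v$. Once the cases are restored to the order in \cref{claim:densebalance}, the expansion sketch you give is essentially the paper's own argument.
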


    Thus, we can split $J$ according to size.
    We have that:
    \begin{align*}
	&\sum_{\substack{J \subseteq H}} \num{J}{H}^2\cdot\aut(J)\cdot n^{-|V(J)|}q^{-|E(J)|}\\
	&\le  \sum_{k=0}^{v/2-1}\sum_{\substack{J \subseteq H\\|V(J)| = k}}\num{J}{H}^2\cdot\aut(J)\cdot n^{-|V(J)|}q^{-|E(J)|}
	+ \sum_{k=v/2}^{v-1} \sum_{\substack{J \subseteq H\\|V(J)| = k}}\num{J}{H}^2\cdot\aut(J)\cdot n^{-|V(J)|}q^{-|E(J)|}.
    \end{align*}
    We first bound the first summation.
    We have that
    \begin{align*}
	\sum_{k=0}^{v/2-1}\sum_{\substack{J \subseteq H\\|V(J)| = k}}\num{J}{H}^2\cdot\aut(J)\cdot n^{-|V(J)|}q^{-|E(J)|}
	&\le \sum_{k = 0}^{v/2-1} \left(\frac{v^2}{n q^{\frac{d'}{2} - \frac{1}{50}}}\right)^{k}
	\le 2,
    \end{align*}
    where we have upper bounded $\num{J}{H}$ by $v!/|V(J)|!$ and upper bounded $\aut(J)$ by $|V(J)|!$ and obtained the final inequality used by our choice of $d'$, $nq^{\frac{d'}{2} - \frac{1}{50}} \ge 2v^2$.

    The second summation we bound in the same manner, but now using that if $k = (1-\theta)v$ then $\theta v = v - k$, and there are at most $\ff{v}{v-k}$ subgraphs of size $k$:
    \begin{align*}
	\sum_{k=v/2}^{v-1}\sum_{\substack{J \subseteq H\\|V(J)| = k}}\num{J}{H}^2\cdot\aut(J)\cdot n^{-|V(J)|}q^{-|E(J)|}
	&\le \sum_{k=v/2}^{v-1} v^{2(d'+1)(v-k)}\sum_{\substack{J \subseteq H\\|V(J)| = k}}\num{J}{H}^2 n^{-|V(J)|}q^{-\left(\frac{d'}{2}k - \frac{1}{50}(v-k)\right)}\\
	&\le \sum_{k=v/2}^{v-1} \frac{v^{2(d'+2)(v-k)}}{n^k q^{\frac{d'}{2}k - \frac{1}{50}(v-k)}}\\
	&\le v^{2(d'+2)v} q^{\frac{1}{50}v}\sum_{k=v/2}^{v-1} \left(\frac{1}{v^{2(d'+2)} \cdot n\cdot q^{(\frac{d'}{2} + \frac{1}{50})}}\right)^k\\
	&\le v^{2(d'+2)v} q^{\frac{1}{50}v}\cdot 2\left(\frac{1}{v^{2(d'+2)} \cdot n\cdot q^{(\frac{d'}{2} + \frac{1}{50})}}\right)^{v-1} = v^{2(d'+2)} q^{\frac{1}{50}}\cdot 2 \left(\frac{1}{nq^{d'/2}}\right)^{v-1}.
    \end{align*}
	Where to obtain the first inequality we have applied~\cref{lem:qbalsparse removing aut} and to obtain the final inequality we have bounded this geometric sum by its leading term where $k = v-1$, and have used that $v^{2(d'+2)} n q^{d'/2 + \frac{1}{50}} \le \frac{1}{2}$.

	Combining these two terms, the term from the second summation has larger magnitude than the term from the first summation, as we have assumed $(nq^{d'/2})^{v-1} \le v^{2(d'+2)} q^{1/50}$.
	    This completes the proof.
\end{proof}

\denseneq*

\begin{proof}
    We will use \cref{prop:large-d}, from which we have that for $\beta = (\lambda + \frac{1}{26})\frac{1}{d'}$, every subgraph $J \subseteq H,H'$ has at most $|E(J)| \le \frac{1}{2}(1-\beta)d'\cdot |V(J)| - \frac{d'}{2}$.
    Since $\frac{1}{2}(1-\beta)d' < \frac{1}{2}d' - \xi$, by applying the trivial bound $\num{J}{H} \le v^{|V(J)|}/|V(J)|!$ and $\aut(J)\leq|V(J)|!$ (and pulling out the special case $|V(J)| = 1$),
    \begin{align*}
	&\sum_{\substack{J \subseteq H,H'\\|V(J)|\ge 1}} \num{J}{H}\cdot \num{J}{H'}\cdot\aut(J)\cdot n^{-|V(J)|+1}q^{-|E(J)|}\\
	&\le v^2 + nq^{d'/2}
	\cdot \sum_{k=2}^v \sum_{\substack{J \subseteq H,H'\\|V(J)|=k}} \num{J}{H}\cdot \num{J}{H'}\cdot\aut(J)\cdot n^{-|V(J)|}q^{-(1-\beta)\frac{d'}{2}|V(J)|}\\
	&\le v^2 + nq^{d'/2} \cdot \left(\sum_{k=2}^v \left(\frac{v^2}{nq^{(1-\beta)\frac{d'}{2}}}\right)^{k-1}\right),
    \end{align*}
    and using that $v^2 \ll nq^{(1-\beta)\frac{d'}{2}}$, we have that the geometric sum is $1+o(1)$, which gives us our conclusion.
\end{proof}

\subsection{Boosting from a partial assignment}\label{sec:boost}

So far, we have shown that our algorithm partially recovers the ground truth permutation (with some errors).
In this section we show how to ``boost'' this to full recovery of the ground truth permutation.

There are many algorithms in the literature on the graph matching problem that deal with recovering the ground truth permutation from such a  partial matching on a subset of the vertices,  which is typically known as a \textit{seed set}~\cite{yartseva2013performance,lyzinski2014seeded,lyzinski2014other}.
These works typically study a simple percolation, which can be shown to succeed in a random graph~\cite{janson2012bootstrap}.
However, our setting is somewhat different:

\begin{itemize}
    \item These algorithms typically assume that the seed set is chosen at \textit{random}, while in our setting the seed set (i.e., partial solution)  is learned from the graph and has arbitrary correlation with it.

    \item These algorithms typically (though not always~\cite{kazemi2015growing}) assume that the permutation is known \textit{perfectly} on the seed set (i.e., that they are given an $(s,1)$ partial solution), while it will be more convenient for us to allow some probability of error.
\end{itemize}

For this reason, we reprove here that a percolation-like algorithm succeeds in our more general setting (albeit with worse bounds than those of \cite{janson2012bootstrap}).
Specifically, we show that we can boost an $(\tilde{\Omega}(n),1-o(1))$ partial solution to a the full ground truth:

\begin{lemma}[Boosting from partial knowledge] \label{lem:boost}
    Let $p,\gamma,n,\epsilon,c,\theta$ be such that $p \gamma n \ge \log^c n$ for $c > 1$, $\epsilon\theta = o(\gamma^2)$ and $\theta = \Omega(\log^{1-c} n)$.
    Then there is a polynomial-time algorithm $A$ such that with probability $1-o(1)$ over the choice of $(G_0,G_1,\pi^*)$ from $\dstruct(n,p;\gamma)$, if $A$ is given $G_0,G_1$ and any  $(\theta n,1-\epsilon)$ partial solution $\pi$, then it outputs the ground truth permutation $\pi^*$.
\end{lemma}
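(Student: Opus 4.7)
The plan is to use a voting procedure in the spirit of percolation graph matching, adapted to tolerate errors in the seed set. Given the $(\theta n, 1-\epsilon)$ partial solution $\pi \colon S \to [n]$ (with $|S| \ge \theta n$ and at most $\epsilon|S|$ misassignments), for every pair $(u,w) \in [n] \times [n]$ define the seed-witness count
\[
    M(u,w) \;:=\; \sum_{v \in S \setminus \{u\}} \mathbf{1}\!\left[\{u,v\} \in E(G_0)\right] \cdot \mathbf{1}\!\left[\{w, \pi(v)\} \in E(G_1)\right],
\]
and output $\hat\pi(u) := \arg\max_{w \in [n]} M(u,w)$, breaking any ties and collisions arbitrarily. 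Since $\pi^*$ is a bijection, it suffices to show $\hat\pi(u) = \pi^*(u)$ for every $u$ with probability $1-o(1)$.

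The expectations separate cleanly. When $v \in S$ is correctly matched and $w = \pi^*(u)$, both indicators are governed by the single base-graph edge $\{u,v\}$, which is present in $\Base$ with probability $p$ and then survives both subsamples with probability $\gamma^2$, contributing joint probability $p\gamma^2$. In every other case (either $\pi(v) \neq \pi^*(v)$ or $w \neq \pi^*(u)$) the two indicators refer to two distinct and hence $p$-independent edges of $\Base$, contributing only $p^2\gamma^2$. Writing $\mu := |S|\, p\gamma^2 = \Theta(\theta n p \gamma^2)$, this yields
\[
    \E[M(u, \pi^*(u))] \;\ge\; (1-\epsilon)\mu \quad \text{and} \quad \E[M(u,w)] \;\le\; p\mu + \epsilon\mu \text{ for } w \neq \pi^*(u),
\]
so the expected gap is $\Omega(\mu)$ using $\epsilon,p = o(1)$.

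Next I apply a Chernoff bound to $M(u,w)$, which is a sum of $|S|$ bounded indicators. The hypotheses $pn\gamma \ge \log^c n$ with $c > 1$ and $\theta = \Omega(\log^{1-c} n)$, combined with the slack $\epsilon\theta = o(\gamma^2)$ (which ensures that the error contribution to the mean and variance is negligible compared to the signal), force $\mu \gg \log n$; consequently $\Pr[\,|M(u,w) - \E M(u,w)| > \tfrac{1}{3}\mu\,] \le n^{-\omega(1)}$. A union bound over the $n^2$ pairs $(u,w)$ then produces $\hat\pi(u) = \pi^*(u)$ for every $u$ simultaneously with probability $1-o(1)$.

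The main obstacle is that $\pi$ is itself derived from $(G_0,G_1)$, so the indicators in $M(u,w)$ are not independent of the seed labels $\pi(v)$, and a naive Chernoff is not immediately valid. My plan is to handle this by conditioning on $\Base$, $\pi^*$, and the subsampling status in $G_0$ and $G_1$ of every base-graph edge incident to $S \cup \pi(S)$; this information already determines $\pi$, yet for any indices $u \notin S$ and $w \notin \pi(S)$ the $2|S|$ subsampling coins appearing in $M(u,w)$ are still unexposed and mutually independent, so the concentration argument goes through conditionally. The vertices in $S \cup \pi(S)$ are then recovered by a second voting round in which the already-recovered vertices outside $S \cup \pi(S)$ play the role of the seed set and the previous seeds are the queries; the first round supplied $(1-o(1))n$ correct matches, so the same calculation applies. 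Carrying out this conditioning/two-round argument and tracking the quantitative slack in the $\epsilon\theta = o(\gamma^2)$ hypothesis is the step I expect to require the most care.
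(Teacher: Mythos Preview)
Your voting scheme captures the same signal the paper uses (common seed-neighbors), but the conditioning step you propose does not go through, and this is the crux of the lemma. The paper explicitly emphasizes the order of quantifiers: with high probability over $(G_0,G_1,\pi^*)$, the algorithm must succeed for \emph{every} $(\theta n,1-\epsilon)$ partial solution $\pi$. Thus $\pi$---including its domain $S$ and its error pattern---may be an arbitrary measurable function of the full graphs $G_0,G_1$. Revealing $\Base$, $\pi^*$, and the subsampling coins on edges incident to $S\cup\pi(S)$ does \emph{not} determine $\pi$, because the adversary may have chosen $S$ and the errors based on edges far from $S$. Worse, under the natural reading of ``incident to $S\cup\pi(S)$'', every edge $\{u,v\}$ with $v\in S$ and every edge $\{w,\pi(v)\}$ with $\pi(v)\in\pi(S)$ is already revealed, so the $2|S|$ coins you want to keep fresh for $M(u,w)$ have in fact been exposed. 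There is no sigma-field you can condition on that simultaneously fixes $\pi$ and leaves those coins independent.

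A second, related issue: your expectation calculation $\E[M(u,\pi^*(u))]\ge(1-\epsilon)\mu$ implicitly treats $S$ as independent of the graph. For adversarial $S$ of size $\theta n$ with $\theta=1/\mathrm{polylog}\,n$, the adversary can arrange that some specific vertices $u$ have \emph{no} intersection-graph neighbors in $S$ at all (each $u$ has only $\approx np\gamma^2$ such neighbors, far fewer than the $(1-\theta)n$ vertices the adversary may exclude), so $M(u,\pi^*(u))$ can be zero and a single voting round cannot recover those $u$.

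The paper handles both issues by proving properties of the random graphs that hold with high probability \emph{uniformly over all subsets}---specifically, an edge-expansion lemma (for every $S$, $|E(S,\bar S)|\approx q|S||\bar S|$) and a ``few common neighbors'' lemma---and then running an \emph{iterative} percolation: repeatedly add any pair $(u,w)$ whose common-seed count exceeds a threshold, growing the seed set until it covers everything, followed by a swap-based fixing phase. Because the analysis is deterministic once these uniform expansion properties hold, no conditioning on $\pi$ is needed, and the iterative growth is precisely what overcomes the adversary's ability to starve individual vertices in any single round.
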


\cref{lem:boost} completes the proof of \cref{thm:recovery}. Note the order of quantifiers in \cref{lem:boost}: the partial solution $\pi$ can be \textit{adversarially chosen},  as long as it satisfies  the desired conditions.
The proof of \cref{lem:boost} is obtained by analyzing the following percolation algorithm.
This is the same  algorithm used in prior works except that we set our parameters in a particular way, and  the ``seed set'' can depend arbitrarily on the graph.
We will denote by $\Inter$ the ``intersection graph'' of $G_0$ and $G_1$ (see \cref{def:intersection}). Note that $\Inter$ is distributed according to $\bb{G}(n,q)$ where  $q=\noise^2 p$.
We let $c$ be some sufficiently large constant.

\begin{algorithm}[Boosting from a non-random seed].\label{algo:boosting}

    {\bf Input:} $(G_0,G_1,\pi^*)\sim \dstruct(n,p;\noise)$ with and a partial one-to-one map $\pi:V(G_0) \rightarrow V(G_1)$, defined on at least $\theta n$ vertices of $V(G_0)$, such that for all but  an $\epsilon$ fraction of the $u$'s on which $\pi$ is defined, $\pi(u)=\pi^*(u)$.

    {\bf Goal: } Recover $\pi^*$
    \begin{enumerate}

\item Let $u$ be a vertex of $G_0$ and $v$ be a vertex of $G_1$. We define the number of ``common neighbors'' of $u,v$, denoted as  $N(u,v)$, to be the number of $w$'s such that \textbf{(i)} $w$ is a neighbor of $u$, \textbf{(ii)}  $\pi$ is defined on $w$, and \textbf{(iii)} $\pi(w)$ is a neighbor of $v$.

\item Let $\Delta \in \mathbb{N}$ be a parameter, which we set as $\lfloor \theta \gamma^2 np/100 \rfloor$ where $\theta$ is the fraction of vertices on which $\pi$ is defined.

\item \textbf{Step 1: Completing the solution.} Repeat the following until we cannot do so anymore: If there is a pair $(u,v)$  where $\pi$ is not defined at $u$ with $N(u,v) \geq \Delta$ then define $\pi(u)=v$.

\item If the permutation is not yet complete at the end of step 1, complete it to a full permutation arbitrarily.

\item \textbf{Step 2: Fixing the solution.} We let $\Delta' = \lfloor \gamma^2 np/100 \rfloor$. Repeat the following until we cannot  do it anymore: if there exists a pair $u,v$ with $N(u,v) \geq \Delta'$ such that $N(u,\pi(u)),N(\pi^{-1}(v),v) \leq \Delta'/10$ then modify the permutation $\pi$ to map $u$ to $v$ and map $\pi^{-1}(v)$ to $\pi(u)$.

\item Output $\pi$

\end{enumerate}

\end{algorithm}

This algorithm will indeed run in polynomial time: In Step 1 we keep increasing the number of inputs on which $\pi$ is defined, and in Step 2 we increase in every iteration the potential $\sum_u N(u,\pi(u))$ and so we can only run in polynomial time.
Lemma~\ref{lem:boost} will follow from the following two claims:

\begin{claim}[Step 1 analysis] \label{clm:steponeboost}
With probability $1-o(1)$ over the choice of the graphs $(G_0,G_1)$, by the end of Step 1, the permutation $\pi$ is defined on all inputs, and there are at most $10\epsilon n$ vertices $v$ on which $\pi(v) \neq \pi^*(v)$.
\end{claim}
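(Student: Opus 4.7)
My plan is to analyze the common-neighbor count $N(u,v)$ separately for the correct match $v=\pi^*(u)$ (showing it reliably exceeds the threshold $\Delta$) and for wrong matches $v\ne\pi^*(u)$ (showing that few vertices have any wrong $v$ crossing the threshold). Decompose the domain of $\pi$ as $S^+:=\{w\in S: \pi(w)=\pi^*(w)\}$ (with $|S^+|\ge (1-\epsilon)\theta n$) and $S^-:=S\setminus S^+$ (with $|S^-|\le \epsilon\theta n$), and let $\Inter\sim \bb{G}(n,q)$ with $q=p\gamma^2$ denote the intersection graph from \cref{def:intersection}. As preliminaries, standard Chernoff and union bounds, enabled by $p\gamma n\ge \log^c n$ with $c>1$, yield that with probability $1-o(1)$ every vertex degree in $G_0, G_1, \Inter$ concentrates within $(1\pm o(1))$ of its mean, and that $\Delta=\omega(\log n)$ (so that subsequent Chernoff arguments give sub-polynomial failure probabilities).

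For the correct pair $(u,\pi^*(u))$: each $w\in S^+$ with $(w,u)\in\Inter$ contributes to $N(u,\pi^*(u))$, because $(w,u)\in\Inter$ means $(w,u)\in G_0$ and $(\pi^*(w),\pi^*(u))=(\pi(w),v)\in G_1$. Thus $N(u,\pi^*(u))\ge |N_\Inter(u)\cap S^+|$, which (by standard concentration for a sum of indicator random variables corresponding to $|S^+|\ge (1-\epsilon)\theta n$ independent potential $\Inter$-edges, each present with probability $q$) concentrates around $(1-\epsilon)\theta qn=100(1-\epsilon)\Delta$. A Chernoff bound gives $N(u,\pi^*(u))\ge\Delta$ with probability $1-e^{-\Omega(\Delta)}$ per $u$, and a union bound over $u$ shows this holds simultaneously with probability $1-o(1)$. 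Consequently Step~1 eventually assigns an image to every vertex, and $\pi$ becomes total.

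For wrong pairs $(u,v)$ with $v\ne\pi^*(u)$, decompose $N(u,v)=N^+(u,v)+N^-(u,v)$ based on whether the index $w$ is in $S^+$ or $S^-$. For $N^+$: since $v\ne\pi^*(u)$, the two base-graph edges $(w,u)$ and $(w,(\pi^*)^{-1}(v))$ are distinct, so for $w\in S^+$ the indicator $\mathbf{1}[(w,u)\in G_0]\,\mathbf{1}[(\pi^*(w),v)\in G_1]$ is a product of independent Bernoulli$(p\gamma)$ indicators, i.i.d.\ across $w$ with joint probability $p^2\gamma^2$. Thus $\E[N^+(u,v)]\le\theta np^2\gamma^2=100p\Delta\ll\Delta$ since $p=o(1)$, and a Chernoff bound followed by a union bound over the $O(n^2)$ wrong pairs shows $N^+(u,v)\le\Delta/2$ for all wrong pairs with probability $1-o(1)$.

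The main obstacle is bounding the adversarial contribution $N^-(u,v)$, since $\pi$ restricted to $S^-$ can depend on $(G_0,G_1)$ in arbitrary ways. I would handle this by defining, for each $u$, the set $T_u:=\pi(S^-\cap N_{G_0}(u))$, so that $N^-(u,v)\le|T_u\cap N_{G_1}(v)|$. By degree concentration, $\sum_u|T_u|=\sum_{w\in S^-}\deg_{G_0}(w)\le 2\epsilon\theta p\gamma n^2$; Markov's inequality then bounds the number of $u$'s whose $|T_u|$ exceeds a critical scale. For the remaining $u$'s, $|T_u|\cdot p\gamma$ is small compared to $\Delta/2$, and a second Chernoff bound on $|T_u\cap N_{G_1}(v)|$ (whose mean is $|T_u|\cdot p\gamma$) combined with a union bound over $v$ rules out $N^-(u,v)\ge\Delta/2$; it is precisely here that the hypothesis $\epsilon\theta=o(\gamma^2)$ is critically used to balance the two bounds. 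Adding the at most $\epsilon\theta n\le\epsilon n$ pre-existing mismatches already on $S^-$ gives the claimed $\le 10\epsilon n$ total mismatches.
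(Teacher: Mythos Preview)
Your approach has a genuine gap stemming from the order of quantifiers in \cref{lem:boost}: the partial solution $\pi$ (and hence the sets $S^+,S^-,T_u$) may be chosen adversarially \emph{after} seeing $(G_0,G_1)$. Your Chernoff bounds treat these sets as fixed before the graph randomness, which is invalid. Concretely, for the claim that $N(u,\pi^*(u))\ge\Delta$ for \emph{every} $u$: the adversary is free to take $S^+$ to be any set of $(1-\epsilon)\theta n$ vertices, and in particular can choose $S^+$ disjoint from $N_\Inter(u_0)$ for some specific $u_0$ (since $|N_\Inter(u_0)|\approx qn$ leaves ample room). Then $|N_\Inter(u_0)\cap S^+|=0$, so your per-vertex conclusion is simply false in the adversarial model. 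Similarly, your ``second Chernoff bound on $|T_u\cap N_{G_1}(v)|$'' fails because $T_u=\pi(S^-\cap N_{G_0}(u))$ depends on $\pi$, which depends on $G_1$; a union bound over all possible $T_u$ of the relevant size would require $|T_u|\log n\ll\Delta$, which does not hold here.

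A second, related gap is that you analyze only the initial seed set and ignore \emph{error propagation} during Step~1: each newly-added bad pair enlarges the effective mismatched set, which can push $N^-(u',v')$ over $\Delta/2$ for further vertices $u'$, cascading errors. The paper's proof addresses both issues simultaneously by working with properties that hold \emph{uniformly} over all subsets (\cref{lem:randomgraphexpansion}) rather than per fixed set, and by tracking the bad pairs $(u_1,v_1),\ldots$ chronologically: since any bad $u_i$ must have $\ge\Delta/2$ neighbors among previously mismatched vertices (using \cref{lem:recoveryboundedsharedneighbors} for the ``correct'' contribution), too many bad pairs would force the set $B\cup\{u_1,\ldots,u_{|B|}\}$ to have internal average degree $\ge\Delta/4$, contradicting expansion. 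This combinatorial argument is what makes the proof robust to adversarial seeds and to propagation; your direct Chernoff route does not achieve this.
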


\begin{claim}[Step 2 analysis] \label{clm:steptwoboost}
With probability $1-o(1)$ over the choice of the graphs $(G_0,G_1)$, by the end of Step 2, the permutation $\pi$ will equal $\pi^*$.
\end{claim}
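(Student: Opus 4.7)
}
The plan is to show that two concentration events, both holding with probability $1-o(1)$ over $(G_0,G_1)\sim\dstruct(n,p;\gamma)$, force Step~2 to terminate exactly at $\pi^*$. Let $\Inter$ be the intersection graph on $[n]$ with edge density $q = p\gamma^2$. Conditional on the guarantee from \cref{clm:steponeboost} that after Step~1 the permutation $\pi$ is defined on all of $V$ and disagrees with $\pi^*$ on a set $W\subseteq[n]$ of size at most $10\epsilon n$, I will establish the following two deterministic bounds that hold (over the randomness of $B,G_0,\tilde G_1$) simultaneously for every pair $(u,v)$ and for every choice of $W$ of size at most $10\epsilon n$:
\begin{align*}
\text{(Good pair)} \quad & \forall u \in[n]: \quad N(u,\pi^*(u)) \,\geq\, (1-o(1))\,qn \,\geq\, \Delta', \\
\text{(Bad pair)} \quad & \forall u,v \text{ with } v\neq \pi^*(u): \quad N(u,v) \,\leq\, \tfrac{1}{10}\Delta'.
\end{align*}

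To prove the good-pair bound, I split $N(u,\pi^*(u))$ according to whether the intermediate vertex $w$ lies in $W$ or not. For $w\notin W$ we have $\pi(w)=\pi^*(w)$, so $(u,w)\in G_0$ and $(\pi^*(u),\pi(w))\in G_1$ is equivalent to $(u,w)\in \Inter$. Hence $N(u,\pi^*(u)) \geq \deg_{\Inter}(u) - |N_{\Inter}(u)\cap W|$. Standard Chernoff bounds (plus a union bound over $u$) give $\deg_{\Inter}(u)=(1\pm o(1))qn$ for every $u$, using $qn = \gamma\cdot p\gamma n \geq \gamma\log^c n \gg \log n$. Because $|W|\leq 10\epsilon n$ and $\epsilon = o(\gamma^2) = o(q/p)$, one more Chernoff bound (applied to all subsets of size $\leq 10\epsilon n$) shows $|N_{\Inter}(u)\cap W| = o(qn)$ uniformly, yielding the desired inequality. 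For the bad-pair bound, I will again split by whether $w\in W$ and write $N(u,v) = A_{u,v} + B_{u,v}$ where $A_{u,v} = \sum_{w\notin W}\mathbf{1}[(u,w)\in G_0]\mathbf{1}[(v,\pi^*(w))\in G_1]$ and $B_{u,v}$ collects the at most $|W|$ terms with $w\in W$. Let $v^*:=\pi^{*-1}(v)$, so that $v^*\neq u$; then using $G_1 = \pi^*(\tilde G_1)$, the indicator $(v,\pi^*(w))\in G_1$ becomes $(v^*,w)\in\tilde G_1$, and the events $\{(u,w)\in G_0\}$ and $\{(v^*,w)\in\tilde G_1\}$ are independent with joint probability $p^2\gamma^2$. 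Hence $A_{u,v}$ is stochastically dominated by a $\mathrm{Bin}(n,p^2\gamma^2)$, which by a Chernoff/Poisson tail bound, tightened by the fact that $p = o(1)$ so $\alpha := \Delta'/(10 E[A_{u,v}]) = \Theta(1/p) \gg 1$, satisfies $A_{u,v}\leq \tfrac{1}{20}\Delta'$ with failure probability $n^{-\omega(1)}$; a union bound over all $n^2$ pairs handles this term. For $B_{u,v}$, a Chernoff bound over all sets $W$ of size $\leq 10\epsilon n$ and all $u$ simultaneously (there are at most $n^{O(\epsilon n)}$ such sets, absorbed because $\epsilon np\gamma$ is either $\omega(\log n)$ or tiny enough that $|W|p\gamma \ll \Delta'$) gives $B_{u,v}\leq O(\epsilon np\gamma + \log n) \ll \Delta'$, using $\epsilon = o(\gamma^2)$ so that $\epsilon np\gamma \ll np\gamma^2 = 100\Delta'$.

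Granted the two events above, I conclude as follows. Suppose $\pi\neq \pi^*$ at some point during Step~2, and pick any $u$ with $\pi(u)\neq \pi^*(u)$. Set $v := \pi^*(u)$ and $u' := \pi^{-1}(v)$; by injectivity $u'\neq u$, and $\pi(u')=v\neq \pi^*(u')$. Then all three conditions of the swap are met: $N(u,v)\geq \Delta'$ by the good-pair bound; $N(u,\pi(u))\leq \tfrac{1}{10}\Delta'$ because $\pi(u)\neq \pi^*(u)$, and $N(u',v) \leq \tfrac{1}{10}\Delta'$ because $\pi^*(u')\neq v$, both using the bad-pair bound. Performing the swap makes $\pi(u)=\pi^*(u)$, so the size of $W$ drops by at least one (vertex $u$ is fixed; vertex $u'$ may or may not be, but cannot newly become wrong since $u'$ was already in $W$). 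Therefore Step~2 terminates after at most $|W_0|\leq 10\epsilon n$ swaps, and at termination no further swap is available, which by the contrapositive of the argument above forces $\pi=\pi^*$.

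The main obstacle is the bad-pair bound, specifically the joint handling of two sources of dependence: the set $W$ of Step~1 errors depends on $(G_0,G_1)$, and so does the current $\pi$ on $W$. The cleanest way around this is to prove the bound \emph{uniformly} over all admissible $(W,\pi|_W)$, which requires the tail bounds on $A_{u,v}$ and $B_{u,v}$ to be strong enough to survive a union bound over $\binom{n}{\leq 10\epsilon n}(10\epsilon n)!$ choices; this is where the quantitative hypothesis $\epsilon\theta = o(\gamma^2)$ together with $p\gamma n\geq \log^c n$ is used. Verifying this last Chernoff step carefully is the only nontrivial calculation in the argument.
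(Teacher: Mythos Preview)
Your overall plan—establish deterministic ``good-pair'' and ``bad-pair'' inequalities on $N(u,v)$ that persist throughout Step~2 and then argue each swap strictly shrinks the error set $W$—is natural, and if those inequalities held the termination argument you give would be correct. The gap is in how you obtain them: you try to make the bounds hold \emph{uniformly over every set $W$ of size $\le 10\epsilon n$ and every $\pi|_W$} via a union bound, and this cannot work. Concretely, the statement ``$|N_{G_0}(u)\cap W|\le \tfrac{1}{20}\Delta'$ for all such $W$'' is simply false: for any fixed $u$ with $\deg_{G_0}(u)\approx p\gamma n$ one can take $W$ to consist of $\min(10\epsilon n,\deg_{G_0}(u))$ of $u$'s $G_0$-neighbors, and in the regime of interest $10\epsilon n \gg p\gamma^2 n/2000 = \Delta'/20$, so $|N_{G_0}(u)\cap W|\ge \Delta'/20$. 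The same obstruction breaks the ``uniform over $W$'' version of your good-pair bound (choose $W$ to contain many $\Inter$-neighbors of $u$). You also invoke ``$\epsilon = o(\gamma^2)$'', whereas the hypothesis of \cref{lem:boost} is only $\epsilon\theta = o(\gamma^2)$.

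The paper avoids this by never attempting a per-vertex bound uniform in the error set. Instead it proves a single \emph{aggregate} expansion fact about the intersection graph $\Inter\sim\bb{G}(n,q)$: with high probability, every $S$ with $|S|\le n/20000$ has $|E_{\Inter}(S,S)|\le |S|\Delta'/100$. This \emph{does} survive the union bound over all small $S$, because the target $|S|\Delta'\sim |S|qn$ exceeds the mean $q\binom{|S|}{2}$ by a factor $\Theta(n/|S|)$, which gives enough slack. Applying this with $S$ equal to the current mismatch set and averaging, at least half the vertices of $S$ have fewer than $\Delta'/50$ $\Inter$-neighbors inside $S$; for each such $u$ one gets $N(u,\pi^*(u))\ge \deg_{\Inter}(u)-\Delta'/50\ge \Delta'$, and Step~2 corrects it. So the error set halves each round until $|S|=o(\Delta')$, at which point every remaining vertex is fixed in one more pass. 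The essential difference is that the paper works in $\Inter$ (edge density $q=p\gamma^2$) with the edge-count $|E_\Inter(S,S)|$, rather than in $G_0$ (density $p\gamma$) with per-vertex neighborhood intersections—the former admits the needed uniform-over-$S$ control, the latter does not.
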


The following lemmas will be useful for our proof.

\begin{lemma}[Few common neighbors]\label{lem:recoveryboundedsharedneighbors}
    Let $(G_0,G_1,\pi^*)\sim\dnull(n,p;\gamma)$. Then with probability at least $1-o(1)$, for every pair $u$ and $v$ such that $v \neq \pi^*(u)$, the number of joint neighbors in $\Inter = G_0 \cap {\pi^*}^{-1}(G_1)$ of $u$ and $v$ is at most $10 (\gamma p)^2n\cdot \log n$.
\end{lemma}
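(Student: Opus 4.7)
The plan is to reduce the claim to a standard concentration bound in an Erdős-Rényi random graph. As noted in \cref{def:intersection}, the intersection graph $\Inter = G_0 \cap (\pi^*)^{-1}(G_1)$ is distributed as $\mathbb{G}(n,q)$ with $q = \gamma^2 p$: each potential edge is present in the base graph $\Base$ with probability $p$ and then independently survives into both $G_0$ and $\tilde G_1$ with probability $\gamma^2$, and these events are independent across potential edges. After this reduction the statement no longer involves $\pi^*$ in any essential way.

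Next, I would fix a pair of distinct vertices $u,v \in V(\Inter)$ (the hypothesis $v \neq \pi^*(u)$ is used only to ensure $u$ and $v$ remain distinct once $V(G_0)$ and $V(G_1)$ are identified via $\pi^*$) and write the count of joint neighbors
\[
X_{u,v} \;=\; \bigl|\{\,w \notin \{u,v\} : (u,w),(v,w) \in E(\Inter)\,\}\bigr|
\]
as a sum of $n-2$ independent $\mathrm{Bernoulli}(q^2)$ random variables. Independence holds because for distinct $w_1 \neq w_2$ the four edges $(u,w_i),(v,w_i)$ (for $i=1,2$) are pairwise distinct, and the edges of $\Inter$ are jointly independent. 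Thus $\mu := \E[X_{u,v}] \le n\gamma^4 p^2$, and the target threshold $t := 10(\gamma p)^2 n \log n$ satisfies $t/\mu \ge 10\log n/\gamma^2 \ge 10\log n$.

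I would then apply the standard Chernoff inequality $\Pr[X \ge t] \le (e\mu/t)^t$ to obtain $\Pr[X_{u,v} \ge t] \le \bigl(e\gamma^2/(10\log n)\bigr)^t$. Under the paper's assumption $\gamma \ge (\log n)^{-o(1)}$ the quantity $\log(10\log n/(e\gamma^2))$ is $\Omega(\log\log n)$, so this probability drops below $n^{-3}$ whenever $t = \Omega(\log n/\log\log n)$. A union bound over the at most $\binom{n}{2}$ eligible pairs then yields the claim with probability $1 - o(1)$.

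The main obstacle will be the very sparse regime, in which $(\gamma p)^2 n$ is itself $o(1)$ and the stated threshold $10(\gamma p)^2 n \log n$ can drop below $1$. There I would rerun the same Chernoff estimate with a slightly enlarged threshold $t' = \Theta(\log n/\log\log n)$ to push the union-bound failure probability below $n^{-3}$; the resulting ``effective'' bound $O(\log n/\log\log n)$ remains much smaller than the threshold $\Delta' = \gamma^2 np/100$ used in Step 2 of \cref{algo:boosting} whenever $p\gamma n \gg \log^c n$, so this mild slackening of the stated inequality is harmless downstream. Apart from this book-keeping, the proof is just Chernoff plus a union bound.
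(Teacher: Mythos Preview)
Your proposal is correct and follows essentially the same route as the paper: observe that $\Inter\sim\bb{G}(n,q)$, note that for distinct $u,v$ the joint-neighbor count is a sum of independent Bernoulli variables, apply a Chernoff bound, and union-bound over all $O(n^2)$ pairs. Your write-up is in fact more careful than the paper's one-paragraph sketch: you correctly compute the per-pair mean as $q^2 n = p^2\gamma^4 n$ (the paper writes $(\gamma p)^2 n$, which is either a slip or a slightly different interpretation of ``joint neighbor''; your smaller mean only helps), and you explicitly handle the very sparse regime where $10(\gamma p)^2 n\log n<1$, which the paper glosses over.
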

\begin{proof}
    If $u\neq v$, then their neighbors in $\Inter$ are chosen independently at random and the expected number of them is $(\gamma p)^2 n$. The probability that this variable is $10 \log n$ times larger  than the expectation will be exponentially small in $10 \log n$ and so we can take a union over all $n^2$ pairs of vertices.
\end{proof}

In our setting of parameters, with $p  < n^{-\Omega(1)}$, it will always be the case that $10(p\gamma)^2n\log n  \ll \Delta = \lfloor\theta qn / 100 \rfloor$.

\begin{lemma}[Expansion] \label{lem:randomgraphexpansion} Let $\Inter \sim \bb{G}(n,q)$.
If $q  \geq c\log n /n$ then  then  with high probability over $\Inter$, for every non-empty set $S$, $|E(S,\overline{S})|=(1\pm \tfrac{10}{\sqrt{c}})q|S||\overline{S}|$.
\end{lemma}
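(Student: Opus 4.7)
The plan is to prove this via the standard recipe: apply a multiplicative Chernoff bound to $|E(S,\overline{S})|$ for each fixed $S$, then union bound over all nonempty subsets. The key observation is that the Chernoff exponent scales linearly with $|S|\cdot|\overline{S}|\cdot q$, while the number of subsets of size $s$ contributes only $\binom{n}{s}$ to the union bound, and the two are easy to balance because of the lower bound $q \ge c\log n / n$.

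Concretely, fix any nonempty proper subset $S \subseteq V$ with $|S| = s$. Then $|E(S,\overline{S})|$ is a sum of $s(n-s)$ i.i.d.\ Bernoulli$(q)$ random variables, so it has mean $\mu_s = q\,s(n-s)$. By the standard multiplicative Chernoff inequality, for $\eps = 10/\sqrt{c}$,
\[
\Pr\!\left[\,\big||E(S,\overline{S})| - \mu_s\big| > \eps\, \mu_s\,\right] \;\le\; 2\exp\!\left(-\tfrac{\eps^2 \mu_s}{3}\right) \;=\; 2\exp\!\left(-\tfrac{100\, q\, s(n-s)}{3c}\right).
\]
Using the hypothesis $q \ge c\log n/n$ and the elementary inequality $s(n-s)/n \ge \tfrac{1}{2}\min(s,n-s)$, the right-hand side is at most $2\exp\!\left(-\tfrac{50}{3}\min(s,n-s)\log n\right)$. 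Crucially, this bound does not depend on $c$ once we have absorbed it into $\eps$, so we get room to union bound.

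Now I would union bound. Since $\binom{n}{s} \le n^{\min(s,n-s)}$ (splitting according to whether $s \le n/2$ or $s > n/2$), the probability that \emph{some} nonempty proper $S$ violates the desired estimate is at most
\[
\sum_{s=1}^{n-1} \binom{n}{s} \cdot 2\exp\!\left(-\tfrac{50}{3}\min(s,n-s)\log n\right) \;\le\; \sum_{s=1}^{n-1} 2 \, n^{-(47/3)\min(s,n-s)} \;=\; o(1),
\]
since the sum is dominated by its $s=1$ and $s=n-1$ terms, each of which is $O(n^{-47/3})$. This yields the claim. No main obstacle: the only thing to be careful about is that the constant $10$ in $10/\sqrt{c}$ is chosen large enough that the Chernoff exponent $\tfrac{100}{3c}\cdot qs(n-s)$ dominates the $\log \binom{n}{s}$ entropy term uniformly in $s$ and in $c$, which is precisely what the factor of $50/3 \gg 1$ in the final exponent ensures.
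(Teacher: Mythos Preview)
Your proof is correct and follows essentially the same approach as the paper: a Chernoff-type concentration bound for $|E(S,\overline S)|$ for each fixed $S$, followed by a union bound over all subsets, with the entropy term $\log\binom{n}{s}$ beaten by the Chernoff exponent thanks to $q \ge c\log n/n$. The paper phrases the deviation additively (bounding $\Pr[|X-\mu|\ge \sqrt{k\theta qn^2}]$) and then converts at the end, whereas you apply the multiplicative Chernoff bound directly; your presentation is a bit cleaner, but the substance is the same.
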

\begin{proof}
Let $S$ with $|S|=\theta n$.
We assume $\theta \geq 1/2$ (otherwise move to the complement).
In a random graph, the random variable $X=|E(S,\overline{S})|$ is just the sum of at least $|S||\overline{S}|=\theta(1-\theta)n^2$ independent Bernoulli random variables each with expectation $q$, and so $\mu=\mathbb{E}[X] = \theta(1-\theta)qn^2$.
So, the probability that $|X- \mu| \geq \sqrt{k \theta q n^2}$ is exponentially small in $k$.
As long as $k$ is larger than $10\log \binom{n}{\theta n} \leq 10 \theta n \log n$ then we can take a union bound over all such sets $S$.
Hence the deviation will always be smaller than
$\sqrt{10 \log n \theta^2 q n^{3}} \leq  \theta \sqrt{q n  \log n} n  < 10\theta qn^2/\sqrt{c}$ as long as $q \geq c\log n/n$.
\end{proof}

We now turn to completing the analysis of the algorithm by proving   Claims~\ref{clm:steponeboost} and~\ref{clm:steptwoboost}.

\begin{proof}[Proof of \cref{clm:steponeboost}]
Let $S$ be the current set on which $\pi$ is defined, and let $T = [n] \setminus S$.
We write $S= A \cup B$ where $A$ is the ``good'' set, on which $\pi$ agrees with $\pi^*$, and $B$ is the ``bad'' set, on which $\pi$ disagrees with $\pi^*$. Under our assumptions, $|A| \geq \theta n/2$ and $|B| \leq \epsilon\theta n$.

Let $(u_1,v_1),\ldots,(u_k,v_k)$ be the pairs we add to the permutation $\pi$ during Step 1 in chronological order.
    We start by bounding the number of ``bad pairs'' where $\pi(u_i) \neq v_i$.
In such a bad pair $u_i$ and $v_i$ have $o(\Delta)$ shared neighbors under the ground truth $\pi^*$, and a necessary condition for  $N(u_i,v_i) \geq \Delta$ is for $u_i$ to have at least $\Delta/2$ neighbors that were themselves previously mismatched.
So, if there are $b$ ``bad pairs'', then it means that there is a sequence $u_1,\ldots,u_b$ such that each $u_i$ has at least $\Delta/2$ neighbors that are either in the original set $B$ or are of the form $u_j$ for $j<i$.
Suppose towards a contradiction that $b \geq |B|$. Then the set $B' = B \cup \{ u_1,\ldots, u_{|B|} \}$ has size $2|B| \leq 2\epsilon\theta n$, but half of the vertices in it have $\Delta/2$ edges going into $B'$.
Hence the average degree of the induced subgraph on $B'$ is at least $\Delta/4$.
Since the expected average degree would be $\epsilon \theta qn = o(\Delta)$ this corresponds to a deviation of about $\Delta |B'| = \Omega(\theta qn |B'|)$ in the number of edges between $B'$ and its complement.
If $qn \gg \log n/\theta$ then this contradicts \cref{lem:randomgraphexpansion}.

The above shows that the number of mistakes we make is at most $O(\epsilon \theta n)$ but we still need to rule out the possibility that we simply get ``stuck'' and can't find a pair $(u,v)$ with $N(u,v) \geq \Delta$.
Suppose towards a contradiction we get stuck after we have defined $\pi$ on a set $S' = A' \cup B'$ where $A'$ is the ``good'' set on which we defined $\pi$ correctly. and $B'$ is the bad set on which we defined it incorrectly.
Let $\alpha = |A'|/n$. We know that $\alpha \geq \theta$.
Also, the argument above showed that  $|B'| \leq 10\epsilon \theta n$.
There are about $q|A'|(n-|A'|)$ edges between $A'$ and its compliment, and so (since the degree of the graph is $qn$), there will be at least $\alpha/10$ fraction of the vertices in the compliment of $A'$ that have at least $\alpha q n/10$ edges into $A'$. 
This means that as long as $n-|A'| \geq 20\epsilon \theta n/\alpha$ (or equivalently, $\alpha(1-\alpha) \geq 20\epsilon \theta$), there will be more than $2|B'|$ vertices $u$ in the compliment of $A'$ that have more than $\Delta$ edges into $A'$ in the intersection graph $\Inter$.
Hence, even if we ignore all vertices $u$ such that either $u\in B'$ or $\pi^*(u) \in \pi(B')$, we still can find one vertex $u$ such that $N(u,\pi^*(u)) \geq \Delta$ and neither $u$ nor $\pi^*(u)$ were defined.

In particular this means that at the end of step 1, at most $20\epsilon \theta n$ vertices are undefined and for them we complete the permutation arbitrarily.
\end{proof}

\begin{proof}[Proof of \cref{clm:steptwoboost}]
In this step we assume we have an almost optimal permutation $\pi$ such that $\pi(v)=\pi^*(v)$ for all but $\delta$ fraction of $v$'s (for $\delta =O(\epsilon\theta)$)\footnote{In the previous version, we did not take care of the possibility of some weird correlation between the error pattern and the almost optimal permutation we are starting with.}.

The main idea is that with high probability over \textit{the choice of random graphs}, they will satisfy expansion properties that guarantee that our boosting algorithm does not get stuck.
Note that this is with respect to an  \textit{arbitrary initial error pattern}. That is, with high probability over the choice of random graphs, no matter which almost optimal permutation we are starting with, the algorithm will not get stuck.
Concretely, the following lemma shows that with high probability over the choice of random graphs, \textbf{Step 2: Fixing the solution} can keep making progress in fixing mismatched vertices:

\begin{lemma}\label{lem:boosting step 2 analysis}
	Let $(G_0,G_1,\pi^*)\sim\mathcal{D}_{null}(n,p;\noise)$ and $\delta\in(0,1/10)$. Then with probability at least $1-o(1)$ over the choice of $(G_0,G_1,\pi^*)$, for any permutation $\pi$ that agrees with $\pi^*$ on at least $1-\delta$ fraction, after one round of the \textbf{Step 2: Fixing the solution} in Algorithm 3.20, the fixed permutation will agree with $\pi^*$ on at least $1-\delta/2$ fraction. Further, if $\delta=o(p\noise)$, then the fixed permutation will equal $\pi^*$.
\end{lemma}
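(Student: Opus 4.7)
The plan is to condition on a high-probability structural event $\cE$ over $(G_0,G_1,\pi^*)\sim\dstruct(n,p;\gamma)$ and then argue deterministically. The event $\cE$ combines three ingredients, each holding with probability $1-o(1)$: (a) Chernoff-based uniform degree bounds, $|N_{G_0}(u)|,|N_{G_1}(u)|=(1\pm o(1))p\gamma n$ and $|N_{\Inter}(u)|=(1\pm o(1))p\gamma^2 n$ for all $u$, which follow from $p\gamma n\ge\log^c n$; (b) the shared-neighbor bound of \cref{lem:recoveryboundedsharedneighbors}, i.e.\ for every pair $(u,v)$ with $v\ne\pi^*(u)$, the number of $w$ with $w\in N_{G_0}(u)$ and $\pi^*(w)\in N_{G_1}(v)$ is at most $10(p\gamma)^2 n\log n$; and (c) the edge-expansion bound of \cref{lem:randomgraphexpansion}.

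Inside $\cE$, I would fix an arbitrary permutation $\pi$ with error set $B=\{u:\pi(u)\ne\pi^*(u)\}$ of size $|B|\le\delta n$, and decompose $N(u,v)=N_A(u,v)+N_B(u,v)$ according to whether the witness $w$ lies in $V\setminus B$ or in $B$. Two deterministic estimates drive the rest. For a \emph{correct pair} $v=\pi^*(u)$, $N(u,v)\ge N_A(u,v)=|N_{\Inter}(u)\cap(V\setminus B)|\ge(1-o(1))p\gamma^2 n-\delta n$, which exceeds $2\Delta'=p\gamma^2 n/50$ as long as $\delta$ is a sufficiently small multiple of $p\gamma^2$. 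For an \emph{incorrect pair} $v\ne\pi^*(u)$, $N_A(u,v)$ is bounded by the shared-neighbor count in (b), so $N_A(u,v)\le 10(p\gamma)^2 n\log n=o(\Delta')$ under $p\log n\ll 1$, and $N_B(u,v)\le|N_{G_0}(u)\cap B|$.

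The strong claim $\delta=o(p\gamma)\Rightarrow\pi=\pi^*$ then follows essentially immediately: for $\gamma=\Omega(1)$, the trivial bound $|N_{G_0}(u)\cap B|\le|B|=o(p\gamma n)=o(p\gamma^2 n)$ collapses the incorrect-pair bound to $N(u,v)\le\Delta'/10$. Consequently, for every $u\in B$ the pair $(u,\pi^*(u))$ passes the swap test: $N(u,\pi^*(u))\ge\Delta'$ by the correct-pair bound, and both ``small-$N$'' conditions at $u$ and at $u':=\pi^{-1}(\pi^*(u))$ fall under the incorrect-pair bound (crucially, $u'$ also lies in $B$, since $\pi(u')=\pi^*(u)\ne\pi^*(u')$, so this is indeed an incorrect pair). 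Conversely, any valid swap $(u,v)$ must satisfy $N(u,v)\ge\Delta'$ and so must have $v=\pi^*(u)$ by the incorrect-pair bound. Therefore each iteration of Step~2's repeat loop strictly decreases $|B|$, and the loop terminates only once $\pi=\pi^*$.

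The main obstacle is the weak halving claim for $\delta$ as large as $1/10$, where the trivial bound $|N_{G_0}(u)\cap B|\le|B|$ no longer beats $\Delta'/10$. Here the plan is to use the edge-expansion (c) to obtain $\sum_u|N_{G_0}(u)\cap B|=(1+o(1))p\gamma|B|n$, then combine this with a Chernoff-style tail bound on the joint event $\{w\in N_{G_0}(u)\cap B,\ \pi(w)\in N_{G_1}(v)\}$ to argue that as long as $|B|\ge\delta n/2$ there must exist some $u\in B$ at which $(u,\pi^*(u))$ is a valid swap, yielding a swap that decreases $|B|$ by one without disturbing correctly-matched vertices (whose $N(u,\pi(u))\ge p\gamma^2 n/2\gg\Delta'/10$ remains above the eviction threshold throughout). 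Turning this expectation-level statement into a ``there exists $u$'' statement uniformly over all adversarial $B$ encountered during the loop is the genuinely delicate step, and is where a careful write-up must exploit the full strength of the expansion property together with a union bound over the polynomially many intermediate states that the algorithm traverses.
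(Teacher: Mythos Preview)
Your treatment of the $\delta = o(p\gamma)$ case is essentially the paper's argument: once the bad set $B$ is tiny, the trivial bound $|N_{G_0}(u)\cap B|\le |B|$ already gives $N_B(u,v)=o(\Delta')$, and the swap test fires correctly at every $u\in B$.

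For the halving claim with $\delta$ up to a constant, however, your plan is both vaguer and more complicated than needed. The root of your difficulty is the crude lower bound $N_A(u,\pi^*(u))\ge |N_\Inter(u)|-|B|$, obtained from the trivial inclusion $|N_\Inter(u)\cap B|\le |B|$; this is useless once $|B|\gg p\gamma^2 n$. The paper's fix is direct: instead of iterating swap-by-swap and worrying about uniformity over the adversarial intermediate states, it proves a single internal-edge expansion claim---with high probability over $\Inter\sim\bb{G}(n,q)$, \emph{every} set $S$ of size at most $n/20000$ satisfies $|E_\Inter(S,S)|\le |S|\Delta'/100$ (Chernoff plus a union bound over all such $S$, in the spirit of \cref{lem:randomgraphexpansion} but tuned for internal edges). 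Applying this with $S=B$ and averaging, at least half of the vertices $u\in B$ satisfy $|N_\Inter(u)\cap B|\le \Delta'/50$; for each such $u$ the pair $(u,\pi^*(u))$ is immediately a valid swap, so one round of Step~2 corrects $|B|/2$ vertices. There is no iteration over states and no need for the ``Chernoff-style tail bound on the joint event'' you sketch: the uniformity over adversarial $B$ is already baked into the union bound inside the expansion claim.

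A second, smaller gap: bounding $N(u,\pi(u))\le \Delta'/10$ for an incorrect pair requires controlling $N_B(u,\pi(u))\le |N_{G_0}(u)\cap B|$, which lives in $G_0\sim\bb{G}(n,p\gamma)$, not in the intersection graph. The same internal-edge expansion claim applied to $G_0$ (and symmetrically to $G_1$), with the threshold scaled accordingly, is what handles this; you should be explicit that all three random graphs need the treatment. The paper's own write-up is terse on this point, but the averaging mechanism is the same.
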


Before we prove~\autoref{lem:boosting step 2 analysis}, note that it addresses the reviewer's (valid!) concern in our original proof of Claim 3.22. The reason is that \textbf{Step 2: Fixing the solution} is able to keep making progress and fixes every mismatched vertices in the end.

\begin{proof}[Proof of~\autoref{lem:boosting step 2 analysis}]
	The main idea is based on the following expansion property of random graphs.
	\begin{claim}\label{claim:boosting expansion}
		Let $G\sim \mathbb{G}(n,q)$ where $q\geq20000\log n/n$. Then with high probability, for any subset $S$ of size $|S|\leq n/20000$, $|E(S,S)|\leq\frac{|S|\Delta'}{100}$ where $\Delta'=\floor*{\frac{qn}{100}}$.
	\end{claim}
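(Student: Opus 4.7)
The plan is to prove the claim by a Chernoff bound combined with a union bound over all small subsets. For any fixed set $S$ with $|S|=s \leq n/20000$, the number of internal edges $X_S := |E(S,S)|$ is distributed as $\mathrm{Bin}(\binom{s}{2}, q)$ with mean $\mu_s \leq s^2 q/2$, while the target threshold is $t_s := s\Delta'/100$. Since $qn \geq 20000\log n \gg 100$, we have $\Delta' \geq qn/101$, so $t_s \geq sqn/10100$ and $t_s/\mu_s \geq n/(5050\, s) \geq 4$ for all $s \leq n/20000$.

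It is convenient to split into two regimes. For $s \leq \Delta'/50 + 1$, the bound is deterministic: $|E(S,S)| \leq \binom{s}{2} = s(s-1)/2 \leq s\Delta'/100$, so nothing is to prove. In the complementary regime $s > \Delta'/50 + 1$ one has $s > qn/5050 \gtrsim 4\log n$, so $\mu_s$ is reasonably large, and we apply the multiplicative Chernoff bound
\[
    \Pr\bigl[X_S \geq (1+\delta)\mu_s\bigr] \;\leq\; \left(\frac{e^\delta}{(1+\delta)^{1+\delta}}\right)^{\mu_s}
\]
with $\delta := t_s/\mu_s - 1 \geq 3$, then union bound over the $\binom{n}{s} \leq (en/s)^s$ choices of $S$, and finally sum over $s$.

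The main obstacle is that the Chernoff slack $t_s/\mu_s$ is smallest (equal to $4$) precisely at the upper endpoint $s = n/20000$, so the loose form $\Pr[X_S \geq t_s] \leq (e\mu_s/t_s)^{t_s}$ is not quite tight enough to overpower $\binom{n}{s} \leq \exp(\Theta(n))$ at that point. The sharper multiplicative bound above, however, yields $\Pr[X_S \geq t_s] \leq e^{-c\mu_s}$ with $c > 2.5$ at $\delta=3$; substituting $\mu_s \geq s^2 q/4$ and $q \geq 20000\log n/n$, at $s = n/20000$ this gives $\Pr \leq \exp(-\Omega(n\log n))$, which comfortably dominates the union-bound factor $\exp(\Theta(n))$ for $n$ sufficiently large. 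For smaller $s$ in the concentration regime, the ratio $t_s/\mu_s$ grows like $n/(5000\,s)$, so the Chernoff tail strengthens accordingly; a routine case-by-case summation then shows that each term of the union bound is exponentially small in $n$ and hence the total failure probability over all $s$ is $o(1)$, as desired.
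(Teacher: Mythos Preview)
Your approach is correct and essentially the same as the paper's: a Chernoff tail bound on $|E(S,S)|$ followed by a union bound over all small $S$. The paper's version is terser---it uses an additive (sub-Gaussian) deviation bound and does not explicitly split off the deterministic small-$s$ regime---whereas you use the multiplicative Chernoff form and make that split, which is arguably cleaner.

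Two small inaccuracies in your write-up, neither of which affects validity: at the lower end $s \approx qn/5000$ (with $q = \Theta(\log n/n)$) one has $\mu_s = \Theta(\log^3 n / n) = o(1)$, so ``$\mu_s$ is reasonably large'' is false there---but this is irrelevant since the multiplicative Chernoff bound you quote holds for all $\mu > 0$. And ``exponentially small in $n$'' overstates the terms with small $s$, which are only $\exp(-\mathrm{polylog}(n))$; since there are at most $n$ values of $s$, the sum is still $o(1)$.
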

	\begin{proof}[Proof of~\autoref{claim:boosting expansion}]
		Let $S$ with $|S|=\theta n$ where $\theta\leq1/100$. In random graph $G\sim\mathbb{G}(n,q)$, we let random variable $X=|E(S,S)|$ and $\mu=\mathbb{E}[X]=q\binom{\theta n}{2}$. By Chernoff's bound, we know that the probability that $X-\mu>\sqrt{k\theta^2n^2q}$ is exponentially small in $k$. Thus, as long as $k$ is larger than $10\log\binom{n}{\theta n}\leq10\theta n\log n$ then we can simply take an union bound over all possible $S$. As a result,
		\begin{align*}
		X&\leq\mu+\sqrt{10\theta n\log n\theta^2n^2q}\\
		&\leq \frac{qn|S|}{20000} + qn|S|\cdot\sqrt{\frac{10\theta\log n}{qn}}\leq\frac{|S|\Delta'}{100}
		\end{align*}
		with high probability.
	\end{proof}
	
	Next, apply averaging argument on~\autoref{claim:boosting expansion} and $G$ to be the \textit{intersection graph} and $S$ to be the set of mismatched vertices, there are at least $|S|/2$ vertices in $S$ having less than $\Delta'/50$ edges to vertices in $S$. Thus, \textbf{Step 2: Fixing the solution} would correct $|S|/2$ many vertices.
	
	As for the case where $\delta=o(p\noise)$, the size of mismatched vertices is $o(\Delta')$ and thus \textbf{Step 2: Fixing the solution} would fix all the remaining vertices.
\end{proof}
\end{proof}

\subsection{Putting it all together}\label{sec:putting-together}

Here, we tie up loose ends and prove \cref{thm:recovery}.
\begin{proof}[Proof of \cref{thm:recovery}]
    The proof of correctness follows together from \cref{lem:partial recovery} and from \cref{lem:boost}.

    The running time of~\cref{alg:recover} is $n^{O(\log n)}$, since it simply requires constructing $\cH$ (which cannot take more than $\text{poly}(\binom{v^2}{d'v}) = v^{O(v)}$ time), then looking for the subgraphs in $\cH$ in $G_1,G_2$ (which takes $\binom{n}{v}|\cH| = v^{O(v)} \cdot n^{O(v)} = n^{O(v)}$ time). The runtime of~\cref{algo:boosting} is polynomial time. Thus, we conclude that there is a quasi-polynomial time algorithm that solves~\cref{prob:recovery} with high probability in the specified parameter region.
\end{proof}

\section{Test Subgraphs}\label{sec:test-graphs}
Both the distinguishing and recovery algorithms rely on a carefully chosen ``test set'' of graphs with a set of delicate properties.
In this section, we construct test sets with these characteristics.

\begin{theorem}[General overview of test graph properties]\label{thm:graph family}
    For any rational scalar $d \in (2,2+\tfrac{1}{76})$ or $d \in \mathbb{Z}_{\ge 3}$ or $d \ge 6$, and sufficiently large even integer $v$, there exists a set $\cH_d^v$ of $v$-vertex graphs with the following properties.
\begin{enumerate}
    \item Every $H \in \cH_d^v$ has average degree $d$, or density $d/2$, \textit{i.e.,} $e = dv/2$.
    \item Every $H \in \cH_d^v$ is {\em strictly balanced}.
    \item Every $H\in\cH_d^v$ has no non-trivial automorphisms.
    \item For every pair of distinct graphs $H,H' \in \cH_{d}^v$ have no shared subgraphs $J \subset H,H'$ of edge density larger than $\left(\tfrac{d}{2} - \alpha(d)\right)|V(J)|$ (where $\alpha(d)> 0$ depends on the density $d$).
    \item The size of the family is $|\cH|=v^{\Omega_d(v)}$.
\end{enumerate}
\end{theorem}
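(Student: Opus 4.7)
The plan is to split into three regimes based on the density $d$, using a distinct construction for each, and then verify the five required properties using (a) standard expansion and symmetry results for random regular graphs and (b) a structural reduction that lets us analyze subgraph densities in terms of the underlying base construction.

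For the integer case $d \in \mathbb{Z}_{\ge 3}$, I would take $\cH_d^v$ to be a large subset of uniformly random $d$-regular graphs on $v$ vertices. Standard results on the configuration model give $v^{\Omega(v)}$ such graphs, and classical results of Bollobás / Wormald show that a random $d$-regular graph has trivial automorphism group with probability $1-o(1)$. Strict balance follows because a subgraph on $\alpha v$ vertices has at most $d\alpha v/2$ edges (its density cannot exceed $d/2$) and expansion/sparsity of random regular graphs forces the inequality to be strict, quantitatively $|E(J)| \le (\tfrac{d}{2} - f(\alpha,d))|V(J)|$ for small subgraphs. The intersection balance property between distinct $H,H'$ is the most delicate point here: I would argue via a union bound, showing that for any fixed $J$ with density close to $d/2$, the probability that two independent random $d$-regular graphs both contain $J$ is much smaller than $|\cH|^{-2}$, so after deleting a vanishing fraction of the family we obtain the claimed control on shared subgraphs.

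For $d \ge 6$ non-integer, write $d = d_0 + \lambda$ with $d_0 \in \mathbb{Z}_{\ge 6}$ and $\lambda \in [0,1)$. I would construct $H$ by sampling a random $d_0$-regular graph and then adding a uniformly random matching on a chosen subset of $\lambda v$ vertices, which gives exactly $e = dv/2$ edges. The three single-graph properties (no automorphisms, strict balance, family size) are inherited from the random $d_0$-regular base graph together with the random matching. For strict balance, the key observation is that removing the matching changes edge counts by at most $\lambda v/2$, so a subgraph denser than $d/2$ would give a subgraph of the $d_0$-regular graph that is denser than $d_0/2$ by a gap depending on $\alpha$, which doesn't exist w.h.p. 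The intersection property is handled analogously via a union bound over pairs.

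For the sparse regime $d \in (2, 2+\tfrac{1}{76})$, I would use the subdivision construction sketched in the introduction: take a random $3$-regular graph $H'$ on $v'$ vertices and subdivide a carefully chosen number of its edges (some by $k$ and some by $k+1$ intermediate vertices, with the counts chosen to hit $e/v = d/2$ exactly). The main structural lemma I would prove is that every densest subgraph of $H$ \emph{respects} the subdivision: whenever a densest $J \subset H$ contains one internal vertex of a subdivided edge $e \in E(H')$, it must contain all vertices and edges of that path. The proof is a local swap argument — including a proper sub-path strictly decreases density because paths have density below $d/2$. This reduces the analysis of subgraphs of $H$ to subgraphs of $H'$, where the expansion of random $3$-regular graphs gives strict balance and intersection balance after a union bound. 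Counting $v'^{\Omega(v')} = v^{\Omega(v)}$ realizations of the $3$-regular base and rejecting the $o(1)$ fraction with non-trivial automorphisms or bad intersections yields the desired family size.

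The hardest step is the sparse regime, where the constant $\tfrac{1}{76}$ ultimately comes from how tightly one can control densest subgraphs of subdivided random $3$-regular graphs: one needs the quantitative strict balance of $H'$ to survive the transformation, and the gap $f(\alpha,d)$ shrinks as $\lambda \to 0$. Making $f(\alpha,d)$ explicit enough to plug into Lemmas~\ref{lem:sparseequal} and~\ref{lem:sparseneq}, while also keeping the intersection gap $g(d) > 0$ bounded away from zero over pairs of distinct subdivided $3$-regular graphs, is where most of the technical combinatorial work lies and is the reason we obtain only an interval of $d$ near $2$ rather than all $d > 2$.
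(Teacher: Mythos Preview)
Your proposal is correct and follows essentially the same approach as the paper: the same three-regime split with the same constructions (random $d$-regular graphs for integer $d$, a random $d_0$-regular graph plus a random partial matching for $d \ge 6$, and subdivided random $3$-regular graphs for $d$ near $2$), and the same structural ingredients (expansion of random regular graphs, their trivial automorphism group, and the reduction of densest subgraphs of the subdivided graph to subgraphs of the $3$-regular base via the ``remove degree-$1$ vertices'' argument). The only framing difference is that for the intersection-balance property you sketch a probabilistic union bound over pairs, whereas the paper uses the equivalent but more explicit device of building a meta-graph on candidate graphs, bounding its maximum degree by counting completions of a fixed dense shared subgraph $J$ (which expansion forces to have $|V(J)| \ge v/2$, leaving at most $(1-\alpha)dv$ half-edges to match) to a full $d$-regular graph, and then extracting a large independent set.
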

In fact, \cref{thm:graph family} is the union of \cref{prop:integer-d,prop:large-d,prop:small-d}, each of which covers a different set of densities $d$ (and each of which contains a more precise quantitative statement).

When $d$ is an integer, random $d$-regular graph satisfy all but condition 4, and the graphs have sufficiently different edge sets on average that we can argue that a sufficiently large fraction of $d$-regular graphs can be chosen while still satisfying condition 5.

On the other hand, when $d$ is a non-integer, the strict balancedness property is less straightforward to satisfy.
Further, for recovery we require a strengthened, quantitative notion of strict balance.
In \cref{sec:int-avg}, we will prove that random $d$-regular graphs satisfy our requirements, and in \cref{sec:non-int-avg-large,sec:non-int-avg-small} we design constructions that leverage $d$-regular regular random graphs to obtain graphs of dense and sparse non-integer average degrees.

\subsection{Test subgraphs with integer average degree}\label{sec:int-avg}

Suppose that we were only interested in constructing test sets of graphs with integer average degree $d \ge 3$.
For those cases, random $d$-regular graphs will satisfy almost all of our requirements for the individual graphs $H$, and we will prove that there is a large subset of graphs on $d$ vertices so that every pair of graphs have no dense common subgraphs.

\begin{proposition}\label{prop:integer-d}
    For each integer $d \ge 3$ and $\alpha \in [\frac{12}{25},1]$, for $v$ sufficiently large, there exists a set of $d$-regular graphs on $v$ vertices $\cH^v_d$, with
    \[
	|\cH_d^v| = v^{\frac{1}{2}\alpha \cdot d\cdot v}
	\]
	such that every graph $H \in \cH$ is connected, has no non-trivial automorphisms, and for every distinct pair $H,H' \in \cH$, $H$ and $H'$ do {\em not} contain any common subgraphs $J$ with $|E(J)|> \alpha d\cdot |V(J)|-d$.
\end{proposition}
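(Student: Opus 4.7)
The plan is to apply the probabilistic / alteration method to uniformly random $d$-regular graphs on $v$ vertices. First, for the per-graph conditions I would invoke classical results: by the configuration model of Bollob\'as (or the McKay--Wormald asymptotic count), the number of labeled $d$-regular graphs is $|\mathcal{G}_d^v| = v^{(dv/2)(1 - o(1))}$, and for $d \ge 3$ a uniformly sampled such graph is connected and has trivial automorphism group with probability $1 - o(1)$. Hence the subset $\mathcal{G}^* \subseteq \mathcal{G}_d^v$ of graphs satisfying connectedness and rigidity has cardinality at least $v^{dv/2 - o(v)}$, providing a large pool of candidates.

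The core task is then to control the pairwise condition. For two independent uniformly random $d$-regular graphs $H, H'$, I bound the probability that they share an isomorphism class $J$ with $|E(J)| > \alpha d \cdot |V(J)| - d$ via a union bound over such $J$. Using the standard configuration-model moment estimate $\Pr[J \subseteq H] \le (v)_j \cdot (C d/v)^{e_J}$ for a constant $C = C(d)$ (where $J$ has $j$ vertices and $e_J$ edges), independence of $H$ and $H'$ gives $\Pr[J \subseteq H \text{ and } J \subseteq H'] \le v^{2(j - e_J)} (Cd)^{2 e_J}$. The density condition forces $e_J > \alpha d j - d$, and since $\alpha d \ge \tfrac{12}{25} \cdot 3 > 1$, the factor $v^{j - e_J}$ decays as $v^{-(\alpha d - 1) j + d}$. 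The goal is to show that after summing over iso classes this yields a total sharing probability of at most $v^{-(1-\alpha) dv/2 + o(v)}$.

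To finish, I would use a deletion / greedy argument. The expected number of conflicting pairs in $\mathcal{G}^* \times \mathcal{G}^*$ is at most $|\mathcal{G}^*|^2 \cdot v^{-(1-\alpha) dv/2 + o(v)}$. By Markov's inequality, at least a constant fraction of the graphs in $\mathcal{G}^*$ have conflict-degree at most $2 |\mathcal{G}^*| \cdot v^{-(1-\alpha)dv/2 + o(v)}$; iteratively selecting such a graph and deleting its conflicts then produces a conflict-free family $\cH$ of size at least $|\mathcal{G}^*| / v^{(1-\alpha)dv/2 + o(v)} \ge v^{\alpha d v / 2}$, as required.

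The main obstacle will be the second step: the naive combinatorial count of iso classes on $j$ vertices can be as large as $2^{\binom{j}{2}}$, which threatens to overwhelm the decay from the moment method. The delicate calculation is to stratify the sum by $j$ (and by $e_J$), bounding the contribution of small $j$ by direct enumeration and of large $j$ by the exponential decay $v^{(1-\alpha d) j}$, while showing the crossover regime is manageable; this is precisely where the quantitative constraint $\alpha \ge \tfrac{12}{25}$ enters, providing enough slack so that $\alpha d - 1$ is sufficiently bounded away from $0$ for all $d \ge 3$.
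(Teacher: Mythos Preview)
Your high-level framework matches the paper's: start from the $\approx v^{dv/2}$ labeled $d$-regular graphs, discard the $o(1)$-fraction that are disconnected or have nontrivial automorphisms, build a ``conflict'' meta-graph on the remaining graphs (an edge whenever two graphs share a subgraph that is too dense), and take a large independent set by bounding the maximum degree. Where you diverge is in how you bound that degree.

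You propose a union bound over isomorphism types $J$, estimating $\Pr[J\subseteq H]$ via the configuration model and then summing. You correctly flag that the iso-class count is the obstacle. The paper sidesteps this entirely by invoking Bollob\'as's expansion theorem for random $d$-regular graphs: with high probability every subset $S$ of size $\le v/2$ has at least $c_d \cdot d|S|$ edges leaving it, with $c_d \ge \tfrac{3}{50}$. This means a \emph{typical} $d$-regular graph simply contains \emph{no} subgraph on $\le v/2$ vertices of density exceeding $\tfrac{47}{100}d$. So once one restricts to typical graphs, small dense common subgraphs are ruled out deterministically --- no union bound needed. This is exactly where the constant $\alpha \ge \tfrac{12}{25} = \tfrac{48}{100}$ comes from: it is the expansion constant $(1-c_d)/2$ rounded up, \emph{not} a condition to make $\alpha d - 1$ bounded away from zero as you speculate.

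For large common subgraphs ($|V(J)| > v/2$), the paper again avoids probabilistic estimates: given a fixed $H$ and a fixed dense $J\subseteq H$ with at least $\alpha d |V(J)|$ edges, there are at most $(1-\alpha)dv$ half-edges left to match, so at most $M_{(1-\alpha)dv}$ completions to a $d$-regular $H'$. Summing over the $\le 2^{dv/2}$ choices of edge-subset $J$ gives the degree bound $D_G \le 2^{dv/2} M_{(1-\alpha)dv}$, and the independent-set lower bound $|\cH| \ge |G'|/(D_G+1)$ then yields the stated size after Stirling.

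In short: your plan is not wrong in spirit, but the ``delicate calculation'' you anticipate is precisely what the paper's use of expansion makes unnecessary. If you try to push the pure moment-method route through, you will likely end up re-deriving (a weaker form of) the expansion bound anyway to control the small-$j$ regime.
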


Random $d$-regular graphs are known to satisfy many of our requirements, either absolutely or with high probability.
For example, the following well-known fact:
\begin{fact}
For integer $d \ge 2$, any connected $d$-regular graph is strictly balanced.
\end{fact}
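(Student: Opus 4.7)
The plan is to argue directly from the definition of strict balance using connectivity. Let $H$ be a connected $d$-regular graph on $v$ vertices, which has $e = dv/2$ edges and hence density $d/2$. I need to show that every proper subgraph $J \subsetneq H$ (with $|V(J)|\geq 1$) satisfies $|E(J)|/|V(J)| < d/2$. I will split on whether $J$ spans all the vertices or not.

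First, if $V(J) = V(H)$ but $J \neq H$ then $E(J) \subsetneq E(H)$, so $|E(J)| < dv/2$ and the density of $J$ is strictly less than $d/2$. The interesting case is $V(J) \subsetneq V(H)$; here I will use the handshake identity on the induced subgraph $H[V(J)]$, which is a supergraph of $J$. Writing $S = V(J)$, we have
\[
2\,|E(H[S])| \;=\; \sum_{u \in S} \deg_{H[S]}(u) \;=\; \sum_{u\in S}\deg_H(u) \;-\; |E(S,V\setminus S)| \;=\; d|S| - |E(S,V\setminus S)|.
\]
Because $H$ is connected and $S$ is a proper nonempty subset of $V(H)$, the edge boundary $|E(S,V\setminus S)|$ is at least one, so $|E(J)| \le |E(H[S])| \le (d|S|-1)/2 < d|S|/2$. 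Dividing by $|S|$ gives the desired strict inequality.

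The proof has no real obstacle: the only ``step'' is recognizing that connectivity forces at least one edge across any proper vertex cut, which is precisely what prevents a vertex-induced subgraph from retaining the full $d$-regularity. All the work is in the trivial handshake bookkeeping, and the edge-induced case follows for free since removing edges can only decrease the numerator while keeping the denominator fixed. So the entire argument is two lines of case analysis plus one line of degree-counting.
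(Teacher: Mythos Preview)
Your proof is correct and follows essentially the same idea as the paper's: a proper subgraph $J$ of a connected $d$-regular graph must contain a vertex of degree strictly less than $d$, and since every vertex of $J$ has degree at most $d$, the handshake identity gives $2|E(J)| < d|V(J)|$. The paper compresses this into a single sentence without the explicit case split or the edge-boundary bookkeeping, but your more detailed argument (separating $V(J)=V(H)$ from $V(J)\subsetneq V(H)$ and invoking connectivity to get a crossing edge) is exactly what justifies that one line.
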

\begin{proof}
If $H$ is a $d$-regular graph and $J$ is a proper subgraph of $H$, then $J$ must have at least one vertex of degree less than $d$.
\end{proof}

Additionally, classical results assert that random $d$-regular graphs are with high probability connected, and also contain no non-trivial automorphisms.
\begin{theorem}[\cite{bollobas1981random, wormald1980some,luczak1989sparse}]
If $H$ is a random $d$-regular graph on $v$ vertices and  $3\leq d\leq O(v^{0.02})$, then $H$ is connected asymptotically almost surely.
\end{theorem}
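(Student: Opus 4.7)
The plan is to argue via the classical configuration (pairing) model. Assign each of the $v$ vertices $d$ half-edges and take a uniformly random perfect matching on the $dv$ half-edges; the resulting multigraph is uniform over $d$-regular multigraphs on $v$ vertices, and conditioning on it being simple yields the uniform distribution on $d$-regular simple graphs. A standard Poisson approximation for the numbers of loops and double-edges shows the simplicity event has probability at least $\exp(-O(d^2)) = v^{-o(1)}$ when $d \le v^{0.02}$, so any event of probability $v^{-\omega(1)}$ in the pairing model remains negligible after conditioning.

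The main step is to show, in the pairing model, that the probability of being disconnected is $o(1)$. Disconnectedness means the existence of a set $S \subset V$ with $1 \le |S| = k \le v/2$ all of whose half-edges are matched inside $S$. For a fixed such $S$ this probability equals
\[
p_k \;=\; \frac{(dk-1)!!\,(d(v-k)-1)!!}{(dv-1)!!},
\]
with the convention that $p_k = 0$ when $dk$ is odd. Applying Stirling, each double factorial satisfies $(2m-1)!! = \Theta(1)\cdot (2m/\exp(1))^m$, and after simplification one obtains
\[
\binom{v}{k}\, p_k \;\le\; \mathrm{poly}(v)\cdot \exp\!\Bigl(-\,v\bigl(\tfrac{d}{2}-1\bigr) H\!\bigl(\tfrac{k}{v}\bigr)\Bigr),
\]
where $H$ denotes the binary entropy in nats. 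Since $d\ge 3$, the coefficient $\tfrac{d}{2}-1\ge \tfrac12$ is strictly positive, so the exponent is genuinely negative for every $k \in [1,v/2]$.

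The hard part is then to sum this bound over $k$ and show the total is $o(1)$, uniformly for $d \in [3, v^{0.02}]$. I would split into two regimes. For $k$ in the bulk (say $v^{1/2} \le k \le v/2$), the quantity $H(k/v)$ is bounded below by a positive constant, so each term is $\exp(-\Omega(v))$, and summing $v$ such terms still leaves something far smaller than the simplicity correction $v^{-o(1)}$. For small $k$ the $k=1$ term dominates and contributes $O(v^{1-d/2})$, which is $O(v^{-1/2})$ at $d=3$ and decays faster for larger $d$; the terms for $k=2,3,\ldots$ then decay geometrically because $1-d/2 \le -\tfrac12$. Combining the two regimes and applying Markov's inequality gives disconnectedness probability $o(1)$ in the pairing model, and hence $o(1)$ in the uniform $d$-regular model after conditioning on simplicity. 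The principal technical obstacle is keeping the Stirling approximation quantitative as $d$ grows with $v$: one must use a non-asymptotic form with explicit error terms rather than the constant-$d$ version, and verify that the polynomial prefactors never outrun the exponential decay in the allowed range $d \le v^{0.02}$.
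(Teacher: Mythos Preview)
The paper does not prove this statement at all; it simply quotes it as a known result from the cited references and uses it as a black box. So there is no ``paper's proof'' to compare against, and your configuration-model sketch is the standard textbook route for the constant-$d$ case.

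That said, your argument has a real gap at the top end of the range $d \le O(v^{0.02})$. You bound $\Pr[\text{disconnected}]$ in the unconditioned pairing model and then divide by $\Pr[\text{simple}] \ge \exp(-\Theta(d^2))$. But your small-$k$ contribution is dominated by $k=1$ (or $k=2$), which in the pairing model gives $\Theta\bigl(v^{1-d/2}e^{-d/2}\bigr)$, i.e.\ $\exp\bigl(-\Theta(d\log v)\bigr)$. When $d = v^{0.02}$ we have $d^2 = v^{0.04}$, while $d\log v = v^{0.02}\log v$, and the former is asymptotically larger. So $\Pr[\text{disconnected}]/\Pr[\text{simple}]$ actually blows up, not tends to zero. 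Your opening remark that events of probability $v^{-\omega(1)}$ survive conditioning is correct but irrelevant here, since your small-$k$ bound is only $v^{-\Theta(d)}$, not $v^{-\omega(1)}$.

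The fix is easy and worth stating explicitly: bound $\Pr[\text{disconnected} \cap \text{simple}]$ directly rather than $\Pr[\text{disconnected}]$. On that intersection, any connected component is itself a \emph{simple} $d$-regular graph, hence must have at least $d+1$ vertices. So the union bound may start at $k = d+1$, and then the leading term is of order $\exp\bigl(-\Theta(d^2\log(v/d))\bigr)$, which \emph{does} beat the simplicity correction $\exp(-\Theta(d^2))$ throughout the stated range. With that adjustment your outline goes through.
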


\begin{theorem}[\cite{Bollobas-automorph,McKayWormwald,kim2002asymmetry}]
If $H$ is a random $d$-regular graph on $v$ vertices and $ 3 \le d \le v-4$, then almost surely $H$ has no non-trivial automorphisms.
\end{theorem}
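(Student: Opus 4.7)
The plan is to work in the \emph{configuration model} (equivalently, the pairing model), in which a $d$-regular multigraph on $v$ vertices is generated by assigning $d$ half-edges to each vertex and then choosing a uniformly random perfect matching on the resulting $dv$ half-edges. Conditioned on the resulting multigraph being simple, this distribution coincides with the uniform distribution on $d$-regular simple graphs; the conditioning event occurs with probability $\Omega(1)$ when $d$ is bounded, and with a quantifiable but manageable probability for $d$ growing up to $v-4$ via McKay--Wormald switching estimates. It therefore suffices to prove the asymmetry statement for the pairing model and then transfer it.

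The main strategy is a first-moment bound on the number of non-trivial automorphisms. For each $\pi \in S_v \setminus \{\mathrm{id}\}$, let $X_\pi$ be the indicator that $\pi$ is an automorphism of the random graph. The probability $\E[X_\pi]$ depends only on the cycle type of $\pi$: in the pairing model, $\pi$ induces a natural permutation $\widehat{\pi}$ on the $dv$ half-edges, and $\pi$ is an automorphism iff the random matching is $\widehat{\pi}$-invariant. The number of $\widehat{\pi}$-invariant perfect matchings can be counted exactly in terms of the cycle structure of $\widehat{\pi}$, and after dividing by the total $(dv-1)!! $ matchings one obtains, via standard Stirling estimates, a bound roughly of the form
\[
\E[X_\pi] \;\le\; v^{-c\, d\, (v - f(\pi))/2},
\]
where $f(\pi)$ is the number of fixed points of $\pi$ and $c > 0$ is an absolute constant. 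Grouping permutations by the number of moved vertices $k = v - f(\pi)$, there are at most $\binom{v}{k} k! \le v^{k}$ such permutations, so the contribution of all permutations moving exactly $k$ vertices is at most $v^{k - c d k / 2}$, which is geometrically summable for $d \ge 3$.

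The hard case, and the main obstacle, is the boundary $k = 2$, i.e.\ transpositions $\pi = (u\; w)$. Such a transposition is an automorphism precisely when $u$ and $w$ have the same neighborhood outside $\{u,w\}$ (with the edge $uw$, if any, mapped to itself). A direct count in the pairing model gives $\E[X_{(u\,w)}]$ of order $v^{-d}$, so summing over the $\binom{v}{2}$ choices of transposition yields $O(v^{2-d}) = o(1)$ only for $d \ge 3$, which matches the hypothesis $d \ge 3$ of the theorem. Longer cycles are handled by the same mechanism with strictly stronger decay. Summing the bounds over all $k \ge 2$ and all cycle types within each $k$ yields $\sum_{\pi \neq \mathrm{id}} \E[X_\pi] = o(1)$, and Markov's inequality concludes that the random graph is asymmetric asymptotically almost surely.

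Finally, to pass from the pairing model to the uniform model on simple $d$-regular graphs, divide by the probability of simplicity. For bounded $d$ this is a constant factor and the $o(1)$ bound survives; for $d$ growing with $v$ (up to $v - 4$) one uses McKay--Wormald's asymptotic enumeration together with the switching method to show that the simplicity probability is not too small and that conditioning does not destroy the first-moment bound. Complementary constructions (or a duality argument that a graph is asymmetric iff its complement is) cover the dense regime $d$ close to $v$. The delicate part throughout is the careful enumeration of $\widehat{\pi}$-invariant matchings: one must account for half-edge orbits of each length and argue that the resulting product over cycle classes decays uniformly in $k$, which is exactly what the works of Bollob\'as, McKay--Wormald, and Kim--Sudakov--Vu accomplish.
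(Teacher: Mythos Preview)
The paper does not prove this theorem; it is quoted verbatim as a known result from the cited references (Bollob\'as for bounded $d$, McKay--Wormald for intermediate $d$, and Kim--Sudakov--Vu extending to $d \le v-4$), and is used as a black box in the construction of the test families in Section~4. There is therefore no ``paper's own proof'' to compare your proposal against.

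That said, your sketch is a faithful outline of the approach in those references, and the key ideas you identify --- the first-moment bound $\sum_{\pi\neq\mathrm{id}}\E[X_\pi]=o(1)$ in the configuration model, the cycle-type stratification, the bottleneck at transpositions yielding $O(v^{2-d})$, and the complement/duality trick for the dense regime --- are correct. One point of caution: your passage from the pairing model to the uniform simple model is glossed over for growing $d$. The probability of simplicity in the configuration model is $\exp(-(d^2-1)/4 + o(1))$, which is exponentially small once $d$ grows like $\sqrt{\log v}$ or faster, so simply ``dividing by the probability of simplicity'' does not directly rescue an $o(1)$ first-moment bound. The cited works handle this not by transfer from the configuration model but by switching arguments applied directly to simple regular graphs (or, in Kim--Sudakov--Vu, by a rather different structural argument). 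Your sketch is correct in spirit but would need this replaced by a genuine switching calculation to be a proof for unbounded $d$.
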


It is not too difficult to see that there are at least $v^{\Omega(v)}$ $d$-regular graphs on $v$ vertices; however, a precise formula is also known.
The following is a consequence of a theorem of McKay and Wormwald.
\begin{fact}[\cite{MCKAY1990565}]\label{fact:gdsize}
For $3 \le d \le o(\sqrt v)$, there are at least
\[
\frac{(dv)!}{(dv/2)! 2^{dv/2} (d!)^v e^{d^2}}
\]
$d$-regular graphs on $v$ vertices.
\end{fact}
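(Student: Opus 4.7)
The plan is to invoke the Bollobás configuration (pairing) model and reduce the statement to a classical sparse-regime asymptotic. Associate to each vertex $i \in [v]$ a cell $C_i$ of $d$ distinguishable half-edges, for a total of $dv$ half-edges, and consider a uniformly random perfect matching (``configuration'') of these half-edges; each matched pair becomes an edge of a multigraph $\Gamma$ on $[v]$. The total number of configurations is
\[
M \;=\; \frac{(dv)!}{(dv/2)!\,2^{dv/2}}
\]
by the standard count of perfect matchings on $dv$ points. Any simple $d$-regular graph $H$ on $[v]$ is the image of exactly $(d!)^v$ configurations (permute the $d$ half-edges within each cell freely), so if $N(d,v)$ denotes the number of simple $d$-regular graphs,
\[
N(d,v) \;=\; \frac{M}{(d!)^v}\cdot p_{\mathrm{simple}}(d,v),
\]
where $p_{\mathrm{simple}}(d,v)$ is the probability that a uniformly random configuration yields a simple graph (no loops and no multi-edges). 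Hence the claim reduces to showing $p_{\mathrm{simple}}(d,v) \ge e^{-d^2}$ whenever $3 \le d = o(\sqrt v)$ and $v$ is sufficiently large.

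Let $L$ be the number of loops (pairs whose two half-edges lie in the same cell) and $D$ the number of multi-edges (unordered pairs $\{i,j\}$ joined by at least two pairings) in $\Gamma$. A straightforward first-moment computation, using that all pairings are equally likely, gives
\[
\mathbb{E}[L] \;=\; v\binom{d}{2}\cdot\frac{1}{dv-1} \;=\; \frac{d-1}{2}\bigl(1+o(1)\bigr), \qquad \mathbb{E}[D] \;=\; \bigl(1+o(1)\bigr)\frac{(d-1)^2}{4},
\]
where the $o(1)$ terms use only that $d = o(\sqrt v)$. In particular $\mathbb{E}[L]+\mathbb{E}[D] < d^2$ for $v$ large, so the target bound $e^{-d^2}$ is comfortably weaker than what a Poisson approximation would predict.

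The substantive step is the Poisson approximation itself. One computes the joint factorial moments $\mathbb{E}[(L)_a (D)_b]$ by counting configurations with $a$ specified loop positions and $b$ specified double-edge positions and pairing the remaining half-edges arbitrarily; for $d = o(\sqrt v)$ and bounded $a,b$ these match $\mathbb{E}[L]^a \mathbb{E}[D]^b$ up to a factor $1+o(1)$. The Brun sieve (or Bollobás's original inclusion–exclusion for the configuration model) then yields
\[
p_{\mathrm{simple}}(d,v) \;=\; \Pr[L = 0,\,D = 0] \;=\; \bigl(1+o(1)\bigr)\exp\!\bigl(-\mathbb{E}[L] - \mathbb{E}[D]\bigr),
\]
which exceeds $e^{-d^2}$ for all sufficiently large $v$, giving the desired lower bound on $N(d,v)$.

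The main obstacle is the Poisson approximation: an upper bound on $\mathbb{E}[L]+\mathbb{E}[D]$ is immediate, but converting this into a \emph{lower} bound on $\Pr[L=0,D=0]$ requires ruling out the possibility that a small number of configurations carry a disproportionate share of the loop/double-edge mass, which is exactly what the joint factorial moment computation achieves. Working in the relatively generous range $d = o(\sqrt v)$ and aiming only at the crude threshold $e^{-d^2}$ (rather than the sharp $e^{-(d^2-1)/4}$ of McKay–Wormald) keeps these factorial moment bounds elementary: the contribution of configurations containing more than, say, $2d^2$ loops or double edges can be absorbed into the $o(1)$ error by a union bound on the choices of loop/double-edge positions.
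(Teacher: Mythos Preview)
The paper does not prove this fact at all; it is stated with a citation to McKay--Wormald and used as a black box. So there is no ``paper's proof'' to compare against, and your configuration-model framework is indeed the standard route to results of this type.

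That said, your argument has a genuine gap in the regime where $d$ grows with $v$. The reduction $N(d,v) = \frac{M}{(d!)^v}\cdot p_{\mathrm{simple}}$ is correct, and for \emph{fixed} $d$ the Brun sieve with bounded-order factorial moments gives the Poisson limit for $(L,D)$ exactly as you describe. But the stated fact allows $d = o(\sqrt v)$, so $\mathbb{E}[L]\sim (d-1)/2$ and $\mathbb{E}[D]\sim (d-1)^2/4$ may diverge. In that regime, showing that $\mathbb{E}[(L)_a(D)_b] = (1+o(1))\,\mathbb{E}[L]^a\mathbb{E}[D]^b$ for \emph{bounded} $a,b$ does not control $\Pr[L=0,D=0]$: the method of moments for Poisson approximation requires matching factorial moments up to order comparable to the mean, and the error terms in your moment computation are only $O(d^2/v)$ per factor, which compounds badly when $a,b$ themselves are of order $d^2$. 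Your final paragraph gestures at a fix via a union bound truncating at $2d^2$ loops/double-edges, but this is not carried out, and making it rigorous is exactly where the difficulty lies---one must show that the inclusion--exclusion alternating sum over up to $\Theta(d^2)$ loop/double-edge positions behaves, and the naive bounds on the number of positions times the probability of each pattern do not obviously converge.

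This gap is not a technicality: handling growing $d$ is precisely the content of the McKay--Wormald result the paper cites, and their proof uses the switching method rather than factorial moments. If you only need the fact for bounded $d$ (which is actually all the paper's applications require, since their test-graph degrees are $O(1/\epsilon)$), your argument is essentially complete; but as a proof of the stated range $d = o(\sqrt v)$ it is not.
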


Now, we will prove that there is a set of at least $v^{\Omega(dv)}$ $d$-regular graphs that have all of the above properties, while not sharing any dense subgraphs.

\begin{proof}[Proof of \Cref{prop:integer-d}]
We construct a meta-graph $G$ whose vertices correspond to $d$-regular graphs.
    For two $d$-regular graphs $H,H'$, we add an edge between their vertices in $G$ if $H$ and $H'$ intersect on a subgraph $J$ of average degree at least $\alpha \cdot d $.

First, we argue that most $H$ do not have small subgraphs of density $\alpha$.
This follows from a result of Bollob\'{a}s, who proves that $d$-regular random graphs are good expanders.
\begin{theorem}[\cite{BOLLOBAS1988241}]\label{thm:expands}
    If $d\ge 3$ is a fixed integer and $H$ is a $d$-regular random graph on $v$ vertices, then almost surely every subset $S$ of $k\le v/2$ vertices of $H$ has at least $c_d \cdot dk$ edges to $\overline{S}$, where $c_d \ge \tfrac{3}{50}$.
    When $d \ge 6$, $c_d \ge \tfrac{13}{75}$.\footnote{In general, $c_d \to \frac{1}{2}$ as $d \to \infty$.}
\end{theorem}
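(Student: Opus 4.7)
The plan is to work in Bollob\'as's \emph{configuration model}: attach $d$ half-edges to each of the $v$ vertices, pick a uniformly random perfect matching of the $dv$ half-edges, and contract each matched pair into an edge to obtain a $d$-regular multigraph. A classical result of Bollob\'as says that for fixed $d$ the resulting multigraph is simple with probability bounded below by some $c(d) > 0$, and conditioning on simplicity recovers the uniform distribution on simple $d$-regular graphs. Hence any property that fails with probability $o(1)$ in the configuration model also fails with probability $o(1)$ under the uniform distribution, and it suffices to establish the expansion bound in the configuration model.

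Fix $1 \le k \le v/2$ and $S \subset V$ with $|S|=k$. In the configuration model each of the $dk$ half-edges at $S$ is paired either internally (with another half-edge at $S$) or externally (with a half-edge at $\overline S$), and $e(S,\overline S)$ equals the number of external pairings. The bad event $e(S,\overline S) < c_d d k$ is thus equivalent to at least $m := \lceil (1-c_d) d k/2 \rceil$ internal pairings at $S$. Choosing the $2m$ internal half-edges, pairing them, and matching the remaining half-edges arbitrarily yields
\[
\Pr\bigl[e(S,\overline S) < c_d d k\bigr] \;\le\; \binom{dk}{2m}(2m-1)!!\cdot\frac{(dv-2m-1)!!}{(dv-1)!!}.
\]
Using $(2r-1)!! = (2r)!/(2^r r!)$ together with Stirling, and the bound $dv - 2i - 1 \ge d(v-k)$ for $i \le m$ (which holds since $k\le v/2$ and $m \le dk/2$), the right-hand side collapses to at most $(C' k/v)^{(1-c_d) d k /2}$ for an absolute constant $C'$. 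A union bound over the $\binom{v}{k} \le (ev/k)^k$ choices of $S$ and over $k$ then gives total failure probability at most
\[
\sum_{k=1}^{v/2} \Bigl(\tfrac{ev}{k}\bigl(\tfrac{C'k}{v}\bigr)^{(1-c_d)d/2}\Bigr)^k,
\]
which is $o(1)$ as soon as $(1-c_d)d/2 > 1$ with a small slack. For $d=3$ and $c_d = 3/50$, $(1-c_d)d/2 = 141/100$, and for $d\ge 6$ and $c_d = 13/75$, $(1-c_d)d/2 \ge 62/25$; both satisfy the condition with comfortable margin.

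The small-$k$ regime (say $k = O(1)$), where the $\binom{v}{k}$ factor is mild but the $(C'k/v)^{(1-c_d)dk/2}$ bound is weakest per term, is handled separately by the classical fact that asymptotically almost surely every radius-$\Theta(\log_d v)$ neighborhood is a tree: any connected $S$ of constant size then has at least $(d-2)k+2$ external edges, which already exceeds $c_d d k$ for the claimed constants, and disconnected $S$ is dealt with componentwise. The main obstacle is the \emph{quantitative} optimization: the soft argument above easily produces \emph{some} expansion constant $c_d > 0$, but extracting $3/50$ and $13/75$ requires tracking constants through the Stirling estimate and optimizing the union bound near $k = \Theta(v)$, where the combinatorial factor $\binom{v}{k}$ peaks and determines the worst case. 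Once that dominant regime is balanced, the tree-like argument easily covers small $k$, and transferring from the configuration model back to the uniform distribution via the simplicity correction completes the proof.
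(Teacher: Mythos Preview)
The paper does not prove this theorem; it is quoted from Bollob\'as's 1988 paper on the isoperimetric number of random regular graphs and used as a black box. So there is no ``paper's own proof'' to compare against---your task was effectively to reconstruct a proof of a cited result.

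Your outline is the correct one and matches Bollob\'as's method: work in the configuration model, bound $\Pr[e(S,\overline S) < c_d dk]$ for a fixed $S$ by counting matchings with many internal pairs, union-bound over all $S$, and transfer back to the uniform model via the $\Omega(1)$ probability of simplicity. The structure is right and would certainly yield \emph{some} positive $c_d$.

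That said, your soft version of the estimate does not actually recover the stated constants. If you carry through the Stirling bound as you wrote it, at $k=v/2$ the bracket $eC'^{\beta}(k/v)^{\beta-1}$ with $\beta=(1-c_d)d/2$ and $C'\approx 2e/(1-c_d)$ is well above $1$ for $d=3$, $c_d=3/50$, so the union bound diverges rather than vanishes. Bollob\'as's actual argument is sharper in two places: he computes the probability of a given edge-boundary size more carefully (not just an upper bound via overcounting internal matchings), and he optimizes the resulting expression in $k/v$ rather than only checking $k=v/2$. You correctly flag that ``extracting $3/50$ and $13/75$ requires tracking constants,'' but it is worth being explicit that the crude bound you wrote down is off by enough that the specific constants do \emph{not} fall out without that refinement---it is not just bookkeeping, the inequality goes the wrong way as stated. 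The separate small-$k$ treatment via locally tree-like structure is fine and standard.
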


For constant-sized subsets, we also use the following known lemma.
\begin{lemma}[e.g. \cite{wormald1999models}]\label{lem:smallsub}
    For any fixed integer $d \ge 3$, a $d$-regular random graph $H$ on $v$ vertices does not contain any subgraphs $J$ of fixed size with $|E(J)|>|V(J)|$, asymptotically almost surely.
\end{lemma}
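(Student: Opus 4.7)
The plan is to prove this via the configuration (pairing) model. Recall that a uniformly random $d$-regular graph on $v$ vertices can be sampled by taking $v$ buckets of $d$ half-edges each, choosing a uniformly random perfect matching of the $dv$ half-edges, and conditioning on the resulting multigraph being simple. For fixed $d$, it is a classical result of Bollob\'as that this conditioning event has probability $e^{-(d^2-1)/4} + o(1) = \Omega(1)$, so any event that holds with probability $1-o(1)$ in the pairing model also holds with probability $1-o(1)$ under the uniform $d$-regular distribution.

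First, I would upper bound the probability that a fixed graph $J$ on $k$ vertices and $\ell$ edges appears on a specific ordered tuple of vertices $(u_1,\ldots,u_k)$ in the pairing-model graph. For each edge $(i,j)$ of $J$, the event that some half-edge at $u_i$ is paired with some half-edge at $u_j$ has probability at most $d^2/(dv - O(\ell))$, since there are $d \cdot d$ ordered choices of half-edges out of at least $dv - O(\ell)$ remaining partners. These events exhibit the standard negative association of the pairing model (conditioning on one partial matching only reduces the set of available partners for other half-edges), so the probability that all $\ell$ edges of $J$ are realized is at most $\bigl(d^2/(dv-O(\ell))\bigr)^{\ell} = O(v^{-\ell})$, where the constant depends only on the fixed parameters $d$ and $\ell$.

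Summing over the at most $v^k$ ordered placements of $J$'s vertices into $[v]$, the expected number of (ordered) copies of $J$ in the pairing-model graph is $O(v^{k-\ell})$. Since $|V(J)| = k$ and $|E(J)| = \ell$ with $\ell > k$ and both are fixed integers, this bound is $O(v^{-1}) = o(1)$. Because we are restricting attention to subgraphs of fixed size, there are only finitely many isomorphism types to consider, so a union bound preserves the $o(1)$ estimate. Markov's inequality then yields that with probability $1-o(1)$ no such $J$ appears in the pairing-model graph, and the conditioning on simplicity divides this failure probability by $\Omega(1)$, giving the conclusion.

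The main technical point I expect to require some care is the uniform $O(v^{-\ell})$ bound on the probability that $\ell$ prescribed edges are simultaneously realized: when the chosen edges share vertices of $J$, one has to verify that the corresponding half-edge constraints do not over-saturate a bucket of size $d$ (if they do, the probability is zero and the bound is vacuous), and otherwise one should appeal to the exact formula for the number of pairings extending a given partial matching, which is $(dv-2\ell-1)!! / (dv-1)!!$ and introduces only $1+o_v(1)$ corrections for fixed $\ell$. Everything else is a routine first-moment calculation.
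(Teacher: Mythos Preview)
Your argument is correct and is exactly the standard first-moment proof via the configuration model. Note, however, that the paper does not give its own proof of this lemma: it is stated with the citation ``e.g.\ \cite{wormald1999models}'' and used as a black box, so there is no in-paper argument to compare against. What you have written is essentially the proof one would find in the cited reference.
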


    So to rule out dense shared subgraphs on fewer than $100$ vertices, we use \Cref{lem:smallsub}, to conclude that all but an $o(1)$ fraction of $d$-regular graphs cannot contain a subgraph on at most $100$ with density $>1$.

    For larger subgraphs, we apply \cref{thm:expands}, concluding that only a $o(1)$ fraction of all $d$-regular graphs contain subgraphs $J$ of size less than $v/2$ with more than $(d|V(J)| - \tfrac{3}{50} d|V(J)|)/2 = d\tfrac{47}{100}\cdot |V(J)|$ edges.
    Thus, for a ``typical'' pair of $d$-regular graphs $H,H'$ and for any $J\subset H,H'$ with $|V(J)| \in [100, v/2]$ the conclusion holds:
    \[
	|E(J)| < d\tfrac{47}{100}|V(J)| \le d\tfrac{48}{100} |V(J)| - d,
    \]
where the final inequality holds by our lower bound $|V(J)| \ge 101$.

    For convenience, fix $\alpha$ so that $\alpha \ge \tfrac{48}{100}$.
    For a ``typical'' $d$-regular graph $H$, we bound the number of $d$-regular graphs $H'$ which share a subgraph of more than $v/2$ vertices with at least $d\alpha \cdot |V(J)|$ edges.
    For each such subgraph $J \subseteq H$, $H$ intersects with at most $M_{(1-\alpha)dv}$ graphs $H'$, where $M_{2k}$ denotes the number of matchings on $2k$ elements.
    This is because once $J$ is fixed, there remain at most $\tfrac{d}{2}v - d|V(J)|\alpha \le \frac{1-\alpha}{2}dv$ edges in the rest of the graph, and the number of ways to arrange them is at most $M_{(1-\alpha)dv}$, the number of matchings on the $(1-\alpha)dv$ ``half edges'' that leave the vertices of degree less than $d$ once $J$ is removed.
    Again for convenience, let $\beta := 1 - \alpha$.
    There are at most at most $2^{dv/2}$ choices for of $J$, as that is the number of subsets of edges.
Thus, the degree of a typical $H$ (one which does not have dense subgraphs of size $< v/2$) in $G$ is at most $ D_G := 2^{dv/2} \cdot M_{\beta dv}$.

Now, we remove the $o(1)$ fraction of vertices of the meta-graph $G$ corresponding to $H$ with small dense subgraphs, nontrivial automorphisms, insufficiently expanding small subgraphs, or which are disconnected, to obtain a new meta-graph $G'$.
In the remaining graph $G'$ we find the maximum independent set.
    Since the maximum degree in $G'$ is at most $D_G$, the largest independent set has size at least $|G'|/(D_G + 1)$.
This independent set is our set $\cH$.

    Applying \Cref{fact:gdsize}, we have that
    \begin{align*}
	|\cH|
	\,\ge\, \frac{|G'|}{(D_G + 1)}
	\,\ge\, \frac{1}{4} \cdot \frac{|G|}{D_G}
	&\ge \frac{1}{4}\cdot \frac{(dv)!}{(\frac{1}{2}dv)! 2^{dv/2} (d!)^v e^{d^2}} \cdot \frac{(\frac{1}{2}\beta \cdot dv)!\cdot 2^{\beta dv/2}}{2^{dv/2} \cdot(\beta\cdot dv)!}\\
	&\ge \frac{1}{8 e^{d^2}} \cdot\left(\left(\frac{1}{e \cdot \beta^{\beta} \cdot 2\cdot d^{(1+\beta)}}\right)^{1/(1-\beta)}v\right)^{(1-\beta)dv/2},
    \end{align*}
    where in the final line we have applied Striling's approximation.
    The conclusion follows.
\end{proof}

\subsection{Test subgraphs with non-integer average degrees}

Now, we will use our integer-average-degree sets to construct a test set for graphs of non-integer average degrees.

\subsubsection{Test subgraphs with density at least 3}\label{sec:non-int-avg-large}
Here, we will show that for graphs of average degree at least $6$.
In this regime, the expansion properties of $d$-regular random graphs are so strong that they allow us to tread less delicately, and simply add a random matching on top of a $d$-regular random graph to obtain a graph with average degree between $d$ and $d+1$, without jeopardizing the strict balancedness property.

Our construction will be as follows:
\begin{proposition}\label{prop:large-d}
    Let $d \ge 6$ be an integer, and let $\lambda \in (0,1)$ be a rational number.
    Define $d' = \lambda \cdot (d+1) + (1-\lambda) \cdot d$.
    Then for any $\beta \in [0,(\lambda + \frac{1}{26})\frac{1}{d'}]$ and sufficiently large integers $v$, there exists a set of graphs on $v$ vertices with average degree $d'$, $\cH_{d'}^v$, with
    \[
	|\cH_{d'}^v| \ge v^{(\frac{1}{4}(1-\beta)\cdot d' - \lambda/2)\cdot v}
    \]
such that every $H \in \cH_{d'}^v$ has no non-trivial automorphisms and is strictly balanced, and so that every pair of distinct $H',H \in \cH_{d'}^v$ do not intersect on subgraphs $J$ with $|E(J)| \ge (1-\beta) \frac{1}{2}d'\cdot |V(J)| - \frac{d'}{2}$.
\end{proposition}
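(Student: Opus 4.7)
The construction proceeds in two stages. First, I invoke \cref{prop:integer-d} to obtain a family $\cH_d^v$ of $v^{\Omega(dv)}$ many $d$-regular graphs on $v$ vertices, each strictly balanced and automorphism-free, such that distinct pairs $G, G' \in \cH_d^v$ share no common subgraph of density much larger than $\alpha d$, for a parameter $\alpha \in [12/25, 1]$ (tentatively $\alpha = 1/2$). Second, for each $G \in \cH_d^v$, I consider augmentations $H = G \cup M$ where $M$ is a matching of size $\lambda v/2$ disjoint from $E(G)$. The resulting graph has exactly $\tfrac{d'}{2}v$ edges with every vertex of degree in $\{d, d+1\}$, and the ambient count of pairs $(G,M)$ is at least $v^{(\alpha d + \lambda)v/2}$ up to lower-order factors.

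For an individual $H = G \cup M$, strict balance is verified via the expansion property of $G$ (\cref{thm:expands}): any subgraph on $k \le v/2$ vertices has at most $(1-c_d)dk/2$ edges from $G$ plus at most $k/2$ matching edges, yielding density at most $\tfrac{1}{2}((1-c_d)d+1) < \tfrac{d'}{2}$ whenever $c_d d > 1-\lambda$, which for $d \ge 6$ follows from $c_d \ge 13/75$. The case $k > v/2$ is handled by passing to the complement $\bar J$ of size less than $v/2$ and applying the identity $|E(J)| = \tfrac{d'v}{2} - \sum_{u \in \bar J} \deg_H(u) + |E(\bar J)|$ together with the expansion bound on $|E(\bar J)|$. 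Automorphism-freeness of $H$ follows because $G$ has trivial automorphism group and the rare matchings that would create new symmetries can be discarded without significant loss.

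The central difficulty is the \emph{intersection bound} on common subgraphs $J \subseteq H \cap H'$. For distinct $H = G \cup M$ and $H' = G' \cup M'$, I decompose $E(H) \cap E(H') = (E(G)\cap E(G')) \cup E_\star$ where $E_\star \subseteq E(M) \cup E(M')$ is contained in the union of two matchings, hence a graph of maximum degree at most $2$, contributing at most $k$ edges on any vertex set $V(J)$ of size $k$. Combining this with \cref{prop:integer-d}'s bound $|E(G \cap G')_{V(J)}| \le \alpha d k - d$ (when $G \ne G'$) or the expansion bound $(1-c_d)dk/2$ for $k \le v/2$, the common-subgraph size satisfies roughly $|E(J)| \le \max((1-c_d)d/2,\,\alpha d)\cdot k + k - d$. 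Matching this against the required $(1-\beta)\tfrac{d'}{2}k - \tfrac{d'}{2}$ reduces to the algebraic condition $\beta d' \le \lambda + c_d d - 1$; since $c_d d \ge 78/75 > 27/26 = 1 + 1/26$ for $d \ge 6$, one may take $\beta d' = \lambda + 1/26$, exactly matching the proposition's hypothesis. The remaining subcase $G = G'$ with $M \ne M'$ is handled by sub-selecting matchings per $G$ so that any two differ in $\Omega(\beta v)$ edges, at only a constant-factor cost in family size via a volume/sphere-packing argument over matchings. The size bound $v^{(\tfrac{1}{4}(1-\beta)d' - \lambda/2)v}$ then follows by comparing exponents with $\alpha = 1/2$.

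The principal obstacle is the numerical orchestration: simultaneously extracting enough slack from three sources---the $\alpha d$ pair-bound of \cref{prop:integer-d}, the expansion constant $c_d \ge 13/75$ of random $d$-regular graphs, and the one-edge-per-vertex bound on the matching-induced contribution $E_\star$---to accommodate the threshold $\beta d' = \lambda + 1/26$ without sacrificing the family-size target. The near-equality $c_d d \ge 78/75$ (just barely exceeding $27/26$) for $d = 6$ is precisely what fixes the specific constant $1/26$ in the statement; any attempt to push $\beta$ larger would require a strictly better expansion estimate or a tighter pair-bound on $\cH_d^v$.
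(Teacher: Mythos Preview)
Your construction and the individual-graph properties (strict balance via expansion, automorphism-freeness) follow the same template as the paper. The gap is in the intersection bound for \emph{large} common subgraphs when $G \ne G'$.

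Your algebraic condition $\beta d' \le \lambda + c_d d - 1$ comes from the bound $|E(J)| \le (1-c_d)\tfrac{d}{2}k + \tfrac{k}{2}$, which uses only the expansion of a \emph{single} $d$-regular graph $G$ together with \emph{one} matching contribution. That argument is valid only for $k \le v/2$, where \cref{thm:expands} applies. For $k > v/2$ you switch to the pair bound $\alpha d k - d$ from \cref{prop:integer-d} plus the two-matching contribution $|E_\star|\le k$, giving $|E(J)| \le (\alpha d + 1)k - d$. But this does \emph{not} reduce to the same inequality: with any $\alpha \ge \tfrac{12}{25}$ (the smallest value \cref{prop:integer-d} permits), one needs $\alpha d + 1 \le (1-\beta)\tfrac{d'}{2}$, i.e.\ $\alpha \le \tfrac12 - \tfrac{1}{d} - \tfrac{\beta d'}{2d}$, which for $d=6$ forces $\alpha < \tfrac13$. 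So your direct bound for $G\ne G'$, $k > v/2$ simply fails, and your stated max is inconsistent with the algebraic reduction you then perform.

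The paper does not attempt a direct bound in this regime. Instead it argues that \emph{if} a dense common subgraph $K$ with $|V(K)| > v/2$ exists, then (since the $d$-regular parts can overlap on at most $\alpha d|V(K)|$ edges) the matchings $M$ and $M'$ must themselves share at least $\ell = (\tfrac14(1-\beta)d' - \tfrac12\alpha d)v$ edges. It then runs a single independent-set argument in a meta-graph over \emph{all} pairs $(H,M)$, bounding the degree by counting matchings $M'$ with $|M\cap M'|\ge \ell$; this uniformly handles both $G=G'$ and $G\ne G'$. Your sub-selection only treats the $G=G'$ case, leaving the $G\ne G'$, $k>v/2$ case unaddressed. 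To repair your argument you would need to extend the sphere-packing step to the full meta-graph rather than per-$G$.
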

\begin{proof}
    Our construction will be as follows: we sample a $d$-regular random graph $H$, we add a random matching $M$ on $\lambda\cdot v$ random vertices to obtain the graph $H\cup M$, and we add it to the set $\cG$ if $H \cup M$ is simple and if $H \in \cH_{d}^v$.
    Then, we will find the largest subset of pairs $(H,M) \in \cH$ such that no pair of graphs $H\cup M$ and $H' \cup M'$ overlap on a subgraph with average degree larger than $d - \tfrac{1}{5}$.

    First, we prove that $H\cup M$ is simple with constant probability.
    \begin{claim}\label{claim:test dense simple}
	If $d$ is a fixed integer, $H$ is a $d$-regular random graph on $v$ vertices, and $M$ is a random matching on $\lambda v$ vertices, then with probability at least $\exp(-\frac{1}{4}(d^2 + 2d))$ the matching $M$ does not contain any edges that are included in $H$.
    \end{claim}
    \begin{proof}
	The probability that $H \cup M$ is simple is at least the probability that a random $(d+1)$-regular graph drawn from the configuration model is simple; this is known to occur with probability at least $\exp(\frac{1-(d+1)^2}{4})$ \cite{wormald1999models}.
    \end{proof}
    Each $H\cup M$ clearly has the desired average degree; we will prove that our other desired properties hold as well.

    \begin{restatable}{claim}{db}\label{claim:densebalance}
	With high probability, $H\cup M$ is strictly balanced.
	Further, for any subgraph $J \subseteq H \cup M$ of with $|V(J)| = (1-\theta)v$,
	\[
	    |E(J)| \le
	    \begin{cases}
		\frac{d'}{2}|V(J)| - \frac{1}{50}|V(J)| & \text{if }\theta > \frac{1}{2},\\
		\frac{d'}{2}|V(J)| - \frac{1}{50}\theta \cdot v & \text{if }\theta \le \frac{1}{2}.
	    \end{cases}
	    \]
    \end{restatable}
    \begin{proof}
	Applying \cref{thm:expands} with $c_d$ the expansion constant of $d$-regular random graphs, we have that in $H$, for every subgraph $S$ containing at most $v/2$ vertices, $S$ contains at most $(1-c_d)\frac{d}{2}|S|$ edges.

	Thus, if $|V(J)| = (1-\theta)v$ with $\theta > \frac{1}{2}$, because $J$ must expand in $H$, even if $J$ contains the maximum possible number of vertices in $M$ we have
	\begin{align}
	    |E(J)|
	    &\le \left(1 - c_d\right)\frac{d}{2}\cdot (1-\theta)v  + \min\left(\frac{1}{2}\lambda v, \frac{1}{2}(1-\theta)v\right)
	    \le \frac{d'}{2}\cdot (1-\theta)v  - \frac{1}{2}(d\cdot c_d - 1 + \lambda)(1-\theta)v.
	\end{align}
	Since for $d \ge 6$, $dc_d \ge 1 + \frac{1}{25}$, this quantity is strictly smaller than $\frac{d'}{2}|V(J)|$.

	If $|V(J)| = (1-\theta)v$ with $\theta \le \frac{1}{2}$, then again by the expansion of the complement of $V(J)$ in $H$, $J$ has at most $(1 -\theta -c_d\theta)\frac{d}{2}v$ edges in $H$.
	Thus, even if $J$ contains the maximum possible number of edges in $M$,
	\begin{align*}
	    |E(J)|
	    &\le \left(1-\theta - c_d\theta\right)\frac{d}{2} v + \min\left(\frac{1}{2}\lambda v,\frac{1}{2}(1-\theta)v\right)
	    \le \frac{d'}{2}(1-\theta)v + \frac{\theta}{2}\left(\lambda -d\cdot c_d\right) v.
	\end{align*}
	When $d \ge 6$, $d c_d \ge 1 + \frac{1}{25}$ and this is strictly less than $\frac{d'}{2}v$.
    \end{proof}

    \begin{claim}
	With high probability over $H \cup M \in \cG$, $H\cup M$ has no non-trivial automorphisms.
    \end{claim}
    \begin{proof}
    To establish our claim, we utilize the following result which is a corollary of a theorem of McKay and Wormwald.
    \begin{theorem}[Corollary 3.8 in \cite{McKay1984}]
 Let $d \ge 3$ be a fixed integer.
	With high probability, a random simple graph on $v$ vertices with $\lambda v$ vertices of degree $d+1$ and $(1-\lambda)v$ vertices of degree $d$ has no non-trivial automorphsism.
    \end{theorem}
    Standard arguments (see e.g. \cite{Janson}) establish that our distribution over graphs is contiguous with the uniform distribution over simple graphs on $v$ vertices with this degree sequence, therefore this conclusion holds for the set $\cG$ as well.
    \end{proof}

    \begin{claim}
	For any $d \ge 6$ and $\beta \in [0, (\lambda + \frac{1}{26})\frac{1}{d'}]$, there is a set $\cS$ of at least $v^{(\frac{1}{4}(1-\beta) d' - \lambda/2)v}$ of pairs $(H,M)$ such that no two pairs in $\cS$ share a common subgraph $J$ with $|E(J)| \ge \frac{1}{2}(1-\beta)d'\cdot|V(J)| - \frac{d'}{2}$.
    \end{claim}
    \begin{proof}
	We again construct a meta-graph $G$, with one vertex for every pair of $H \in \cH_d^v$ and matching $M$ on $(1-\lambda)v$ vertices.
	We place an edge between $(H,M)$ and $(H',M')$ in the meta-graph if there is a shared subgraph $J$ of $H\cup M, H'\cup M'$ such that $|E(J)| \ge \frac{1}{2}(1 - \beta)d'|V(J)| - \frac{d'}{2}$.
	We'll take $\cS$ to be the largest independent set in $G$; to obtain a lower bound on its size, we will bound the degree of $(H,M)$ in $G$.

	First, for subgraphs $J \subseteq H\cup M, H' \cup M'$ of size at most $|V(J)| \le 25\cdot 26\cdot \frac{d'}{2}$, we can apply \Cref{lem:smallsub} to conclude that with high probability over $(H,M)$ and $(H',M')$, every $J$ contains at most $|V(J)|$ edges from $H$, and we also have that it contains at most $|V(J)|/2$ edges from $M$.
	Thus, as long as $|E(J)|\le \frac{3}{2}|V(J)| \le \frac{1}{2}(1-\beta)d'|V(J)| - \frac{d'}{2}$ when $d\ge 6$ and $\beta \le \frac{1}{6}$ (this is using the fact that for $|V(J)| \le 2$, there cannot be more than one edge).

	Now we consider larger subgraphs.
	By construction, $H$ and $H'$ do not overlap on subgraphs $J$ with $|E(J)| \ge \alpha d|V(J)| - d$.
	Also, we can use \cref{thm:expands} to conclude that with high probability for any subgraph $J$ of $H\cup M$ and $H'\cup M'$ of size $< v/2$ has
	\[
	    |E(J)|
	    \le
	    \frac{1}{2}\left((1-c_d)d|V(J)| + |V(J)|\right) < \frac{1}{2}(d - \tfrac{1}{25})|V(J)| < \frac{1}{2}(d - \tfrac{1}{26}) |V(J)| - \frac{d'}{2},
	    \]
	    where the second inequality uses that $d \ge 6$ and the third inequality uses that $|V(J)| > 25 \cdot 26 \cdot \frac{d'}{2}$.

	Suppose therefore that there exists some subgraph $K \subseteq H \cup M,H' \cup M'$ with
	$|E(K)| \ge \frac{1}{2}(1-\beta)\cdot d'\cdot |V(K)|-\frac{d'}{2}$, for $\beta < \frac{\lambda + \frac{1}{26}}{d + \lambda}$;
	it must follow that $|V(K)| \ge v/2$.
	Furthermore, since the restriction of $K$ to $H$ and $H'$ must have fewer than $\alpha d|V(K)|- d$ edges, we can conclude that the matchings $M$ and $M'$ must intersect on at least $\left(\frac{1}{2}(1-\beta)d' - \alpha d\right)\cdot|V(K)| \ge \tfrac{1}{4}(1-\beta)d' v - \tfrac{1}{2}\alpha dv$ edges since $|V(K)|\ge\frac{v}{2}$.

	We now bound $D_{H,M}$, the degree of $(H,M)$ in the meta-graph $G$, or the number of pairs $(H',M')$ for which $(H,M)$ can share at least $\ell =  \left(\tfrac{1}{4}(1-\beta)d' - \tfrac{1}{2}\alpha d\right)\cdot v$ edges between $M'$ and $M$.
	If $\ell > \lambda v/2$, this is already a contradiction, and the graph $G$ has no edges.

	Otherwise, $D_{H,M}$ is at most the number of choices of $\ell$ edges in $M$, times the number choices for the $m = \lambda v - 2\ell$ vertices outside of these $\ell$ edges that could be in the matching in $M'$, times the number of matchings remaining on $m$ elements (corresponding to the leftover vertices).
	This is at most
	\begin{align*}
	    D_{H,M}
	    &\le \binom{\tfrac{1}{2}\lambda v}{\ell} \cdot \binom{v}{m} \cdot M_m.
	\end{align*}
	On the other hand, with high probability over our choice of matchings $M$, $G$ has at least $\exp(-\frac{1}{4}(d^2 + 2d))|\cH_d^v|/2$ vertices from~\cref{claim:test dense simple}.
	The size of the maximum independent set in $G$ is at least
	\begin{align*}
	    |\cS|
	    \,\ge\, \frac{|G|}{D_{H,M} + 1}
	    &\ge \frac{1}{2}\cdot\frac{\frac{1}{2}\cdot\exp(-\frac{1}{4}(d^2 + 2d))\cdot |\cH_d^v|}{\binom{\tfrac{1}{2}\lambda v}{\ell}\binom{v}{m}\cdot M_m}\\
	    &\ge \frac{1}{4} \cdot \exp(-\frac{1}{4}(d^2 + 2d))\cdot |\cH_d^v|\cdot \Theta(v) \cdot \frac{1}{2^{\frac{1}{2}\lambda v}\cdot 2^{v}\cdot \left(\frac{m}{e}\right)^{m/2}}\\
	    &= |\cH_d^v| \cdot \Omega(v)^{-\frac{\lambda v}{2} + \frac{1}{4}(1-\beta)d'v - \frac{1}{2}\alpha d v - \frac{4}{\log v}v},
	\end{align*}
	where we have bounded the binomial coefficients by $2^v$ and applied Stirling's inequality to estimate the number of matchings.
	Finally, because $|\cH_d^v| \ge v^{\frac{1}{2}\alpha\cdot dv}$ by \cref{prop:integer-d}, we have our conclusion.
    \end{proof}

    Finally, taking $\cH_{d'}^v$ to be the largest independent set $\cS$ given by the previous claim, then removing the $o(1)$ fraction of graphs with non-trivial automorphisms, concludes the proof.
\end{proof}

\subsubsection{Test subgraphs with density close to 1}\label{sec:non-int-avg-small}
In this subsection, we describe a graph family $\cc{H}$ of graphs of density close to 1 that satisfies the desired properties.
In this regime, our strategy is to sample a $3$-regular random graph and subdivide its edges into paths.
If we want to achieve an arbitrary average degree, the paths that we subdivide cannot all have the same length.
However, when we subdivide into paths that are sufficiently long, the expansion of the $3$-regular graph will compensate for the asymmetry.\footnote{Of course, one might hope that choosing the lengths of the edges randomly will also compensate for the asymmetry, regardless of the path lengths; we conjecture that this is the case but have not proven it as of yet.}
Using this intuition, we are able to prove that our construction satisfies the desired properties when our target density is close to 1 (few degree-3 vertices relative to the number of degree-2 vertices).

In this subsection, we design test sets with density close to 1.
\begin{algorithm}[Generating $\cH_{d'}^v$ for $d' \in [2,2+\eps)$]\label{alg:small-p}.\\
    {\bf Input:} An integer $v$ and a desired average degree $d' \in [2,3)$, so that $d' = 2(1-\lambda) + \lambda 3$ for a rational number $\lambda \in (0,1)$.
    Let $\left\lfloor \frac{(1-\lambda)v}{3\lambda v/2}\right\rfloor = k$, so that $\frac{(1-\lambda)v}{3\lambda v/2} = \alpha \cdot k + (1-\alpha) \cdot (k+1)$ for $\alpha \in (0,1)$; this is the average number of desired degree-2 vertices per edge of $H$.

    We require that $\lambda v$ is an integer , that $3 \lambda v $ is an even integer,
    and that $\alpha 3\lambda v/2$ is an integer.

    For each graph $H \in \cH_{3}^{v}$:
    \begin{enumerate}
	\item Choose $\alpha\cdot 3\lambda v/2$ edges of $H$ uniformly at random, and replace them with paths of length $k$ (subdivide the edge with $k$ vertices).
	\item For the remaining $(1-\alpha)\cdot 3\lambda v/2$ edges, replace them with $k+1$ length paths.
    \end{enumerate}
    Add the resulting graph $H'$ to the set $\cH_{d'}^v$.
\end{algorithm}
\begin{proposition}\label{prop:small-d}
    Let $d' = (1-\lambda)\cdot 2 + \lambda \cdot 3 = 2 + \lambda$ for $\lambda \in (0,\frac{1}{76}]$, and let $v$ be a sufficiently large integer with $\lambda 3 v$ an even integer.
    Then the set $\cH_{d'}^v$ produced by \cref{alg:small-p} is a set of graphs on $v$ vertices with average degree $d'$, with
    \[
	|\cH_{d'}^v| \ge (\lambda v)^{c\cdot \lambda v},
    \]
for a constant $c > \tfrac{69}{100}$, and furthermore every $H \in \cH_{d'}^v$ has no non-trivial automorphisms, is strictly balanced, and every pair of distinct $H,H' \in \cH_{d'}^v$ do not intersect on any subgraphs $J$ with $|E(J)| \ge (1 + \frac{1}{3}\lambda)|V(J)|-2$.
\end{proposition}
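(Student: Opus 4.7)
The construction takes a graph $H_0 \in \cH_3^{\lambda v}$ (a random $3$-regular graph on $n := \lambda v$ vertices, hence with $3n/2 = 3\lambda v/2$ edges) and subdivides each edge into an internally-disjoint path of length $k$ or $k+1$ with $k := \lfloor 2(1-\lambda)/(3\lambda)\rfloor$, yielding a graph $H$ on $v$ vertices with $(1+\lambda/2)v$ edges. The plan is to verify the four claimed properties in sequence.

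\textbf{Counts and family size.} Arithmetic gives $|V(H)| = \lambda v + (1-\lambda)v = v$ and $|E(H)| = (d'/2)v$, so the average degree is $d' = 2+\lambda$. Since the degree-$3$ vertices of $H$ uniquely identify $V(H_0)$ and the path structure recovers $E(H_0)$, distinct base graphs yield distinct subdivisions, so $|\cH_{d'}^v|\ge |\cH_3^{\lambda v}|$. Applying \cref{prop:integer-d} with $d=3$ and $\alpha = 12/25$ gives $|\cH_3^{\lambda v}|\ge (\lambda v)^{(18/25)\lambda v}$, which exceeds $(\lambda v)^{(69/100)\lambda v}$ as required.

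\textbf{Trivial automorphism group.} Any automorphism $\phi$ of $H$ preserves degrees, so it restricts to a map on the degree-$3$ vertices and, since ``connected by an internally-disjoint path in $H$'' is preserved, induces an automorphism of $H_0$. That induced map is trivial by \cref{prop:integer-d}. With both endpoints of every subdivided path fixed, the internal vertices of each path must also be fixed: any non-identity permutation of the internals would carry an incidence $(u,v_1)$ to a non-edge $(u,v_j)$ with $j\ne 1$.

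\textbf{Strict balance (quantitative).} For any $J\subseteq H$, let $S\subseteq V(H_0)$ denote the degree-$3$ vertices of $H$ lying in $J$ and let $F\subseteq E(H_0[S])$ denote those edges of $H_0$ whose full subdivided path sits inside $J$. All remaining structure in $J$ consists of partial-path fragments, each having density at most $1 < d'/2$; consequently partial fragments can only decrease density, and the densest subgraphs on a given vertex count are fully determined by $S$ and $F$. This yields $|V(J)| = |S| + L$ and $|E(J)| = |F| + L$ with $L = \sum_{e\in F}\ell_e\ge k|F|$. Bollob\'as' expansion theorem (\cref{thm:expands}) gives $|F|\le (1-c_3)\tfrac{3}{2}|S|$ for $|S|\le n/2$ with $c_3\ge 3/50$; combined with $\bar\ell\ge k \approx 2/(3\lambda)$ this produces a strict density gap below $1+\lambda/2$. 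For $|S|>n/2$ one passes to the complement. The quantitative strengthening needed by \cref{lem:sparseequal}, namely density at most $1+\lambda/2 - \frac{1}{100}\lambda\alpha$ on subgraphs of size $(1-\alpha)v$, is obtained by translating the constraint on $|V(J)|$ into a lower bound on the missing vertices/edges relative to $H$.

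\textbf{Common subgraph density.} For distinct $H,H'\in\cH_{d'}^v$ built from $H_0\ne H_0'$, suppose $J$ embeds as a common subgraph. The degree-$3$ vertices of $J$ must embed as junctions in both $H$ and $H'$, so contracting every maximal internally-disjoint path of $J$ produces an abstract graph $\tilde J$ that embeds as a subgraph in both $H_0$ and $H_0'$. By \cref{prop:integer-d} with $\alpha = 12/25$, $|E(\tilde J)|\le (36/25)|V(\tilde J)| - 3$. Each edge of $\tilde J$ that is ``used'' by $J$ contributes $\ell+1$ edges and $\ell$ internal vertices for some $\ell\ge k$; partial fragments again only lower density. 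Substituting and using $k\ge 2(1-\lambda)/(3\lambda) - 1$ bounds the density of $J$ by the target, establishing $|E(J)|\le (1+\tfrac{\lambda}{3})|V(J)| - 2$ after rearrangement.

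The principal obstacle is the density bookkeeping for both the strict-balance and common-subgraph bounds: one must verify rigorously that partial-path fragments never raise density, and one must propagate the lower bound $\ell\ge k$ through the expansion estimates to extract explicit constants. The restriction $\lambda\le 1/76$ is what makes the worst-case expansion constant $c_3 = 3/50$ for random $3$-regular graphs, combined with path length $k\approx 2/(3\lambda)$, precisely strong enough to push both density bounds below the required thresholds.
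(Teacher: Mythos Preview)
Your proposal follows the paper's skeleton exactly: size from $|\cH_3^{\lambda v}|$, automorphisms via restriction to the $3$-regular core, strict balance via expansion of the core, and the common-subgraph bound via contraction to a shared subgraph of the two cores. The size, automorphism, and common-subgraph pieces match the paper essentially verbatim.

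The gap is in the quantitative strict balance. Your bound $L \ge k|F|$ amounts to pretending every subdivided path has the \emph{shortest} possible length $k=\lfloor 2(1-\lambda)/(3\lambda)\rfloor$. But then, as $|S|\to \lambda v$ and $|F|\to \tfrac{3}{2}\lambda v$, the resulting density estimate $\tfrac{(k+1)|F|}{|S|+k|F|}$ tends to $1+\tfrac{1}{2+3k}$, which strictly \emph{exceeds} $1+\lambda/2$ whenever $k<\tfrac{2(1-\lambda)}{3\lambda}$ (i.e., whenever both path lengths actually occur). So the crude lower bound on $L$ cannot deliver the quantitative conclusion $1+\tfrac{\lambda}{2}-\tfrac{1}{100}\theta\lambda$ for small $\theta$; your approximation ``$k\approx 2/(3\lambda)$'' is precisely what hides this.

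The paper closes this with a two-case argument on $|V(J_3)|$. When many degree-$3$ vertices are missing ($|V(J_3)|\le(1-\tfrac{\theta}{2})\lambda v$), it reduces to full paths as you do, but then lower-bounds the \emph{average} path length $\bar k$ inside $J$ by $\tfrac{2(1-\lambda)}{3\lambda}$ minus an explicit correction in $\theta$ and the skeleton density $\delta$ (\cref{claim:kbar}); the expansion of $H_3$ then beats that correction. When few degree-$3$ vertices are missing, it does \emph{not} reduce to full paths: instead it bounds $|E_J(\ell)|\le(1+\tfrac{1}{k(\ell)})|V_J(\ell)|$ edge-by-edge and controls the vertex-weighted average $1/\hat k$ via a counting argument (\cref{claim:sub-leng}) showing that a large $J$ cannot systematically avoid the long paths. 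Both halves exploit that the mix of $k$- and $(k+1)$-paths is pinned down by the global count $(1-\lambda)v$ of path vertices, which is exactly the constraint your sketch drops.
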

\begin{proof}
    Let $H \in \cH'$, and let $H_3$ be the corresponding $3$-regular graph (given by shrinking every $k$-- or $(k+1)$-path to a single edge).
    $H'$ has average degree $d'$ by construction.

    \begin{claim}
    $H$ has no non-trivial automorphisms.
    \end{claim}
    \begin{proof}
	This follows from the fact that $H_3$ has no non-trivial automorphisms; an automorphism of $H$ corresponds to an automorphism of $H_3$, as in an automorphism of $H$, every subdivided edge is mapped to a subdivided edge.
    \end{proof}

    \begin{restatable}{lemma}{qbalsparse}(Quantitative strict balance).\label{lem:qbalsparse}
    Let $d' = 2 + \lambda$, with $\lambda \in (0,\frac{1}{76}]$.
    Suppose that $H \in \cH_{d'}^v$ is a graph on $v$ vertices with average degree $d'$, produced by \cref{alg:small-p} by subdividing the edges of the $3$-regular graph $H_3$ into paths.
    Any subgraph $J \subset H$ with $(1 - \theta)v$ vertices has density at most $1 + \frac{1}{2}\lambda - \frac{1}{100}\theta \lambda$.
	In particular, $H$ is strictly balanced.
    \end{restatable}
    The proof of \cref{lem:qbalsparse} is somewhat involved, and we prove it below after completing the proof of the proposition.

	Finally, it remains to prove that many pairs $H,H' \in \cH$ do not overlap on dense subgraphs.
	\begin{claim}
	    If $\lambda < \frac{11}{109}$, then every distinct pair $H,H' \in \cH_{d'}^v$ overlap on no subgraphs $J$ with $|E(J)| \ge (1 + \lambda/4)|V(J)| - 2$.
	\end{claim}
	\begin{proof}
	    First, we use \cref{lem:smallsub} (and our construction of $\cH_3$) to conclude that for any shared subgraph $J$ of $H,H'$ with at most $100$ vertices in $H_3$ or $H'_3$, the density is bounded by $1$.

	    Since $H_3,H'_3$ come from $\cH_3^{\lambda v}$, we already have that $H_3,H'_3$ do not overlap on any subgraph $J_3$ with  $|E(J_3)| \ge \frac{12}{25}\cdot 3|V(J_3)| - 3$.
	    If $H_3,H'_3$ overlap on a subgraph $J_3$ of size at least $101$ with average degree $\delta$, then in $H,H'$ the corresponding subgraph $J$ will have density at most
	    \[
		\frac{|E(J)|}{|V(J)|}
		\le \frac{\tfrac{1}{2}\delta |V(J_3)|\cdot (k+1)}{|V(J_3)| + \tfrac{1}{2}\delta|V(J_3)|\cdot k}
		\le 1 + \frac{\tfrac{1}{2}\delta -1}{1 + \tfrac{1}{2}\delta\cdot k},
	    \]
	    and again if we use that $k > \tfrac{3(1-\lambda)}{2\lambda}-1$,
	    and that $\delta < 3-\tfrac{1}{50}$ we get that
	    \[
		\frac{|E(J)|}{|V(J)|}
		\le 1 + \frac{\tfrac{1}{2}\delta -1}{1 + \tfrac{1}{2}\delta\cdot (\tfrac{3(1-\lambda)}{2\lambda} -1)}
		< 1 + \frac{\lambda}{4},
	    \]
	    whenever $\lambda < \frac{11}{109}$.
	    Since we have assumed that $|V(J_3)|\ge 100$ and that the density of $J_3$ is larger than $1$, $|V(J)| \ge 100\cdot k \ge 100(\tfrac{3(1-\lambda)}{\lambda} - 1) > \frac{6}{\lambda}$, and we have that
	    \[
		|E(J)| < (1+\frac{\lambda}{3})|V(J)| -2,
	    \]
	    as desired.
	\end{proof}
	Finally, we note that the size of $\cH_{d'}^v$ is simply the size of $\cH_{3}^{\lambda v}$, from which we get our bound.
\end{proof}

We now prove that our graphs are strictly balanced; in fact we will need a quantitative version of this statement, which first requires the following supporting lemma.
\begin{lemma}
    \label{lemma: strict balance on core}
    Let $d' = 2 + \lambda$, with $0 < \lambda \le \frac{1}{9}$.
    Suppose that $H \in \cH_{d'}^v$ is a graph on $v$ vertices with average degree $d'$, produced by \cref{alg:small-p} by subdividing the $3$-regular graph $H_3$ on $\lambda v$ vertices into $k$- and $(k+1)$-paths.
    Then any subgraph $J \subset H$ containing at most  $(1-\theta)\lambda v$ vertices of $H_3$ (vertices of degree $3$ in $H$) has density at most $1 + \frac{1}{2}\lambda - \frac{1}{30}\theta\lambda + \frac{152}{150}\theta \lambda^2$.
\end{lemma}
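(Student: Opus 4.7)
Let $J \subseteq H$, and write $s = |V(J) \cap V(H_3)|$ for the number of degree-$3$ vertices of $J$, so by hypothesis $s = (1-\theta')\lambda v$ for some $\theta' \ge \theta$. Let $I = |V(J)| - s$ count the interior (degree-$2$) vertices, let $J_3 \subseteq H_3$ be the induced subgraph on these $s$ vertices, and write $e_3 = |E(J_3)|$. The first step will be to bound $|E(J)|$ structurally. Every edge of $J$ lies on one of the subdivided paths of $H$, so I will split these paths into \emph{closed} paths (those subdivided $H_3$-edges both of whose endpoints lie in $J_3$ and all of whose interior vertices lie in $J$) and \emph{open} paths (the rest). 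A closed $k$-path, resp.\ $(k{+}1)$-path, contributes $k+1$, resp.\ $k+2$, edges, while an open sub-path with $t$ interior vertices in $J$ contributes at most $t$ edges, with equality when those interior vertices form dangling prefixes/suffixes hanging off an included endpoint. Letting $f = f_k + f_{k+1}$ count the closed paths, this yields the clean bound $|E(J)| \le I + f \le I + e_3$.

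Next I will optimize the density $\tfrac{|E(J)|}{|V(J)|} \le \tfrac{I+f}{s+I}$. A routine derivative calculation shows this expression is decreasing in $I$ precisely when $f > s$, so I may take $I$ as small as possible, subject to $I \ge f_k k + f_{k+1}(k+1)$ together with the construction's capacity constraints $f_k \le \alpha\cdot 3\lambda v/2$, $f_{k+1} \le (1-\alpha)\cdot 3\lambda v/2$. Partial derivatives in $f_k,f_{k+1}$ show the density-maximizing choice saturates the pool of $k$-paths before touching any $(k{+}1)$-paths (whenever this is consistent with $f \le e_3$). Collecting cases, the density is at most $\max\!\bigl(1,\ \tfrac{e_3(k+1)}{s+e_3 k}\bigr)$ up to corrections that only \emph{decrease} the density (they appear when the construction forces some closed paths to be $(k{+}1)$-paths because $e_3 > \alpha \cdot 3\lambda v/2$), and these corrections can be absorbed into the error terms.

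Now I bring in the expansion of the underlying $3$-regular graph $H_3$. Because $H_3 \in \cH_3^{\lambda v}$, by \cref{thm:expands} any subset $S \subseteq V(H_3)$ with $|S| \le \lambda v/2$ satisfies $|E(S,\overline S)| \ge 3 c_3 |S|$ with $c_3 \ge 3/50$. Applied to $\overline{J_3}$ in the regime $\theta' \le 1/2$ this yields $e_3 \le \tfrac{3s}{2} - \tfrac{9\theta'\lambda v}{100}$; applied to $J_3$ directly in the regime $\theta' \ge 1/2$ (when $|J_3| \le \lambda v/2$) it yields $e_3 \le \tfrac{3s}{2}(1-c_3)$. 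In both regimes the reduction in $e_3$ from the trivial handshake bound $3s/2$ is linear in $\theta'$ with an expansion constant at least $9/100$.

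To finish, I will substitute these expansion bounds into $\tfrac{e_3(k+1)}{s+e_3 k}$, use the construction's value $k = \lfloor \tfrac{2(1-\lambda)}{3\lambda}\rfloor$ so that $2+3k \in [\tfrac{2-3\lambda}{\lambda},\tfrac{2}{\lambda}]$, and expand to first order in the small quantities $\theta'\lambda$ and $\lambda$. The leading behavior is $1+\tfrac{1}{2+3k}$: if $\tfrac{2(1-\lambda)}{3\lambda}$ were an integer this would be exactly $1+\tfrac{\lambda}{2}$, and the floor introduces a positive correction $\tfrac{\lambda}{2-3\lambda}-\tfrac{\lambda}{2} = O(\lambda^2)$, bounded for $\lambda \le 1/9$ by a constant at most $\tfrac{152}{150}\lambda^2$; the expansion gain then contributes an additional $-\Theta(\theta'\lambda)$ with coefficient at least $\tfrac{1}{30}$ in this $\lambda$-range. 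The main obstacle will be the bookkeeping: producing a single bound valid for all $\theta' \in [\theta,1]$ and both expansion regimes, and carrying the arithmetic carefully enough to pin down the constants $\tfrac{1}{30}$ and $\tfrac{152}{150}$. I will finesse the quantifier over $\theta'$ by showing the density upper bound is monotone decreasing in $\theta'$, so that the worst case reduces to $\theta' = \theta$ and matches the target bound; boundary cases such as $e_3 \le s$ (where density is at most $1$) and $f < e_3$ (where the dominant term in the optimization is the same) follow immediately.
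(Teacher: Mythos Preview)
Your reduction to ``closed paths only'' and the use of the expansion of $H_3$ are both sound and match the paper's strategy. The gap is in the step where you replace the mix of $k$-- and $(k{+}1)$--paths by $k$--paths alone and declare the resulting corrections ``absorbable into the error terms''. Concretely, your upper bound $\frac{e_3(k+1)}{s+e_3 k}$ evaluates at $\theta'=0$ (i.e.\ $s=\lambda v$, $e_3=\tfrac{3}{2}\lambda v$) to $1+\tfrac{1}{2+3k}$, and the floor $k=\lfloor \tfrac{2(1-\lambda)}{3\lambda}\rfloor$ gives $\tfrac{1}{2+3k}-\tfrac{\lambda}{2}\in[0,\tfrac{9}{10}\lambda^2]$. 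This $O(\lambda^2)$ overshoot is \emph{independent of} $\theta$, whereas the target allows only $+\tfrac{152}{150}\theta\lambda^2$. At $\theta=0$ (where $J=H$ is admissible and has density exactly $1+\tfrac{\lambda}{2}$) your bound reads $1+\tfrac{\lambda}{2}+O(\lambda^2)$ and fails outright; a first-order expansion in $\theta$ shows the discrepancy persists for all $\theta\lesssim 30\lambda$. The capacity correction you discard is not lower order: it is the very term that cancels the floor error and converts the residual into $O(\theta\lambda^2)$.

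The paper handles this by bounding not the minimum but the \emph{average} subdivided-path length $\bar{k}$ over the edges present in $J_3$, proving $\bar{k}\ge \tfrac{2(1-\lambda)}{3\lambda}-\theta-(1-\theta)(1-\tfrac{\delta}{3})$ where $\delta$ is the average degree of $J_3$; the point is that when $\theta$ is small and $\delta$ is close to $3$, $J_3$ contains most edges of $H_3$ and must therefore use nearly the global mix of path lengths, forcing $\bar{k}$ close to $\tfrac{2(1-\lambda)}{3\lambda}$ rather than merely $\ge k$. Plugging this $\theta$-dependent lower bound on $\bar{k}$ into $1+\tfrac{\delta/2-1}{1+\delta\bar{k}/2}$ is what produces the coefficient $\tfrac{152}{150}\theta\lambda^2$ (rather than $\lambda^2$) after the arithmetic. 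Your argument can be repaired by actually carrying out the capacity accounting you set up --- replacing $I\ge e_3 k$ with $I\ge e_3 k + \max(0,\,e_3-\alpha\cdot\tfrac{3\lambda v}{2})$ and tracking how the second term scales with $\theta$ --- which is equivalent to the paper's $\bar{k}$ bound.
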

\begin{proof}
    First, the following claim allows us to consider only $J$ corresponding to full edge-induced $J_3 \subset H$.
    \begin{claim}\label{claim:black magic}
	Suppose that $J$ is a subgraph of $H$ with density at least $1$, and that $J$ contains only a proper subset of the edges an vertices which correspond to the subdivided path given by the edge $e$ in $H_3$.
	Then the subgraph $\tilde J \subset J$ given by removing all vertices and edges from the path corresponding to $e$ has larger density than $J$.
    \end{claim}
    \begin{proof}
	The edge $e$ is subdivided into a path.
	If $J$ contains a strict sub-path, then it must contain a vertex of degree $1$.
	In a graph of density at least $1$, density can only increase when vertices of degree-$1$ are removed.
	Therefore, if we remove the degree-1 vertices one-by-one until we obtain $\tilde{J}$, we obtain a graph that can only be denser.
    \end{proof}

    By \cref{claim:black magic}, $J$ is at its densest if every subdivided edge from $H$ is either present or completely missing, thus we may restrict our attention to subgraphs $J$ which correspond exactly to edge-induced subgraphs $J_3$ of $H_3$, with all subdivided edges fully present.

    Suppose that $J_3$ has average degree $\delta$.
    To obtain $J$, every edge of $J_3$ is subdivided into a $k$- or $(k+1)$-path; let the average path length among subdivided edges present in $J$ be $\overline{k}$.
    In $H$, by design the average path length is $\frac{2(1 - \lambda)}{3\lambda}$.
    From this, we can lower bound $\overline{k}$ as follows:
    \begin{claim}\label{claim:kbar}
	$$ \overline{k} \ge \frac{2(1 - \lambda)}{3\lambda} - \theta -  \left(1-\frac{\delta}{3}\right)(1-\theta).
	$$
    \end{claim}
    \begin{proof}
	By assumption, $|V(J_3)| = (1 - \theta)\lambda v$.
	Suppose that $|E(J_3)| = (\frac{3}{2}(1-\theta)\lambda v - \ell)$, where $\ell$ accounts for edges missing from $J_3$ entirely and also for edges with only one endpoint in $V(J_3)$.
	The total number of edges in subdivided paths in $J$ is then $(\frac{3}{2}(1 - \theta)\lambda v - \ell)\overline{k}$.
	Therefore the total number of edges along subdivided paths outside $J$ is at most $(\frac{3}{2}\theta \lambda v + \ell)(\overline{k} + 1)$. 
	The total number of vertices participating in subdivided paths in $H$ is $(1 - \lambda) v$, and from this we get
	\[
	\left(\frac{3}{2}(1 - \theta)\lambda v - \ell\right)\cdot\overline{k} + \left(\frac{3}{2}\theta \lambda v + \ell\right)\cdot (\overline{k} + 1)
	~\ge (1-\lambda)v,
	\]
	and solving for $\overline{k}$ we get
	\[
	    \overline{k}
	    ~\ge~ \frac{2(1 - \lambda)}{3\lambda} - \theta - \frac{2\ell}{3\lambda v}.
	    \]
	    By definition of $\ell$ and $\delta$, we have $\ell = \frac{3-\delta}{2}(1-\theta)\lambda v$.
	    Plugging this in, we have our bound.
    \end{proof}
    Then applying our lower bound on $\overline{k}$ from \cref{claim:kbar}, we have that
    \begin{equation}\label{eq:subdens}
	\frac{|E(J)|}{|V(J)|}
	\le \frac{\frac{1}{2}\delta|V(J_3)|(\overline{k}+1)}{|V(J_3)| + \frac{1}{2}\delta|V(J_3)|\overline{k}}
	= 1 + \frac{\frac{1}{2}\delta - 1}{1 + \frac{1}{2}\delta \overline{k}}
	\leq 1 + \frac{\frac{1}{2}\delta - 1}{1 + \frac{1}{2}\delta \left(\frac{2(1-\lambda)}{3\lambda} - \theta - (1-\theta)(1-\frac{\delta}{3}) \right)},
    \end{equation}
    where in the final inequality we apply our lower bound on $\overline{k}$.
    Taking the derivative with respect to $\delta$, one may verify that so long as $\lambda < 1/7$ and $\delta \le 3$, this quantity grows with $\delta$.

    We obtain an upper bound on $\delta$ using the expansion of $H_3$, from \cref{thm:expands}.
    If $\theta < \frac{1}{2}$, we have that $J_3$ can contain at most $\frac{1}{2}(3 - \theta \cdot \frac{3}{25})\lambda v$ edges, since the set of size $\theta \lambda v$ not included in $J_3$ has expansion at least $\tfrac{3}{50}\theta \lambda v$.
    If $\theta > \frac{1}{2}$, we have that $J_3$ can contain at most $\frac{1}{2}(3 - \frac{3}{25})(1-\theta)\lambda v$ edges, for the same reason.
    Since $\frac{1}{2}(3-\theta\frac{3}{25}) \ge \frac{1}{2}(3-\frac{3}{25})(1-\theta)$ for non-negative $\theta$, we use the former, looser bound.

    Returning to \cref{eq:subdens} we have that
    \begin{align*}
	\frac{|E(J)|}{|V(J)|} & \leq 1 + \frac{\frac{1}{2}(3-\theta \frac{3}{25})-1}{1 + \frac{1}{2}(3-\theta \frac{3}{25})\left(\frac{2(1-\lambda)}{3\lambda} - \theta - (1-\theta)\frac{1}{25}\theta\right)} \\
	& = 1 + \frac{1}{2}\lambda\cdot\left( \frac{1 - \frac{3}{25}\theta}{1 - \frac{1}{25}\theta - \frac{38}{25}\lambda \theta + \frac{153}{1250}\lambda \theta^2 - \frac{3}{1250} \lambda \theta^3}\right) \\
	& \leq 1 + \frac{1}{2}\lambda\cdot \left(1 - \frac{3}{25}\theta\right)\cdot\left(1 + \frac{4}{3}\left(\frac{1}{25}\theta + \frac{38}{25}\lambda \theta - \frac{153}{1250}\lambda \theta^2 + \frac{3}{1250} \lambda \theta^3\right)\right) \\
	& = 1 + \frac{1}{2}\lambda\cdot \left(1 - \frac{1}{15}\theta + \frac{152}{75}\lambda \theta - \frac{4}{625} \theta^2 - \frac{254}{625}\lambda \theta^2 + \frac{356}{15625}\lambda \theta^3 - \frac{6}{15625}\lambda \theta^4\right)\\
	& \le 1 + \frac{1}{2}\lambda\cdot \left(1 - \frac{1}{15}\theta + \frac{152}{75}\lambda \theta \right).
    \end{align*}
    Where to obtain the third line we have used that for $\theta \in [0,1]$ and $\lambda \le \frac{1}{9}$ the denominator within the parentheses is at least $\tfrac{3}{4}$, and in the last step we have used that $\theta \in [0,1]$ and $\lambda \le \frac{1}{9}$.
    The conclusion follows.
\end{proof}

Finally, we prove \cref{lem:qbalsparse}.
\qbalsparse*
\begin{proof}
    Here, we consider arbitrary edge-induced subgraphs $J\subseteq H$.
    We let $J_3 \subseteq H_3$ be the edge-induced subgraph of $H_3$ which contains any edge for which the corresponding path in $J$ is at least partially present.
    That is, we allow the subdivided path corresponding to each edge $\ell \in E(J_3)$ to be only partially present in $J$.
    For an edge $\ell \in E(J_3)$, we'll let and $V_J(\ell), E_J(\ell)$ be the vertex set and edge set of the corresponding path in $J$, and define $V_H(\ell), E_H(\ell)$ correspondingly for $H$; also define $k(\ell)$ to be the path length of the path corresponding to $\ell$ in $H$.

    For a subgraph $J \subset H$ where $V(J) = (1 - \theta)v$, we consider two cases: $|V(J_3)| \leq (1 - \frac{1}{2}\theta)\lambda v$ and $|V(J_3)| \geq (1 - \frac{1}{2}\theta)\lambda v$.
    For the first case, by \cref{lemma: strict balance on core} we know the density is at most $1 + \frac{1}{2}\lambda - \frac{1}{60}\theta \lambda + \frac{152}{300}\theta \lambda^2 \le 1 + \frac{1}{2}\lambda - \frac{1}{100}\theta\lambda$ when $\lambda \le \frac{1}{76}$.

    For the second case, the number of ``path vertices'' (vertices that result from subdividing edges of $H_3$) in $J$ is at most
    \begin{equation}
	(\# \text{ path vtcs } \in J) \le
	(1 - \theta) v - (1 - \frac{1}{2}\theta)\lambda v = (1 - \lambda)v - \theta v + \frac{1}{2}\theta \lambda v\label{eq:pbd}
    \end{equation}
Note that for each edge $\ell$ in $J_3$ we have $|E(\ell)| \leq (1 + \frac{1}{k(\ell)}) \cdot |V(\ell)|$.
    We can bound the average $\frac{1}{k(\ell)}$ in $J$ using the following claim:
    \begin{claim}\label{claim:sub-leng}
	Define $\hat{k}$ to be the average length of a subdivided path in $J$, as weighted by the number of vertices present in that path:
	\[
	    \frac{1}{\hat k} := \frac{1}{\sum_{\ell \in E(J_3)} |V_J(\ell)|}\sum_{\ell \in E(J_3)} \frac{1}{k(\ell)} |V_J(\ell)|.
	    \]
	If $J$ contains at least a $(1-\alpha)$-fraction of the path vertices of $H$, then
	\[
	    \frac{3\lambda}{2(1 - \lambda)}(1 - \frac{27}{16}\lambda \alpha) \leq \frac{1}{\hat{k}} \leq \frac{3\lambda}{2(1 - \lambda)}(1 + \frac{27}{16}\lambda \alpha).
	    \]
    \end{claim}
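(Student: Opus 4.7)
The plan is to reduce the computation of $\frac{1}{\hat{k}}$ to a one-parameter optimization. Let $\alpha_0 \in [0,1)$ denote the parameter of \cref{alg:small-p} (distinct from the $\alpha$ in the claim), so that the total number of path vertices of $H$ lying on $k$-paths is $M_k := \alpha_0\cdot\tfrac{3\lambda v}{2}\cdot k$ and on $(k+1)$-paths is $M_{k+1} := (1-\alpha_0)\cdot\tfrac{3\lambda v}{2}\cdot(k+1)$, with $M_k + M_{k+1} = (1-\lambda)v$. Let $N_k, N_{k+1}$ denote the corresponding counts for $J$. Then
\[
\frac{1}{\hat{k}} \;=\; \frac{N_k/k + N_{k+1}/(k+1)}{N_k + N_{k+1}},
\]
subject to $N_k + N_{k+1} = (1-\alpha)(1-\lambda)v$ and $0 \leq N_j \leq M_j$.

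First I would observe that, with $N_k + N_{k+1}$ fixed and $\tfrac{1}{k} > \tfrac{1}{k+1}$, the expression above is strictly increasing in $N_k$. Hence it attains its maximum by placing as many of the $\alpha(1-\lambda)v$ \emph{missing} path vertices as possible on $(k+1)$-paths, and its minimum by placing them on $k$-paths. In the ``generic'' regime where both of these extremal configurations are feasible (i.e.\ $\alpha(1-\lambda)v \leq \min(M_k, M_{k+1})$), substituting gives
\[
\frac{1}{\hat{k}}_{\max} = \frac{3\lambda}{2(1-\lambda)}\cdot\frac{1 - \eta\alpha}{1-\alpha}, \qquad \frac{1}{\hat{k}}_{\min} = \frac{3\lambda}{2(1-\lambda)}\cdot\frac{1 - \eta'\alpha}{1-\alpha},
\]
where $\eta := \tfrac{2(1-\lambda)}{3\lambda(k+1)}$ and $\eta' := \tfrac{2(1-\lambda)}{3\lambda k}$. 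If one of the extremal configurations is infeasible, the corresponding extremum is attained at the boundary $N_j = 0$, which only brings $1/\hat{k}$ closer to the unweighted average $\tfrac{3\lambda}{2(1-\lambda)}$ and hence only tightens the bound.

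Next I would estimate $\eta$ and $\eta'$. Writing $c := \tfrac{2(1-\lambda)}{3\lambda} - k \in [0,1)$ (which exists because $k = \lfloor\tfrac{2(1-\lambda)}{3\lambda}\rfloor$), a direct computation gives $1 - \eta = \tfrac{1-c}{k+1}$ and $\eta' - 1 = \tfrac{c}{k}$, so both $|\eta - 1|$ and $|\eta' - 1|$ are bounded by $\tfrac{1}{k}$. Applying the identity $\tfrac{1 - \eta\alpha}{1-\alpha} = 1 + \tfrac{(1-\eta)\alpha}{1-\alpha}$ (and its sign-flipped analogue for $\eta'$) then yields
\[
\left|\frac{1}{\hat{k}} - \frac{3\lambda}{2(1-\lambda)}\right| \;\leq\; \frac{3\lambda}{2(1-\lambda)}\cdot\frac{\alpha}{k(1-\alpha)}.
\]

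Finally, I would convert this into the claimed form using $k \geq \tfrac{2-5\lambda}{3\lambda}$ (immediate from the floor bound $k \geq \tfrac{2(1-\lambda)}{3\lambda} - 1$). This gives $\tfrac{1}{k(1-\alpha)} \leq \tfrac{3\lambda}{(2-5\lambda)(1-\alpha)}$, and the bound $\tfrac{3\lambda}{(2-5\lambda)(1-\alpha)} \leq \tfrac{27}{16}\lambda$ reduces to the elementary inequality $(2-5\lambda)(1-\alpha) \geq \tfrac{16}{9}$, which holds in the regime of interest (small $\lambda$ and $\alpha$ bounded away from $1$, as guaranteed by the parameter setting in \cref{lem:qbalsparse}). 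The main obstacle is the case analysis when one of the two extremal configurations is infeasible: the infeasible regime requires a short monotonicity argument to show that restricting $N_j$ to a sub-interval of its feasible range only brings $\tfrac{1}{\hat k}$ closer to the baseline value, so the bound is preserved. A secondary bookkeeping concern is the sign-sensitive manipulations that make the upper and lower halves of the two-sided bound come out with the same constant.
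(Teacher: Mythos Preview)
Your approach is essentially the same as the paper's. Both arguments exploit that the $|V_H(\ell)|$-weighted average of $\tfrac{1}{k(\ell)}$ over all of $H$ equals exactly $\tfrac{3\lambda}{2(1-\lambda)}$, then bound the in-$J$ average by observing that the out-of-$J$ contribution lies between $\tfrac{1}{k+1}$ and $\tfrac{1}{k}$ (your optimization over $N_k$ with $N_k+N_{k+1}$ fixed is precisely this splitting, rephrased). The final conversion via $k\ge\tfrac{2-5\lambda}{3\lambda}$ is also the same. The only real difference is bookkeeping: you correctly carry the factor $\tfrac{1}{1-\alpha}$ through and make explicit that absorbing it into the constant $\tfrac{27}{16}$ requires $\alpha$ bounded away from $1$, whereas the paper's displayed step $\tfrac{1}{\hat k}\le\tfrac{3\lambda}{2(1-\lambda)}(1+\tfrac{\alpha}{k+1})$ silently drops that factor. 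In the application (\cref{lem:qbalsparse}) one has $\alpha=\tfrac{9}{8}\theta$ with $\theta$ small in the relevant case, so both versions are fine there; your version is simply more honest about the hypothesis actually being used.
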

    \begin{proof}
	Recall that $V_H(\ell)$ is the number of path vertices on path $\ell$ in $H$. Consider the sum $\sum_{\ell \in E(H_3)} (1 + \frac{1}{k(\ell)}) |V_H(\ell)|$; we have that
	\[
	    \sum_{\ell \in E(H_3)} \left(1 + \frac{1}{k(\ell)}\right)|V_H(\ell)| = |E(H)| = \left(1 + \frac{\lambda}{2}\right)v,\quad \text{ and }\quad
	    \sum_{\ell \in E(H_3)} |V_H(\ell)| = (1 - \lambda)v.
	\]
	Therefore
	\[
	    \frac{\sum_{\ell \in E(H_3)} \frac{1}{k(\ell)}|V_H(\ell)|}{\sum_{\ell \in E(H_3)} |V_H(\ell)|} = \frac{3\lambda}{2(1 - \lambda)}.
	\]
	We can split the sum over $\ell \in E(H_3)$ into to the sum inside and outside of $J$.
	Because $\frac{1}{\hat k}$ is a convex combination of $\frac{1}{k+1}$ and $\frac{1}{k}$, the average of $\frac{1}{k(\ell)}$ outside of $J$ can be
	lower bounded by $\frac{1}{k + 1}$.
	Thus we have
	$$ \alpha \frac{1}{k + 1} + (1 - \alpha)\frac{1}{\hat{k}} \leq  \frac{\sum_{\ell \in E(H_3)} \frac{1}{k(\ell)}|V_J(\ell)| + \sum_{\ell \in E(H_3)} \frac{1}{k(\ell)}|V_{H\backslash J}(\ell)| }{\sum_{\ell \in E(H_3)} |V_H(\ell)|} = \frac{3\lambda}{2(1 - \lambda)},
	$$
        Which gives:
        $$ 
	\frac{1}{\hat{k}} \leq\frac{3\lambda}{2(1 - \lambda)}(1 + \frac{\alpha}{k + 1}).
	$$
	Now, since by construction $k = \lfloor \frac{2(1-\lambda)}{3\lambda} \rfloor$ and $\lambda < \frac{1}{9}$, using the upper bound $\frac{1}{k}\le \frac{27}{16}\lambda$ gives us the claim.
    \end{proof}

    Now in the extreme case when all of the vertices of $H_3$ are included in $J$, $J$ has at least $(1 - \theta - \lambda)v$ path vertices.
    Thus, the proportion of path vertices of $H$ included in $J$, is at least
    \[
	\frac{\# \text{ path vtcs } \in V(J)}{\# \text{ path vtcs } \in V(H)} \le \frac{1 - \lambda - \theta}{1-\lambda}  = 1 - \frac{\theta}{1-\lambda}	\ge 1 - \frac{9}{8}\theta.
	\]
	assuming $\lambda < \frac{1}{9}$.
	Thus we can apply \cref{claim:sub-leng}, with $\alpha := \frac{9}{8}\theta$, and our bound on the number of path vertices in $J$ \cref{eq:pbd}, and we have
    \begin{align*}
	|E(J)|
	= \sum_{\ell \in E(J_3)} |E_J(\ell)|
	 \le \sum_{\ell \in E(J_3)} \left(1 + \frac{1}{k(\ell)}\right) |V_J(\ell)|
	&= \left(1 + \frac{1}{\hat k}\right)\sum_{\ell \in E(J_3)} |V_J(\ell)|\\
	&\le \left(1 + \frac{3\lambda}{2(1-\lambda)}(1+\frac{243}{128}\theta\lambda)\right)\cdot\left(1 - \lambda - \theta + \frac{1}{2} \theta\lambda\right)v\\
	&\le \left(1 - \lambda - \theta + \frac{1}{2}\theta \lambda + \frac{3}{2}\lambda(1+2\lambda \theta) - \theta(1-\frac{1}{2}\lambda)\frac{3\lambda}{(1-\lambda)}\right)v\\
	&\le \left(1 + \frac{1}{2}\lambda - \theta + \frac{1}{2}\lambda \theta + 3\lambda^2 \theta - 3\theta\lambda (1-\frac{1}{2}\lambda)(1+\lambda)\right)v\\
	&\le \left(1 + \frac{1}{2}\lambda - \theta - \frac{5}{2}\lambda \theta + 3\lambda^2 \theta\right)v,
    \end{align*}
    Where to obtain the third line we have used that $\frac{(1+\frac{243}{128}\theta \lambda}{(1-\lambda)} < 2$ for $\lambda < \frac{1}{9}$, and to obtain the final line we have used that $(1+\lambda)(1-\frac{1}{2}\lambda) \ge 1$ for $\lambda \le 1$.
    Therefore the density is upper bounded by
    \begin{align*}
	\frac{E(J)}{V(J)} & \leq \frac{\left(1 + \frac{1}{2}\lambda - \theta - \frac{5}{2}\theta \lambda + 3\lambda^2\theta\right)}{(1 - \theta)v}
	= \frac{(1-\theta) + \frac{1}{2}\lambda(1-\theta) - 2\lambda \theta + 3\lambda^2\theta}{1-\theta}
	= 1 + \frac{1}{2}\lambda - \lambda\theta\frac{2- 3\lambda}{(1-\theta)},
    \end{align*}
    and this quantity is at most $1 + \frac{1}{2}\lambda - \lambda \theta$ when $\lambda \le \frac{1}{3}$.
    Combine the above two cases we get the lemma.
\end{proof}

\section*{Discussion and open problems}

We have shown the first (nearly) efficient algorithms for the graph matching problem on correlated \ER graphs in a wide range of parameters.
However, our results can still be improved in several ways.
First of all, for the actual recovery task, we obtain only \textit{quasipolynomial} as opposed to polynomial time algorithms.
Second, even for distinguishing, our algorithms still don't work in all regimes of parameters, and there are still some annoying ``gaps'' in our coverage.
Resolving both these problems would be extremely interesting.

One approach to give a more general and robust method can be to embed our algorithms in a convex program such as the \textit{sum-of-squares semidefinite program}.
One advantage of this program is that while its \textit{analysis} might need to use the existence of a test set, the actual algorithm will be independent of it, and would be identical in all ranges of parameters, and independent of the correlated \ER model.
This suggests that it could generalize to more settings, and perhaps also help bridge the gap between distinguishing and recovery.

Another question is whether our algorithms can inform in any way practical heuristics for the graph matching problem.
One of the most common heuristics is the propagation graph matching (PGM) algorithm that  gradually grows knowledge of the permutation from a ``seed set'' by looking at vertices that have at least $r$ neighbors into this set.
From our perspective, this can be thought of as an algorithm that looks for a particular subgraph $H$ which is the ``star graph'' with one internal vertex and $r$ leaves.
Our results suggest that it might be possible to get some mileage from looking at more complicated patterns and more than one pattern.

\section*{Acknowledgements}

Boaz Barak is grateful to Negar Kiyavash for telling him about the graph matching problem, and the results of~\cite{cullina2016improved,cullina2017exact} during a very enjoyable and useful visit to EPFL in the summer of 2017.
Tselil Schramm thanks Sam Hopkins for valuable and vigorous discussions which helped guide the exposition.
We all thank anonymous STOC 2019 referee for pointing out some issues in the prior version of this paper.

\bibliography{mybib}

\newcommand{\etalchar}[1]{$^{#1}$}
\providecommand{\bysame}{\leavevmode\hbox to3em{\hrulefill}\thinspace}
\providecommand{\MR}{\relax\ifhmode\unskip\space\fi MR }
\providecommand{\MRhref}[2]{%
  \href{http://www.ams.org/mathscinet-getitem?mr=#1}{#2}
}
\providecommand{\href}[2]{#2}
\begin{thebibliography}{OWWZ14}

\bibitem[BBM05]{berg2005shape}
Alexander~C Berg, Tamara~L Berg, and Jitendra Malik, \emph{Shape matching and
  object recognition using low distortion correspondences}, Computer Vision and
  Pattern Recognition, 2005. CVPR 2005. IEEE Computer Society Conference on,
  vol.~1, IEEE, 2005, pp.~26--33.

\bibitem[B{\c{C}}PP99]{Burkard1999}
Rainer~E. Burkard, Eranda {\c{C}}ela, Panos~M. Pardalos, and Leonidas~S.
  Pitsoulis, \emph{The quadratic assignment problem}, pp.~1713--1809, Springer
  US, Boston, MA, 1999.

\bibitem[Bol]{Bollobas-automorph}
B\'{e}la Bollob\'{a}s, \emph{The asymptotic number of unlabelled regular
  graphs}, Journal of the London Mathematical Society \textbf{s2-26}, no.~2,
  201--206.

\bibitem[Bol81]{bollobas1981random}
B{\'e}la Bollob{\'a}s, \emph{Random graphs}, Cambridge University Press, 1981,
  pp.~80--102.

\bibitem[Bol88]{BOLLOBAS1988241}
B\'{e}la Bollob\'{a}s, \emph{The isoperimetric number of random regular
  graphs}, European Journal of Combinatorics \textbf{9} (1988), no.~3, 241 --
  244.

\bibitem[CFSV04]{conte2004thirty}
Donatello Conte, Pasquale Foggia, Carlo Sansone, and Mario Vento, \emph{Thirty
  years of graph matching in pattern recognition}, International journal of
  pattern recognition and artificial intelligence \textbf{18} (2004), no.~03,
  265--298.

\bibitem[CK16]{cullina2016improved}
Daniel Cullina and Negar Kiyavash, \emph{Improved achievability and converse
  bounds for erdos-renyi graph matching}, ACM SIGMETRICS Performance Evaluation
  Review, vol.~44, ACM, 2016, pp.~63--72.

\bibitem[CK17]{cullina2017exact}
\bysame, \emph{Exact alignment recovery for correlated erdos renyi graphs},
  arXiv preprint arXiv:1711.06783 (2017).

\bibitem[CL12]{cho2012progressive}
Minsu Cho and Kyoung~Mu Lee, \emph{Progressive graph matching: Making a move of
  graphs via probabilistic voting}, Computer Vision and Pattern Recognition
  (CVPR), 2012 IEEE Conference on, IEEE, 2012, pp.~398--405.

\bibitem[CSS07]{cour2007balanced}
Timothee Cour, Praveen Srinivasan, and Jianbo Shi, \emph{Balanced graph
  matching}, Advances in Neural Information Processing Systems, 2007,
  pp.~313--320.

\bibitem[DMCW84]{McKayWormwald}
B~D.~McKay and N~C.~Wormald, \emph{Automorphisms of random graphs with
  specified degrees}, 325--338.

\bibitem[HKP{\etalchar{+}}17]{hopkins2017power}
Samuel~B Hopkins, Pravesh~K Kothari, Aaron Potechin, Prasad Raghavendra, Tselil
  Schramm, and David Steurer, \emph{The power of sum-of-squares for detecting
  hidden structures}, FOCS (2017).

\bibitem[JLG{\etalchar{+}}15]{ji2015your}
Shouling Ji, Weiqing Li, Neil~Zhenqiang Gong, Prateek Mittal, and Raheem~A
  Beyah, \emph{On your social network de-anonymizablity: Quantification and
  large scale evaluation with seed knowledge.}, NDSS, 2015.

\bibitem[JLR11]{Janson}
S.~Janson, T.~Luczak, and A.~Rucinski, \emph{Random graphs}, Wiley Series in
  Discrete Mathematics and Optimization, Wiley, 2011.

\bibitem[J{\L}T{\etalchar{+}}12]{janson2012bootstrap}
Svante Janson, Tomasz {\L}uczak, Tatyana Turova, Thomas Vallier, et~al.,
  \emph{Bootstrap percolation on the random graph $ g\_ $\{$n, p$\}$ $}, The
  Annals of Applied Probability \textbf{22} (2012), no.~5, 1989--2047.

\bibitem[KHG15]{kazemi2015growing}
Ehsan Kazemi, S~Hamed Hassani, and Matthias Grossglauser, \emph{Growing a graph
  matching from a handful of seeds}, Proceedings of the VLDB Endowment
  \textbf{8} (2015), no.~10, 1010--1021.

\bibitem[KL14]{korula2014efficient}
Nitish Korula and Silvio Lattanzi, \emph{An efficient reconciliation algorithm
  for social networks}, Proceedings of the VLDB Endowment \textbf{7} (2014),
  no.~5, 377--388.

\bibitem[KSV02]{kim2002asymmetry}
Jeong~Han Kim, Benny Sudakov, and Van~H Vu, \emph{On the asymmetry of random
  regular graphs and random graphs}, Random Structures \& Algorithms
  \textbf{21} (2002), no.~3-4, 216--224.

\bibitem[LAV{\etalchar{+}}14]{lyzinski2014other}
Vince Lyzinski, Sancar Adali, Joshua~T Vogelstein, Youngser Park, and Carey~E
  Priebe, \emph{Seeded graph matching via joint optimization of fidelity and
  commensurability}, arXiv preprint arXiv:1401.3813 (2014).

\bibitem[Law63]{lawler1963quadratic}
Eugene~L Lawler, \emph{The quadratic assignment problem}, Management science
  \textbf{9} (1963), no.~4, 586--599.

\bibitem[LFF{\etalchar{+}}16]{Lyzinskiconvexrisk}
V.~Lyzinski, D.~E. Fishkind, M.~Fiori, J.~T. Vogelstein, C.~E. Priebe, and
  G.~Sapiro, \emph{Graph matching: Relax at your own risk}, IEEE Transactions
  on Pattern Analysis and Machine Intelligence \textbf{38} (2016), no.~1,
  60--73.

\bibitem[LFP14]{lyzinski2014seeded}
Vince Lyzinski, Donniell~E Fishkind, and Carey~E Priebe, \emph{Seeded graph
  matching for correlated erd{\"o}s-r{\'e}nyi graphs.}, Journal of Machine
  Learning Research \textbf{15} (2014), no.~1, 3513--3540.

\bibitem[LR13]{livi2013graph}
Lorenzo Livi and Antonello Rizzi, \emph{The graph matching problem}, Pattern
  Analysis and Applications \textbf{16} (2013), no.~3, 253--283.

\bibitem[Luc89]{luczak1989sparse}
Tomasz Luczak, \emph{Sparse random graphs with a given degree sequence},
  Proceedings of the Symposium on Random Graphs, Poznan, 1989, pp.~165--182.

\bibitem[MW84]{McKay1984}
B.~D. McKay and N.~C. Wormald, \emph{Automorphisms of random graphs with
  specified vertices}, Combinatorica \textbf{4} (1984), no.~4, 325--338.

\bibitem[MW90]{MCKAY1990565}
Brendan~D. McKay and Nicholas~C. Wormald, \emph{Asymptotic enumeration by
  degree sequence of graphs of high degree}, European Journal of Combinatorics
  \textbf{11} (1990), no.~6, 565 -- 580.

\bibitem[MX18]{mossel2018seeded}
Elchanan Mossel and Jiaming Xu, \emph{Seeded graph matching via large
  neighborhood statistics}, arXiv preprint arXiv:1807.10262 (2018).

\bibitem[NS09]{narayanan2009anonymizing}
Arvind Narayanan and Vitaly Shmatikov, \emph{De-anonymizing social networks},
  Security and Privacy, 2009 30th IEEE Symposium on, IEEE, 2009, pp.~173--187.

\bibitem[OWWZ14]{o2014hardness}
Ryan O'Donnell, John Wright, Chenggang Wu, and Yuan Zhou, \emph{Hardness of
  robust graph isomorphism, lasserre gaps, and asymmetry of random graphs},
  Proceedings of the twenty-fifth annual ACM-SIAM symposium on Discrete
  algorithms, SIAM, 2014, pp.~1659--1677.

\bibitem[PG11]{pedarsani2011privacy}
Pedram Pedarsani and Matthias Grossglauser, \emph{On the privacy of anonymized
  networks}, Proceedings of the 17th ACM SIGKDD international conference on
  Knowledge discovery and data mining, ACM, 2011, pp.~1235--1243.

\bibitem[SXB08]{singh2008global}
Rohit Singh, Jinbo Xu, and Bonnie Berger, \emph{Global alignment of multiple
  protein interaction networks with application to functional orthology
  detection}, Proceedings of the National Academy of Sciences \textbf{105}
  (2008), no.~35, 12763--12768.

\bibitem[VCP{\etalchar{+}}11]{vogelstein2011large}
Joshua~T Vogelstein, John~M Conroy, Louis~J Podrazik, Steven~G Kratzer, Eric~T
  Harley, Donniell~E Fishkind, R~Jacob Vogelstein, and Carey~E Priebe,
  \emph{Large (brain) graph matching via fast approximate quadratic
  programming}, arXiv preprint arXiv:1112.5507 (2011).

\bibitem[Wor]{wormald1999models}
Nicholas~C Wormald, \emph{Models of random regular graphs}.

\bibitem[Wor80]{wormald1980some}
\bysame, \emph{Some problems in the enumeration of labelled graphs}, Bulletin
  of the Australian Mathematical Society \textbf{21} (1980), no.~1, 159--160.

\bibitem[YG13]{yartseva2013performance}
Lyudmila Yartseva and Matthias Grossglauser, \emph{On the performance of
  percolation graph matching}, Proceedings of the first ACM conference on
  Online social networks, ACM, 2013, pp.~119--130.

\end{thebibliography}
\bibliographystyle{amsalpha}

\end{document}